\newtheorem{theo}{Theorem}
\newtheorem{prop}{Proposition}
\newtheorem{ex}{Example}
\def\compactify{\itemsep=0pt \topsep=0pt \partopsep=0pt \parsep=0pt}
\let\latexusecounter=\usecounter
\begin{document}
\title{Applications of Common Information to Computing Functions}
\author{Derya Malak
\thanks{The author is with the ECSE Dept RPI, Troy, NY 12180 USA (malakd@rpi.edu).}
}

\maketitle

\begin{abstract}

We design a low complexity distributed compression scheme for computing arbitrary functions of sources with discrete alphabets. We use a helper-based method that extends the definition of the G{\'a}cs-K{\"o}rner-Witsenhausen (GKW) common information to functional common information. The helper relaxes the combinatorial structure of GKW by partitioning the joint source distribution into nests imposed by the function, which ensures hierarchical cooperation between the sources for effectively distributed computing. By contrasting our approach's performance with existing efficient techniques, we demonstrate the rate savings in recovering function and source data.

Permutation invariant functions are prevalent in learning and combinatorial optimization fields and most recently applied to graph neural networks. We consider the efficient compression for computing permutation invariant functions in a network with two sources and one decoder. We use a bipartite graph representation, where two disjoint sets of vertices (parts) denote the individual source graphs and the edge weights capture the joint source distribution. 
We compress bipartite graphs by creating connected components determined by the function's distribution, accounting for the edge symmetries, and eliminating the low probability edges. We separately encode those edges and send them as refinements. 
Our approach can substitute high complexity joint decoding techniques and inform neural networks to reduce the computing time and reduce complexity.
\end{abstract}


\section{Introduction}
\label{intro}
We consider a function-oriented distributed source compression model in networks. The goal is efficient encoding of sources to effectively recover a functional representation which is an abstraction of a task. 
Representative tasks include data gathering \cite{mosk2006computing}, private-key encryption \cite{8849315}, coding for storage systems and modeling updates \cite{8292960,el2015synchronizing,shah2011distributed} in cloud computing, 
and multi-agent computing \cite{zhang2019multi} in edge computing applications.

In data compression, the average number of bits per symbol needed to encode a data source is characterized by its entropy rate. In distributed compression, data across multiple sources can be jointly compressed via exploiting the joint source distribution. In task specific applications, the rate of joint compression can be further reduced while ensuring zero error compression asymptotically. Here, the task can be represented using a function abstraction. The fundamental limits of compression can be determined by how sensitive the function is to the subsets of sources in dispersed / distributed environments.

In this paper, we consider a helper-based model for distributed encoding for functional compression via generalizing the G{\'a}cs-K{\"o}rner-Witsenhausen (GKW) common information. The helper approach relies on relaxing the combinatorial structure of the GKW common information to effectively exploit the structure imposed by the function on the joint source distribution. We call the generalized common information the {\emph {functional common information}}. The helper, which acts as a common encoder, provides hierarchical cooperation between the sources to effectively ensure the distributed computation of a function of the sources. To demonstrate the power of the functional common information model, we consider the {\emph {class of permutation invariant functions}}. 
Permutation invariant problems are important and hard, and have extensive applications in multi-agent reinforcement learning, meta-learning, supervised and unsupervised learning \cite{zaheer2017deep}, such as in estimation of population statistics \cite{poczos2013distribution}, in combinatorial problems and group theory \cite{Gesselsymmetry2016}, 
cosmology \cite{ntampaka2016dynamical}, and detection of symmetry in computer-aided design \cite{benso2008graph}, \cite{tate2000symmetry}.

\subsection{Related Work}
\label{related}
Distributed source compression has been widely explored to determine the minimum encoding rates to achieve a desired distortion in joint decoding. The fundamental bounds include the distributed source coding scheme of Slepian-Wolf \cite{SlepWolf1973}, the rate-distortion coding model of Wyner-Ziv with 
side information  \cite{wyner1976rate}, and its generalization to functional rate distortion \cite{doshi2007source}.

Practical implementation of the Slepian-Wolf scheme requires some topological constraints in the network, and this requires joint source-network approaches \cite{salamatian2016efficient}. To circumvent that various notions of common information exist in the  
literature, such as G{\'a}cs-K{\"o}rner-Witsenhausen (GKW) \cite{gacs1973common},  \cite{witsenhausen1975sequences}, Wyner \cite{wyner1975common}, Kumar-Li-Gamal \cite{KLG14}, as well as generalizations including approximate
and exact information-correlation functions \cite{yu2016generalized}. 
To the best of our knowledge, GKW is the right type of common information that enables low complexity source decompositions.
In \cite{salamatian2016efficient}, authors exploited GKW common information to design efficient multisource network problems with correlated sources. 
A helper that encodes GKW common information can also effectively learn the source coupling, e.g., minimum entropy coupling via a polynomial time approximation algorithm 
\cite{cicalese2019minimum}, \cite{li2020efficient}, or maximal coupling.

The family of permutation invariant functions has a special structure which enables to design a deep network architecture that can operate on sets and be deployed on a variety of scenarios including both unsupervised and supervised learning tasks \cite{zaheer2017deep}. 
%
%
Current neural networks (NNs) cannot exploit the equivalent states in networks. Recently, finite group NN architectures to create symmetry equivariant learning algorithms (for games such as go and chess) were devised in \cite{carroll2020finite} for saving computing time. 
Graph neural networks (GNNs) have been powerful in graph analysis and machine learning tasks, such as classification and regression \cite{zhou2018graph}. GNNs capture the dependence of graphs via message passing between the nodes \cite{zhou2018graph}. Unlike standard NNs, GNNs retain a state that can represent information from its neighborhood with arbitrary depth. 
In \cite{li2020pair} authors have designed an auto-encoder for graph data to efficiently encode and reconstruct the feature distribution for respective pairs and their surrounding context. 

In addition to NN architectures, techniques have been proposed to understand the limits of compression in networks, such as distributed compression for sources \cite{SlepWolf1973}, \cite{wyner1976rate}, computation across networks \cite{FM14,OR01,Kor73,AO96}, and graph data \cite{delgosha2020universal}. We are particularly interested in compressing bipartite networks which are significant in various natural systems ranging from information and economic systems to social networks, opinion networks, and recommendation systems \cite{zhou2007bipartite}.  
In \cite{zhou2007bipartite} authors have devised a weighted network model 
to compress bipartite networks by projecting on the set of vertices, i.e., the source nodes. This technique can perform better than global ranking method and collaborative filtering for personal recommendation systems. 

\subsection{Motivation and Contributions}
In task oriented settings, due to bottleneck links, privacy concerns, or compression costs, functional compression might be preferred over distributed encoding of data \cite{SlepWolf1973,wyner1976rate}. Hence, we consider the problem of distributed lossless compression for computing. Given jointly distributed source variables $X_1$ and $X_2$ with discrete alphabets, and a joint decoder our goal is to provide an error-free representation of an arbitrary function $f(X_1,\, X_2)$ at the receiver by characterizing the fundamental compression limits of $(X_1,\, X_2)$ for computing $f$.

To the best of our knowledge, there are no techniques that provide tight fundamental limits for compressing functions of distributed sources. This problem is challenging because of its distributed nature, and the existing approaches rely on graph coloring-based schemes that are NP-complete \cite{AO96,OR01,doshi2007source}. To that end, in this paper, we provide a different perspective to functional compression, which allows a partial extraction of the common information between $X_1$ and $X_2$. It also exploits the permutation invariance.

Our main contributions can be summarized as follows.

i. We provide a natural generalization of GKW common information to functional common information to relax the combinatorial structure of GKW. 
To compute a specific function, we determine how to extract the source features as well as the common randomness tailored to the specific task abstracted by the function. 

ii. We ensure an efficient zero error computation via a helper which exploits the functional coupling of the source random variables. The helper acts as a common encoder by creating nests which is the key into obtaining rate savings. The collection of nests is a proxy for the functional common information and ensures hierarchical cooperation between the sources to facilitate distributed compression for computing. 

iii. We demonstrate that the nested distributed compression scheme can exploit the functional common information and provide significant rate savings in functional compression 
as well as in recovering 
the sources themselves. We illustrate the savings via several scenarios which we detail in Sect. \ref{FunctionalCommonEntropy}.

iv. 
We characterize the fundamental rate limits for distributed and zero error compression for computing permutation invariant functions with two data sources and one destination. In other words, the function outcome does not depend on the spatial locations or permutations of the data. Due to the distributed nature of the problem, where the sources cannot communicate, it is easier to compute the function via extracting the common randomness. To that end, we employ a helper, possibly with a limited rate, to partially exploit the joint source distribution, 
the symmetry incurred by the permutation invariance, as well as the low probability events. Extracting the common information eliminates the complexity of the joint decoding in the receiver.



{\bf Notation:} Capital letter $X$ denotes a discrete random variable, and small letter $x$ its realization. 
$P_{X_1}$ and $P_{X_1,\, X_2}$ denote the distribution of $X_1$ and joint distribution of $X_1$ and $X_2$, respectively. We denote the probability of an event $A$ by $P(A)$. We use $H(X)$ to represent the entropy of $X$, and denote the entropy function for distribution $P_{X}=(p_1,\,p_2,\,\hdots)$ by $h(P_X)=-\sum\limits_{i} p_i\log p_i$ where the logarithm is in base $2$. The notation $G_{X_1}$ denotes the characteristic graph of $X_1$ for computing $f(X_1,\, X_2)$ and $c_{G_{X_1}}$ is its valid (vertex) coloring. We use $H_{G_{X_1}}(X_1)$ to represent the graph entropy of $X_1$ for computing $f(X_1,\, X_2)$. Similarly, $H_{G_{X_1}}(X_1\vert X_2)$ represents the conditional graph entropy of $X_1$ for computing $f(X_1,\, X_2)$ given $X_2$. We provide the detailed definitions in Sect. \ref{background}.


\section{Technical Background}
\label{background}
We next provide a baseline for our analysis, which covers the related work on distributed compression for recovering source variables or functions of sources at the destination.

\subsection{Distributed Source Compression}
\label{SW}
Consider the joint decoding of two independently encoded source sequences with alphabets $\mathcal{X}_1$ and $\mathcal{X}_2$  modeled by random variables $X_1$ and $X_2$, which are jointly distributed according to $P_{X_1,\, X_2}$. The Slepian-Wolf theorem, by exploiting joint typicality, gives a theoretical lower bound for the lossless coding rate for distributed coding of the two statistically dependent i.i.d. finite alphabet source sequences as \cite{SlepWolf1973}:
\begin{align}
\label{rateregionSW}
R_1 \geq H(X_1\vert X_2),&\quad
R_2 \geq H(X_2\vert X_1),\nonumber\\
R_1+R_2 &\geq H(X_1,\, X_2).
\end{align}
Hence, it is necessary and sufficient to separately encode $X_1$ and $X_2$ at rates $(R_1, R_2)$ satisfying (\ref{rateregionSW}) to jointly recover $(X_1,\, X_2)$ at a receiver with zero error probability asymptotically.

\subsection{Distributed Functional Compression}
\label{GraphCompression}

The rate region of the distributed compression of the function $f(X)$ on the source data $X=\{X_1,\ldots,X_L\}$ depends on the function and the mappings from the source variables $X_l$, $l=1,\,2,\,\hdots, \,L$, to the destinations which challenge the codebook design due to the correlations among $X$ and $f$. To that end, the characteristic graph and its coloring play a critical role in the study of the fundamental limits of functional compression. 
A coloring of a graph $G=(V,E)$ is an assignment such that no two adjacent vertices have the same color. An $m-$coloring of a graph $G$ is a coloring with $m$ colors. 
Each vertex of the characteristic graph represents a possible different sample value, and two vertices are connected if they should be distinguished. More precisely, for the set of sources $X$ each taking values in the same alphabet $\mathcal{X}$, and function $f$, to construct the characteristic graph of $f$ on $X_1$, i.e., $G_{X_1}$, we draw an edge between vertices $u$ and $v \in \mathcal{X}$, if $f(u, x_2,\ldots,x_L) \neq f(v, x_2,\ldots, x_L)$ for any $\{x_2,\,\ldots,\,x_L\}$ whose joint instance has a non-zero measure. Surjectivity of $f$ determines the connectivity and coloring of the graph. The entropy rate of the coloring of $G_{X_1}$ characterizes the minimal representation needed from $X_1$ to reconstruct with fidelity the desired function $f$ \cite{Kor73}. The degenerate case of the identity function corresponds to having a complete $G_{X_1}$.

Given two jointly distributed source according to $P_{X_1,\, X_2}$, a graph $G_{X_1} = (V_{X_1} , E_{X_1} )$ of $X_1$ with respect to $X_2$, 
and function $f(X_1, X_2)$,
K{\"o}rner's graph entropy is expressed as \cite{Kor73,AO96}
\begin{align}
H_{G_{X_1}}(X_1)= \min\limits_{X_1\in W_1\in S(G_{X_1})} I(X_1; W_1), 
\end{align}
where $S(G_{X_1})$ is the set of all maximal independent sets of $G_{X_1}$, where a maximal independent set is not a subset of any other independent set, which is a set of vertices in which no two vertices are adjacent \cite{moon1965cliques}, and $X_1 \in W_1 \in S(G_{X_1})$ represents the distributions $p(w_1, x_1)$ such that $p(w_1,x_1) > 0$ implies $x_1 \in w_1$, where $w_1$ is a maximal independent set of $G_{x_1}$.

In a distributed setting, similar to the rate region in (\ref{rateregionSW}) that jointly recovers $(X_1,\, X_2)$, the rate region for effective zero error computation of $f(X_1,\, X_2)$ with arbitrary discrete inputs $X_1$ and $X_2$ has been derived in \cite{FM14,OR01}:
\begin{align}
\label{rateregiongraph}
R_1\geq H_{G_{X_1}}(X_1\vert X_2),&\quad
R_2\geq H_{G_{X_2}}(X_2\vert X_1), \nonumber\\
R_1+R_2\geq &H_{G_{X_1},G_{X_2}}(X_1, X_2),
\end{align}
where $G_{X_l}$ denotes the characteristic graph that source $l=1,\,2$ builds to distinguish the source outcomes that yield a different output for any value of $X_2$, and compute $f(X_1,\, X_2)$. Hence, the characteristic graphs built by each source are correlated. In (\ref{rateregiongraph}), $H_{G_{X_1}}(X_1)$ is K{\"o}rner's graph entropy, and $H_{G_{X_1},G_{X_2}}(X_1,\, X_2)$ is the joint graph entropy for functional compression \cite{FM14}. 

In functional compression, the Slepian-Wolf compression rate region in (\ref{rateregionSW}) provides an inner bound versus K{\"o}rner's graph entropy in (\ref{rateregiongraph}) that provides a convex outer bound and determines the limits of the functional compression. 
The region between the Slepian-Wolf compression bound and K{\"o}rner's graph entropy bounds indicates that there could be potentially a lot of benefit in exploiting the function's compressibility to reduce communication. 
%
For a comprehensive review of graph entropy and related definitions, see \cite[Ch. 21.1]{el2011network}.

\begin{figure}[t!]
    \centering
    \includegraphics[width=\columnwidth]{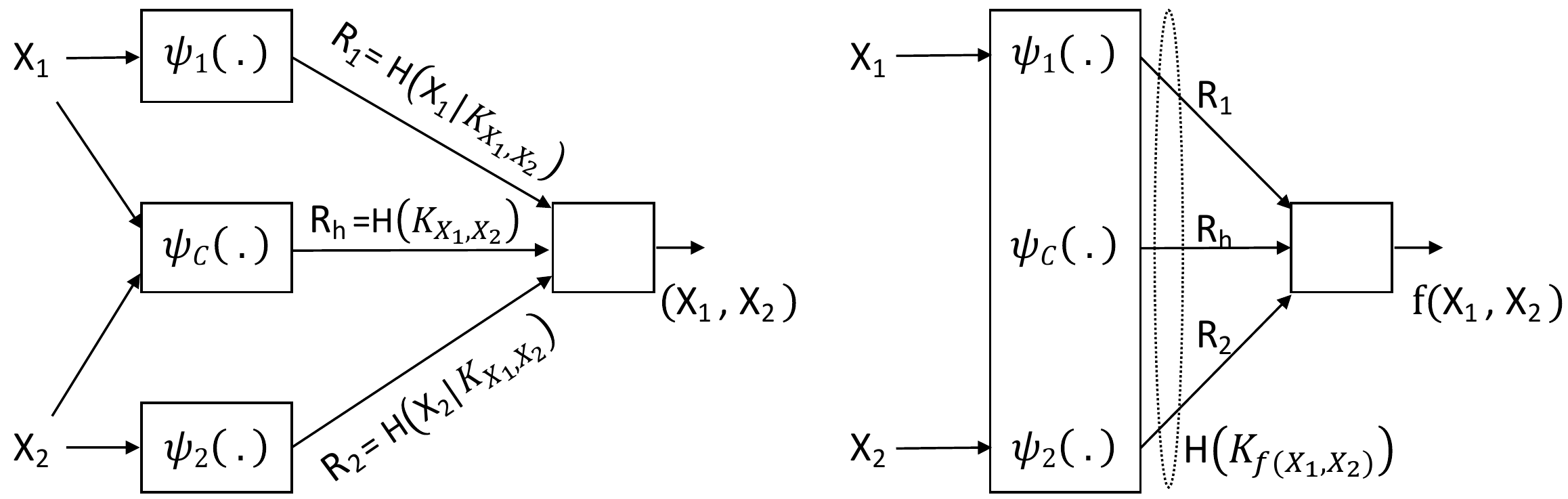}
    \caption{Source compression with helper. (Left) Distributed source compression with GKW common information with encoders private encoders $\psi_1$ and $\psi_2$ and a common encoder (helper) $\psi_C$. (Right) Distributed compression with  functional common information where the helper $\psi_C$ has access to partial information. In each setting, $R_h$ denotes the helper rate.} 
    \label{fig:FunctionalCompressionwithHelper}
\end{figure}

In general, finding minimum entropy colorings of characteristic graphs is NP-hard, and the optimal rate region of functional compression is an open problem \cite{el2011network}. However, in some instances, e.g., in \cite{DSME10} and \cite{FM14}, it is possible to compute these colorings efficiently. In \cite{FM14}, authors have shown that sending colorings of sufficiently large power graphs $G^n_{X_1}$ which satisfy coloring connectivity condition followed by Slepian-Wolf compression leads to an achievable coding scheme. 
The receiver performs a minimum entropy decoding \cite{csiszar2011information} on its received information and achieves $c_{G^n_{X_1}}$, and uses a look-up table to compute its function \cite{FM14}. The entropy of the characteristic graph $G_{X_1}$ is \begin{align}
H_{G_{X_1}}(X_1)=\lim_{n\to\infty} \min\frac{1}{n} H(c_{G_{X_1}^n}(X_1)), \nonumber
\end{align}
where $G_X^n$ is the n-th power of a graph $G_X$ and $c_{G_{X}^n}$ is a valid coloring of $G_X^n$. We next consider an example to show the savings of a coloring-based approach. Next, via Example \ref{uniform_example}, we numerically demonstrate that assigning colors to the power graphs of $G_{X_1}$ can compress the source more.

\begin{ex}
\label{uniform_example}
Consider the permutation invariant function $f(X_1,\, X_2)=|X_1|+|X_2|$ where the source variables $X_1$ and $X_2$ are uniformly distributed over $\mathcal{X}=\{-2,\,-1,\,0,\,1,\,2\}$. Hence, $G_{X_1}$ and $G_{X_2}$ are identical due to symmetry. The function takes different values if $X_1$ belongs to the sets $\{-2,2\},\{-1,1\},\{0\}$ which are equivalence classes for $X_1$. The values of $X_1$ in each class do not need to be distinguished because they yield the same function value independent of the value of $X_2$. However, different classes need to be distinguished. 
Hence, $G_{X_1}$ has edges across the equivalence classes and no edges within any class. The same logic holds for $X_2$. Therefore, we only need $3$ different colors 
instead of $5$ to distinguish the possible outcomes of the function. In this example, $P_{c_{G_{X_1}}}=\left(\frac{2}{5},\,\frac{2}{5},\,\frac{1}{5}\right)$ and the colorings satisfy $H(c_{G_{X_1}})=1.52$. The coloring is refined if we consider the power graphs of $X_1$, e.g., 8 colors will suffice to distinguish the distinct outcomes in the second power graph and $\frac{1}{2}H(c_{G^2_{X_1}})=1.48$ \cite{FM14}. In the limit $\frac{1}{n}H(c_{G^n_{X_1}})$ converges to $H_{G_{X}}(X)$. 
In case of no compression, $H(X_1)=\log 5=2.32$ and $H_{G_{X_1}}(X_1)=1.52$. Hence, the saving of functional compression over data compression is $0.8$ bits per source.
\end{ex}

\subsection{G{\'a}cs-K{\"o}rner-Witsenhausen Common Information}
\label{GacsKorner_CI}
The GKW common information plays a significant role in distributed source compression. When joint information extraction is not possible, it represents the maximum amount common information that can be separately extracted from individual sources $X_1$ and $X_2$. As shown in Fig. \ref{fig:FunctionalCompressionwithHelper} (Left), the GKW scheme consists of a helper $\psi_C$ that encodes the common information, as well as the separate encoders $\psi_1$ and $\psi_2$. The joint distribution $P_{X_1,\, X_2}$ has a bipartite graph representation $\mathcal{B}_{X_1,\, X_2}$. We illustrate an example $P_{X_1,\, X_2}$ along with its bipartite graph representation $\mathcal{B}_{X_1,\, X_2}$ in Fig. \ref{fig:GK_CI}. In this example, the empty entries in $P_{X_1,\, X_2}$ denote the coordinates where the joint probability mass equals zero and the realizations of $X_1$ and $X_2$ are indicated on the graph vertices.

The bipartite graph $\mathcal{B}_{X_1,\, X_2}$ has the following properties \cite{salamatian2016efficient}: (i) There is an edge between nodes $n_{x_1}$ and $n_{x_2}$ where $x_1\in \mathcal{X}_1,\, x_2 \in \mathcal{X}_2$, if and only if $P_{X_1}(X_1=x_1)>0$ and $P_{X_2\vert X_1}(X_2 =x_2\vert X_1 =x_1)>0$.
(ii) Each connected component $\mathcal{C}$ of $\mathcal{B}_{X_1,\, X_2}$ has a probability $$p(\mathcal{C}) = \sum\limits_{n_{x_1},n_{x_2}\in \mathcal{C}} P_{X_1,\, X_2}(X_1=x_1,\,X_2=x_2).$$

The common information decomposition of $P_{X_1,\, X_2}$ partitions $\mathcal{B}_{X_1,\, X_2}$ into a set $\mathcal{K}$ of a maximal number of connected components $\mathcal{C}_1,\dots,\mathcal{C}_{|\mathcal{K}|}$ where $|\mathcal{K}|$ is their cardinality. The GKW common information $K_{X_1,\, X_2}$ represents the index of the connected component in $\mathcal{B}_{X_1,\, X_2}$, and has distribution $P_{\mathcal{C}}=(p(\mathcal{C}_1),\dots, p(\mathcal{C}_{|\mathcal{K}|}))$. Hence, the entropy of $K_{X_1,\, X_2}$ equals
\begin{align}
\label{GK_CI}
H(K_{X_1,\, X_2} ) = H(\mathcal{C}) = \sum\limits_{k\in \mathcal{K}} p(\mathcal{C}_k)\log\left(\frac{1}{p(\mathcal{C}_k)}\right).     
\end{align}

Note that (\ref{GK_CI}) is the GKW common information between $X_1$ and $X_2$  \cite{gacs1973common}, where the common information variable $K_{X_1,\, X_2}$ is given as
\begin{align}
K_{X_1,\, X_2}=  \underset{H(Y\vert X_1)=H(Y\vert X_2 )=0}{\arg\max}H(Y). 
\end{align}

For the bipartite graph representation in Fig. \ref{fig:GK_CI}, the GKW common information $K_{X_1,\, X_2}$ distinguishes the two bipartitions with probabilities $\frac{1}{3}$ and $\frac{2}{3}$. Hence, $H(K_{X_1,\, X_2} )= h(\frac{1}{3})=0.918$.

As a result of data processing it holds that $H(X_1)\geq H(X_1\vert K_{X_1,\, X_2})\geq H(X_1\vert X_2)$, and similarly $H(X_2)\geq H(X_2\vert K_{X_1,\, X_2})\geq H(X_2\vert X_1)$. The implication of this result is that through a helper, it is possible to effectively compress sources in a distributed setting. We next apply GKW common information to distributed functional compression of data.

\section{GKW Common Information for Functional Compression}
\label{GKW_Functional}

We next provide a helper-based scheme to independently encode the distributed sources towards effectively computing the function $f(X_1,\, X_2)$ at the decoder by incorporating GKW common information $K_{X_1,\, X_2}$ between $X_1$ and $X_2$. 
\begin{prop}\label{Coding_GK_f(X_1_X_2)} {\bf Coding with helper for functional compression.} 
There exists a helper-based encoding scheme for efficient zero error compression of $(X_1,\, X_2)$ to asymptotically compute $f(X_1,\, X_2)$ that operates at rates:
\begin{align}
\label{GK_rate_region_function}
R_l&\geq H_{G_{X_l}}(X_l\vert K_{X_1,\, X_2}), \quad l=1,\,2,\nonumber\\
R_1+R_2&\geq H(K_{X_1,\, X_2})+\sum\limits_{l=1,\,2}H_{G_{X_l}}(X_l\vert K_{X_1,\, X_2}),
\end{align}
where $G_{X_i}$ denotes the characteristic graph that source $i$ determines for computing $f(X_1,\, X_2)$ given $K_{X_1,\, X_2}$.
\end{prop}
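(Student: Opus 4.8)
The plan is to prove achievability by constructing an explicit helper-based scheme that layers conditional functional (graph-coloring) compression on top of the GKW common information, and then verifying that both the per-source rates and the sum rate match the claimed bounds. The starting observation is that the GKW common information variable satisfies $H(K_{X_1, X_2} \vert X_1) = H(K_{X_1, X_2} \vert X_2) = 0$, so $K_{X_1, X_2} = g_1(X_1) = g_2(X_2)$ for deterministic maps $g_1, g_2$. Consequently the helper, as well as each encoder, can compute $K_{X_1, X_2}$ locally from its own observation. First I would let the helper entropy-code the block $K_{X_1, X_2}^n$ and convey its index to the decoder; by the source coding theorem this costs a rate arbitrarily close to $H(K_{X_1, X_2})$ with vanishing error, making $K_{X_1, X_2}$ available at the decoder.

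Next I would set up the conditional functional layer. For each $l$, source $l$ builds the characteristic graph $G_{X_l}$ for computing $f(X_1, X_2)$ given $K_{X_1, X_2}$, i.e.\ within each connected component only the realizations of the other source consistent with the revealed component index are used to place edges; this is precisely what sharpens the graph and produces the savings. Following the power-graph coloring approach of \cite{FM14}, source $l$ selects a valid coloring $c_{G_{X_l}^n}$ of the $n$-th power graph satisfying the coloring connectivity condition. Because $K_{X_1, X_2}$ is known both at encoder $l$ (it is a function of $X_l$) and at the decoder (from the helper), source $l$ can simply conditionally entropy-code its coloring given $K_{X_1, X_2}$ at a rate approaching $\frac1n H(c_{G_{X_l}^n}\mid K_{X_1, X_2})$; no Slepian--Wolf binning between the two sources is required. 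Invoking the conditional analogue of K{\"o}rner's characterization, $\lim_{n\to\infty}\min \frac1n H(c_{G_{X_l}^n}\mid K_{X_1, X_2}) = H_{G_{X_l}}(X_l \mid K_{X_1, X_2})$, gives the per-source bound $R_l \geq H_{G_{X_l}}(X_l \mid K_{X_1, X_2})$, and adding the helper's contribution $H(K_{X_1, X_2})$ yields the sum rate $R_1 + R_2 \geq H(K_{X_1, X_2}) + \sum_{l=1,2} H_{G_{X_l}}(X_l \mid K_{X_1, X_2})$.

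For the decoder I would argue asymptotically zero-error recovery in two stages: it first reconstructs $K_{X_1, X_2}$ and each coloring $c_{G_{X_l}^n}$ (all asymptotically lossless), then uses the coloring connectivity condition to guarantee that the pair of colorings together with $K_{X_1, X_2}$ uniquely indexes a look-up table entry equal to $f(X_1, X_2)$. The overall error probability is bounded by the sum of the vanishing decoding errors of the helper stream and the two coloring streams, so it tends to zero as $n\to\infty$.

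The main obstacle I anticipate is the technical claim that conditioning on $K_{X_1, X_2}$ preserves the coloring connectivity condition and that the conditional graph entropy is indeed the asymptotic per-symbol cost of the conditionally entropy-coded coloring. Because $K_{X_1, X_2}$ partitions the joint support into the disjoint connected components $\mathcal{C}_1,\dots,\mathcal{C}_{|\mathcal{K}|}$, I expect to handle this component-wise: within each component the \cite{FM14} coloring-connectivity argument applies to the restricted characteristic graph, and the conditional entropy then decomposes as a convex combination over the components weighted by $p(\mathcal{C}_k)$. Establishing that the component-wise optimal colorings assemble into a single valid coloring of $G_{X_l}^n$ without rate loss, and that the limit and the minimization interchange, is the delicate step that the formal proof must pin down.
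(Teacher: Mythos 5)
The paper never actually proves this proposition: it states the result and then says ``While we skip the proof due to space constraints,'' so there is no internal argument to compare yours against. Judged on its own merits, your construction is sound, and it is the proof the paper clearly gestures at: it generalizes the zero-error helper scheme of \cite{salamatian2016maximum} from the identity function to general $f$ by replacing conditional entropies with conditional graph entropies, and implements the marginal links via the power-graph coloring scheme of \cite{FM14}. Two remarks. First, the step you flag as delicate is less delicate than you fear. Because $K_{X_1,\,X_2}$ indexes the connected components of the bipartite support graph of $P_{X_1,\,X_2}$, two values of $X_l$ lying in different components never share a value of the other source with positive joint probability, so the characteristic graph $G_{X_l}$ has \emph{no edges across components}. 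Hence the conditional coloring problem decomposes exactly: within a length-$n$ block, the positions with $K_i=k$ form a sub-block that is i.i.d.\ according to $P_{X_l\vert K_{X_1,\,X_2}=k}$; one colors the restriction of the power graph to component $k$ on that sub-block; colors may be reused across components since the decoder knows $K_{X_1,\,X_2}^n$ from the helper; and the resulting rate is the weighted average $\sum_{k} p(\mathcal{C}_k)\,H_{G_{X_l}}(X_l\vert K_{X_1,\,X_2}=k)$, i.e., the conditional graph entropy, so the limit/minimization interchange you worry about reduces to the unconditioned statement applied per component. Second, your accounting only establishes the corner point in which the helper is a separately billed third encoder; to cover the whole region in (\ref{GK_rate_region_function}), note that $H(K_{X_1,\,X_2}\vert X_1)=H(K_{X_1,\,X_2}\vert X_2)=0$ means the $H(K_{X_1,\,X_2})$ bits can be generated at, and split arbitrarily between, the two source encoders, which is what makes the sum-rate form of the proposition (rather than three separate marginal constraints) the right statement.
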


The advantages of the helper scheme over Slepian-Wolf-based compression are twofold. The helper not only eliminates the exponential complexity of joint typicality decoding but also provides an asymptotically zero-error guarantee. On the other hand, the rate region in Prop. \ref{Coding_GK_f(X_1_X_2)} is encompassed by that of the distributed functional compression scheme in \cite{FM14}. When $K_{X_1,\, X_2}=0$, the result in (\ref{GK_rate_region_function}) boils down to independent compression of the characteristic graphs of the sources because even though sources may be dependent, they do not have a joint distribution which is the block-diagonal form. While we skip the proof due to space constraints, we numerically contrast the two schemes later in Sect. \ref{numerical} in Fig. \ref{fig:FunctionalCommonInformationRate}. We also note that Prop. \ref{Coding_GK_f(X_1_X_2)} may require small marginal rates when $H(K_{X_1,\, X_2})$ is high, and may provide a similar rate region as in \cite{FM14}. Achieving low marginal rates is possible because having a $P_{X_1,\, X_2}$ with a high number of connected components 
ensures independent trimming of the source codebooks across $\{\mathcal{C}_i\}_{i=1}^k$. 

The special case of Prop. \ref{Coding_GK_f(X_1_X_2)} is the rate region derived for efficient zero error encoding and decoding of the identity function $(X_1,\, X_2)$ in \cite{salamatian2016maximum}. In this case, it is clear that the graph entropy $H_{G_{X_i}}(X_i)$ equals the Shannon entropy $H(X_i)$ because each vertex forms an independent set in $G_{X_i}$, and needs to be distinguished to determine each unique pair $(X_1,\, X_2)$. 
%

\begin{figure*}[t!]
    \centering
    \includegraphics[width=0.7\textwidth]{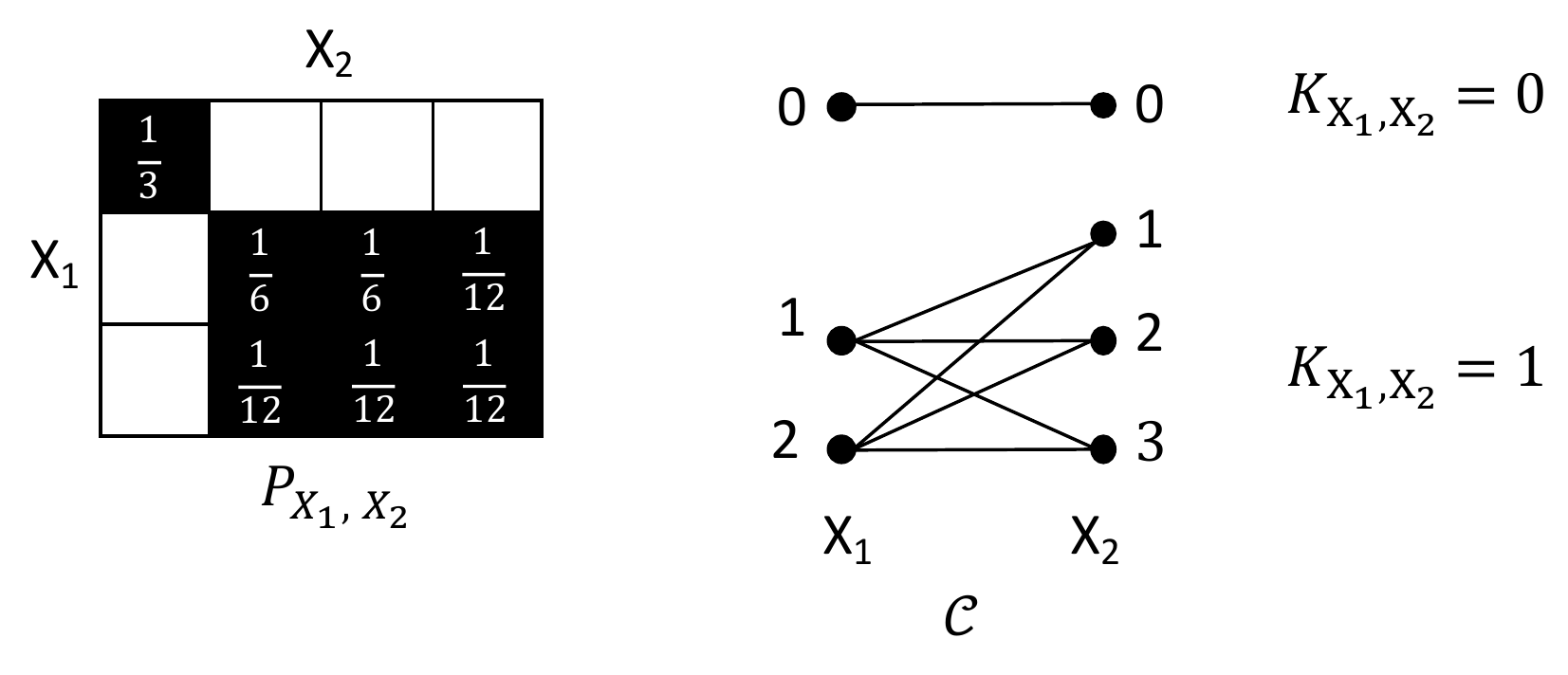}
    \caption{Joint probability matrix of $X_1$ and $X_2$, $P_{X_1,\, X_2}$, and GKW common information graph, $\mathcal{C}$ \cite{salamatian2016efficient}.}
    \label{fig:GK_CI}
\end{figure*}

\begin{figure*}[t!]
    \centering
    \includegraphics[width=\textwidth]{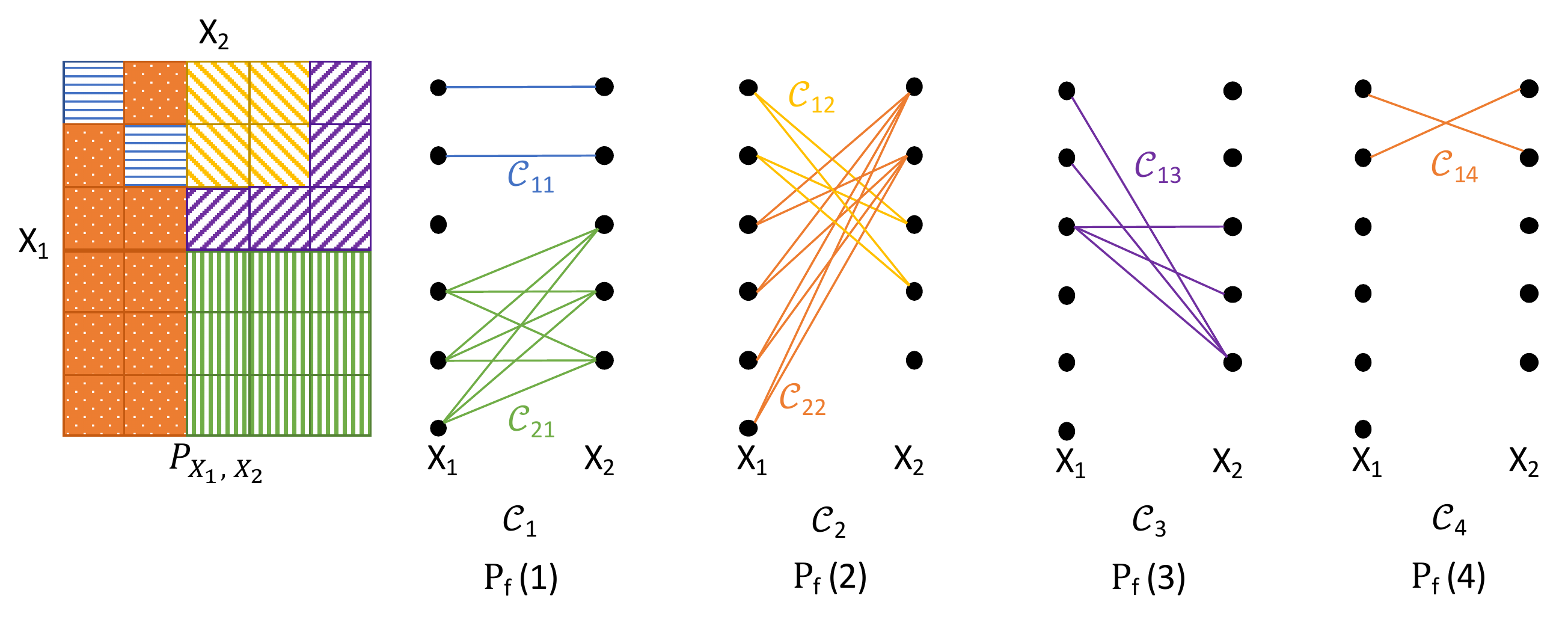}
    \caption{$P_{X_1,\, X_2}$ and functional common information graph where different colors represent distinct function outcomes (5 outcomes in total) and graph $i\in\mathcal{I}$ corresponds to a distinct outcome of  the nested random variable $V=i$. Here, given $V=i$, where $i\in\{1,\,2,\,3,\,4\}$, the resultant graph $\mathcal{C}_i$ has connected components $\mathcal{C}_{ki}$ for $k=1,\,2,\,\dots$. Given $V$ only one of the sources needs to transmit to specify the connected component, i.e., determine the function outcome.}
    \label{fig:Functional_CI}
\end{figure*}

Building on the concepts in this section, we next introduce a new measure which we call functional common entropy.

\section{Functional Common Information}
\label{FunctionalCommonEntropy}
Classical data oriented compression techniques are not tailored to the task specific or function-oriented aspects. In this section we introduce a common information-based compression model for computing functions. We extend the bipartite graph representation of GKW common information in \cite{salamatian2016efficient} to functional common information which is a generalized nested description for identifying a connected component $\mathcal{C}$ in $\mathcal{B}_{X_1,\, X_2}$. This measure is jointly determined by the source coupling $P_{X_1,\, X_2}$ as well as by $f(X_1,\, X_2)$. 

Let $V$ be a nested random variable over the set $\mathcal{I}$ such that each $V=i$, $i\in \mathcal{I}$ embeds different sets of connected components. Specifically, conditioning on the nest index $V=i$ we create a graph $\mathcal{C}_i$ such that the collection of $\{\mathcal{C}_i\}$ satisfies $\mathcal{C}=\bigcup_{{i\in\mathcal{I}}\,} \mathcal{C}_i$ which is a union of vertices and edges in $i\in\mathcal{I}$. 
Each $\mathcal{C}_i$ has a set of connected components $\{\mathcal{C}_{ki}\}_{k\in \mathcal{K}_i}$ where $|\mathcal{K}_i|$ is their count. The common information of $\mathcal{C}_i$ represents the index of the connected component $\{\mathcal{C}_{ki}\}_{k\in \mathcal{K}_i}$. 
Nest $i$ ensures a maximal number of connected components $|\mathcal{K}_i|$ such that the subgraph $\mathcal{C}_{ki}$, $k\in \mathcal{K}_i$ represents a distinct function outcome $f(X_1,\, X_2)$ that is jointly encoded conditional on $V=i$. 
The index of the connected component in $\mathcal{C}_i$ has distribution $(p(\mathcal{C}_{1i}),\hdots p(\mathcal{C}_{|\mathcal{K}_i|i}))$ and its entropy is $H(\mathcal{C}_i)$. 
We define the entropy of the functional common information as
\begin{align}
\label{GK_functional_CI}
H(K_{f(X_1,\, X_2)} ) &= H(V,\mathcal{C}_1,\hdots,\mathcal{C}_I)\\
&=H(V)+\sum\limits_{i\in\mathcal{I}} P_f(i) H(\mathcal{C}_i \vert V=i)\\
&=H(V)+ \sum\limits_{i\in\mathcal{I}}P_f(i)\sum\limits_{k\in\mathcal{K}_i} p(\mathcal{C}_{ki})\log\left(\frac{1}{p(\mathcal{C}_{ki})}\right),    
\end{align} 
where $K_{f(X_1,\, X_2)}$  is the functional common information between $X_1$ and $X_2$ to compute $f(X_1,\, X_2)$, and is given as
\begin{align}
\label{GK_functional_CI_argument}
K_{f(X_1,\, X_2)}=  \underset{U_i\to V\to U_j,\,\, i\neq j,\,\forall i,\,j\in\mathcal{I}}{\underset{H(U_i\vert V,\,f(X_1,\, X_2))=0}{\arg\max}} H(V,\,U_1,\,\hdots,\,U_I).
\end{align}

The nested distributed function encoding scheme requires a helper, i.e., common source encoder, for extracting $V$. The helper $\psi_C$ determines the nests $\mathcal{I}$ based on $P_{X_1,\, X_2}$ and $f(X_1,\, X_2)$. 
Once $\psi_C$ extracts the functional coupling and sends $V=i$ to the decoder via a separate link, this scheme resembles GKW. The distributed source encoders $\psi_1$ and $\psi_2$ can then build functional common information graphs $\{\mathcal{C}_{ki}\}_{k\in \mathcal{K}_i}$ and send them to the decoder at sufficient rates to determine the function outcomes. 
In other words, $\psi_C$ provides an a priori coupling between the sources to generate particular function outcomes. This ensures a high level hierarchical cooperation between $\psi_1$ and $\psi_2$ to distinguish $\{\mathcal{C}_{ki}\}_{k\in \mathcal{K}_i}$ 
in each nest $V$. 
In this scheme, sources can only partially cooperate through the helper. Hence, extraction of the functional common information does not guarantee sending of the information at a rate $H(f(X_1,\, X_2))$. 
We illustrate the functional common information scheme that encompasses GKW in Fig. \ref{fig:FunctionalCompressionwithHelper} (Right).

We illustrate the measure of functional common information via an example in Fig. \ref{fig:Functional_CI}. Here, $P_{X_1,\, X_2}$ is joint probability table where row $l=0,\dots,5$ corresponds to $X_1=l$, and column $l=0,\dots,4$ corresponds to $X_2=l$. In $P_{X_1,\, X_2}$ each distinct function outcome $f(X_1,\, X_2)$ is represented by a different pattern. The variable $K_{f(X_1,\, X_2)}$ is jointly determined by $V$ that takes values $i=1,\,2,\,3,\,4$ and divides $P_{X_1,\, X_2}$ into distinct graphs $\mathcal{C}_i$ such that each resultant graph $\mathcal{C}_i$ has connected components $\mathcal{C}_{ki}$ for $k=1,\,2,\hdots, |\mathcal{K}_i|$ where $|\mathcal{K}_i|$ is the cardinality of $\mathcal{K}_i$. The corresponding nested bipartite graph representation is indicated along with $P_{X_1,\, X_2}$. 

\begin{figure*}[t!]
    \centering
    \includegraphics[width=\textwidth]{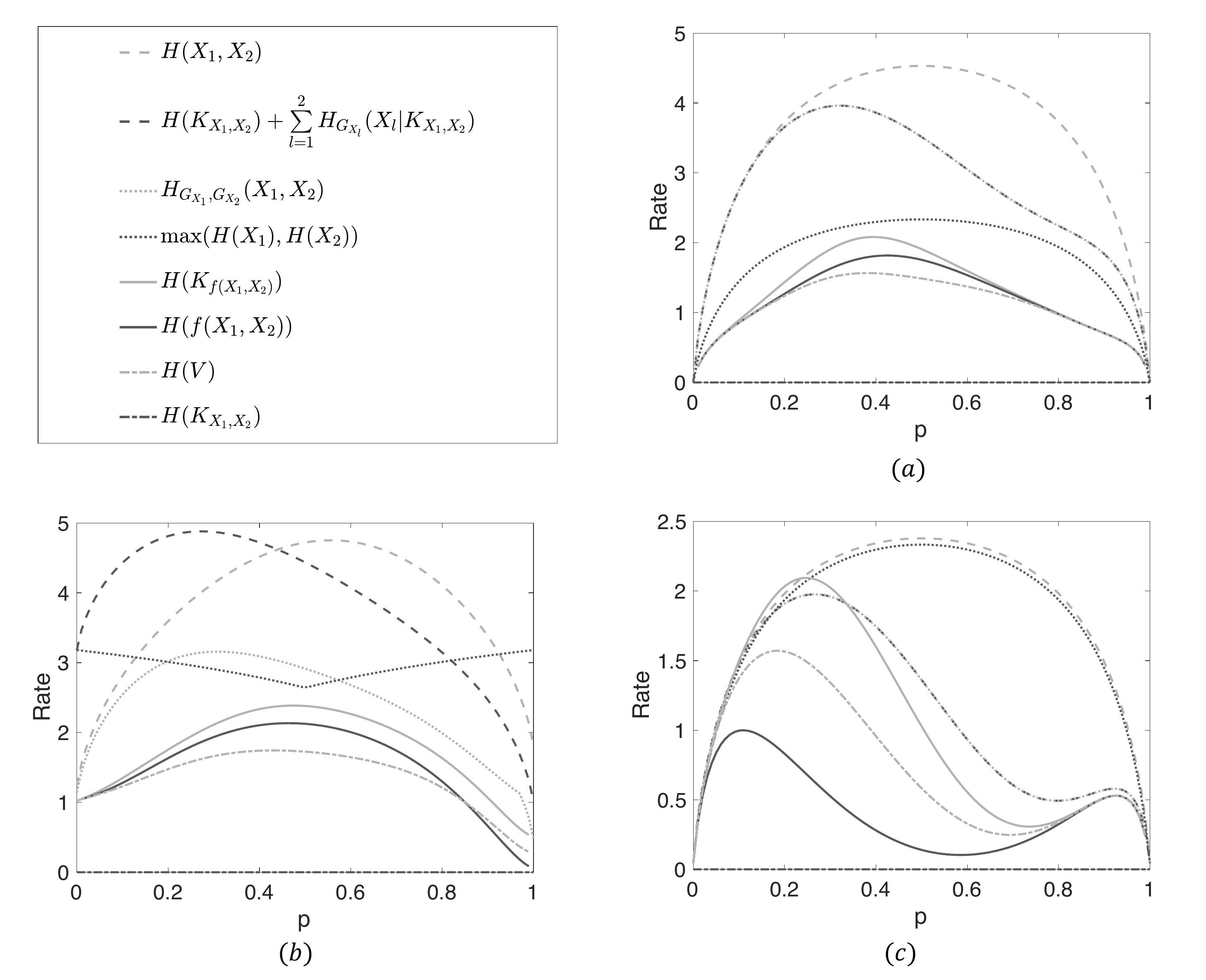}
    \caption{(a) Rates for functional compression in Example Fig. \ref{fig:Functional_CI} where $X_1$ and $X_2$ are independent binomially distributed random variables with success probabilities $p$ and $1-p$, respectively. (b) Rates for functional compression where $X_1$ and $X_2$ are jointly distributed 
    random variables. (c) Rates for functional compression where $X_1$ and $X_2$ are independent, binomially distributed, and with success probabilities $p$ and $0.001$, respectively. The common legend for three figures is shown in top-left.}
    \label{fig:FunctionalCommonInformationRate}
\end{figure*}

\subsection{Comparison of Functional Common Information with GKW Common Information}
The functional common information $K_{f(X_1,\, X_2)}$ of the mixture $\{\mathcal{C}_i\}_{i\in\mathcal{I}}$ ensures a higher entropy than that of $K_{X_1,\, X_2}$ as each $\mathcal{C}_i$ has a higher number of connected components than that of $\mathcal{C}$.  
Alternatively, we obtain $K_{X_1,\, X_2}$ through aggregation of $\{\mathcal{C}_i\}_{i\in\mathcal{I}}$ which reduces the entropy: 
\begin{align}
H(K_{X_1,\, X_2})&=H(\mathcal{C})=H\Big(\bigcup_{{i\in\mathcal{I}}\,} \mathcal{C}_i\Big) \nonumber\\
&\leq H(V)+H(\mathcal{C}_V \vert V)\nonumber\\
&=H(K_{f(X_1,\, X_2)}).\nonumber
\end{align}
Hence, if the helper can extract information about $f(X_1,\, X_2)$ via partially exploiting the joint structure in $P_{X_1,\, X_2}$, then the amount of common information extracted is larger.

A zero-error distributed code for decoding the pair $(X_1,\, X_2)$ using the functional common information is developed below.

\begin{prop}\label{zero_error_X1X2_functional_CI} {\bf Coding with helper for joint compression.}
There exist an efficient zero error encoding and decoding of $X_1$ and $X_2$ that operates at rates:
\begin{align}
\label{helper_based_encoding}
R_1&\geq H(X_1\vert K_{f(X_1,\, X_2)}), \nonumber\\ R_2&\geq H(X_2\vert K_{f(X_1,\, X_2)}),\nonumber\\
R_1+R_2&\geq H(K_{f(X_1,\, X_2)})+\sum\limits_{l=1,\,2}H(X_l\vert K_{f(X_1,\, X_2)}),
\end{align}
where $H(X_1\vert K_{f(X_1,\, X_2)})$ and $H(X_2\vert K_{f(X_1,\, X_2)})$ are the additional rates required from respective sources given $K_{f(X_1,\, X_2)}$.
\end{prop}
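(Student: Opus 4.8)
The plan is to establish achievability by exhibiting a helper-based scheme whose total rate meets the stated region, exactly paralleling the Slepian--Wolf argument but with the joint typicality decoder replaced by the common encoder $\psi_C$. The first step is to give an operational decomposition of the functional common information as a pair $K_{f(X_1,\, X_2)} = (V,\, W)$, where $V$ is the nest index and $W$ is the index of the connected component $\mathcal{C}_{ki}$ within the nest selected by $V$. I would verify two structural facts. First, both $V$ and $W$ are deterministic functions of the pair $(X_1,\, X_2)$, so that $K_{f(X_1,\, X_2)}$ is itself a function of $(X_1,\, X_2)$ and $H(K_{f(X_1,\, X_2)}) = H(V) + H(W \vert V)$ by the chain rule. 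Second, and crucially, given $V = i$ the component index $W$ is reconstructible from $X_1$ alone and from $X_2$ alone, i.e.\ $H(W \vert V,\, X_1) = H(W \vert V,\, X_2) = 0$; this holds because the connected components $\{\mathcal{C}_{ki}\}_{k\in\mathcal{K}_i}$ partition the vertex set of $\mathcal{C}_i$, so each source realization lies in a unique component. The contrast with the GKW setting, where the entire common variable is self-computable from each source, is that here $V$ is in general \emph{not} a function of a single source --- the functional coupling entangles the nests --- which is precisely why the helper is indispensable.

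The second step is the code construction. The helper $\psi_C$ extracts the nest assignment and transmits $V$ losslessly to the decoder at rate $H(V)$. Conditioned on $V = i$, the within-nest component index $W$ is common to both sources by the reconstructibility above, so a single source, say $\psi_1$, conveys it at rate $H(W \vert V)$, mirroring the GKW decomposition inside each nest. Finally each private encoder $\psi_l$ describes $X_l$ among the realizations consistent with $K_{f(X_1,\, X_2)} = (V,\, W)$ by point-to-point conditional source coding with $K_{f(X_1,\, X_2)}$ as side information available at both $\psi_l$ and the decoder, which is achievable at rate $H(X_l \vert K_{f(X_1,\, X_2)})$. The decoder, now in possession of $V$, $W$, and the two residual descriptions, reconstructs $(X_1,\, X_2)$.

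For the rate accounting I would collect the four contributions and use
\begin{align}
&H(V) + H(W \vert V) + H(X_1 \vert K_{f(X_1,\, X_2)}) + H(X_2 \vert K_{f(X_1,\, X_2)}) \nonumber \\
&\quad = H(K_{f(X_1,\, X_2)}) + \sum_{l=1,\,2} H(X_l \vert K_{f(X_1,\, X_2)}), \nonumber
\end{align}
which is the claimed sum-rate, while the per-source residual rates give the individual bounds $R_l \geq H(X_l \vert K_{f(X_1,\, X_2)})$. A consistency check against Slepian--Wolf is worthwhile: the sum-rate exceeds $H(X_1,\, X_2)$ by exactly $I(X_1;\, X_2 \vert K_{f(X_1,\, X_2)}) \geq 0$, the penalty paid for encoding the two residuals separately rather than jointly once the common information is known. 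The asymptotic zero-error guarantee follows because, with $K_{f(X_1,\, X_2)}$ supplied by the helper, the decoder solves two independent conditional source-coding subproblems rather than a single joint typicality decoding; each subproblem has vanishing error probability in the blocklength, and the errors do not couple across the sources.

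The main obstacle is the structural lemma underlying the first step: proving that once the helper resolves $V$, the residual bipartite graph $\mathcal{C}_i$ decomposes into connected components whose index each source can compute locally, and that the aggregation $\mathcal{C} = \bigcup_{i\in\mathcal{I}} \mathcal{C}_i$ reconstructs the original joint support of $P_{X_1,\, X_2}$ without ambiguity. The delicate cases are those in which a single source value participates in several nests, since then $V$ cannot be localized to one source and one must argue that conditioning on the helper's output fully disambiguates the component before the residual codes are applied. Everything downstream --- the chain-rule bookkeeping and the conditional source-coding achievability --- is routine once this alignment between the nests, the connected components, and the per-source reconstructibility is secured.
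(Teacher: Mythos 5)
Your overall architecture is correct and is essentially the scheme the paper itself sketches (the paper states this proposition without a formal proof; its Section on functional common information describes precisely the nested construction you use): decompose $K_{f(X_1,\, X_2)} = (V,\, W)$ with $H(K_{f(X_1,\, X_2)}) = H(V) + H(W\vert V)$, which matches the paper's definition $H(V)+\sum_{i\in\mathcal{I}} P_f(i) H(\mathcal{C}_i \vert V=i)$; observe that connected components partition the vertex set of each $\mathcal{C}_i$, so $H(W\vert V, X_1) = H(W \vert V, X_2) = 0$; then do conditional coding of the residuals. Your rate accounting is right, and your consistency check that the sum rate equals $H(X_1,\, X_2) + I(X_1;\, X_2 \vert K_{f(X_1,\, X_2)})$ is correct and is in fact corroborated by the paper's numerical observation that the helper-based sum rate exceeds $H(X_1,\, X_2)$ exactly when the sources are conditionally correlated.

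The one genuine loose end --- which you flag in your last paragraph but do not close --- is internal to your own construction: in your second step the helper transmits $V$ \emph{to the decoder only}, yet your third step requires $K_{f(X_1,\, X_2)} = (V,\, W)$ to be ``available at both $\psi_l$ and the decoder.'' Since $V$ is in general not computable from a single source (as you yourself emphasize), neither private encoder can form $W$ or condition its residual code on $K_{f(X_1,\, X_2)}$ unless the helper's output is also routed to the encoders. This is not fatal: the paper's model explicitly permits it (``sources can only partially cooperate through the helper''; the helper ``ensures a high level hierarchical cooperation between $\psi_1$ and $\psi_2$''), so the fix is simply to declare that the helper broadcasts $V$ to both encoders as well as to the decoder, at no extra cost in the rate accounting since $H(V)$ is charged once. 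The alternative --- keeping a helper-to-decoder-only link --- forces Slepian--Wolf binning against decoder-only side information, which meets the stated rates asymptotically but forfeits the zero-error, low-complexity character that is the point of the proposition. Once the broadcast assumption is stated, your ``delicate case'' of a source value participating in several nests dissolves: conditioning on the broadcast $V$ disambiguates the component, which is exactly what your structural lemma $H(W \vert V, X_l) = 0$ already establishes.
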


Prop. \ref{zero_error_X1X2_functional_CI} gives the limits for distributed extraction $(X_1,\, X_2)$ in the presence of $K_{f(X_1,\, X_2)}$. 
Because $K_{f(X_1,\, X_2)}=g(X_1,\, X_2)$ for some function $g$ and $H(X_1\vert K_{f(X_1,\, X_2)})$ is an aggregation of $X_1$, $H(X_1)\geq H(X_1\vert K_{f(X_1,\, X_2)})\geq H(X_1\vert K_{f(X_1,\, X_2)},X_2)\geq 0$. If $g$ is partially invertible the last inequality becomes an equality. 
In general, unlike the GKW common information $K_{X_1,\, X_2}$ it may not be true that $H(X_1\vert K_{f(X_1,\, X_2)})\geq H(X_1\vert X_2)$. For example, if $X_1$ and $X_2$ are independent and $g(X_1,\, X_2)$ is the binary xor function with $X_1\in\mathcal{X}_1=\{0,\,1,\,2,\,3\}$ and $X_2\in\mathcal{X}_2=\{0,\,1\}$, then it holds that $H(X_1\vert X_2)\geq H(X_1\vert K_{f(X_1,\, X_2)})$. This implies for some functions $f(X_1,\, X_2)$ that the helper provides significant savings in terms of the marginal rates at which each source operates, and ensures efficient decoding of $(X_1,\, X_2)$ via $K_{f(X_1,\, X_2)}$.

We next extend Prop. \ref{zero_error_X1X2_functional_CI} to provide a zero-error encoding and decoding of $f(X_1,\, X_2)$. We show that the nested scheme only requires either of the sources to send information.

\begin{prop}\label{functional_common_information_two_sources}
There exist an efficient zero error encoding and decoding of $f(X_1,\, X_2)$ that operates at rates:
\begin{align}
R_1&\geq H_{G_{X_1}}(X_1\vert V)=\sum\limits_{i\in\mathcal{I}} P_f(i) H_{G^i_{X_1}}(X_1\vert V=i),\nonumber\\
R_2&\geq H(V),
\end{align}
where $H_{G^i_{X_1}}(X_1\vert V=i)$ denotes the rate of encoding required from $X_1$ to compute $f(X_1,\, X_2)$ when the helper specifies the resultant graph $\mathcal{C}_i$ which has a probability $P_f(i)$ for $i\in\mathcal{I}$. 
\end{prop}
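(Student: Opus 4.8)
The plan is to reduce the two-source computation of $f(X_1,\,X_2)$, once the helper has declared the nest index $V$, to a single-source characteristic-graph compression problem that is already solved by the coloring-plus-Slepian--Wolf machinery of \cite{FM14}. I would carry this out in four steps.

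First I would establish the structural reduction that justifies having only one source carry the distinguishing information. By the construction of $K_{f(X_1,\,X_2)}$ in \eqref{GK_functional_CI_argument}, conditioning on $V=i$ produces the graph $\mathcal{C}_i$ whose connected components $\{\mathcal{C}_{ki}\}_{k\in\mathcal{K}_i}$ each represent a distinct outcome of $f(X_1,\,X_2)$. Since the connected components of any bipartite graph partition its vertex set, every $x_1$-vertex in the support of nest $i$ lies in a unique component, so the assignment $x_1\mapsto k$ is well defined and $f$ is constant on each $\mathcal{C}_{ki}$. Consequently, given $V=i$ the value $f(X_1,\,X_2)$ is a deterministic function of $X_1$ through its component index, and $X_2$ supplies nothing beyond the nest itself. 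I expect this to be the main obstacle: one must check from \eqref{GK_functional_CI_argument} that the nests are refined \emph{exactly} enough for the components to be outcome-pure without being finer than necessary, so that the characteristic-graph coloring introduced next coincides with the component partition.

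Second, I would account for the nest description. The helper $\psi_C$ extracts $V$ from $P_{X_1,\,X_2}$ and $f$ as in Sect.~\ref{FunctionalCommonEntropy} and conveys its index at rate $H(V)$, which the two-source accounting charges to $R_2$; the decoder and both encoders thereby obtain $V$. Third, fixing a nest $V=i$, I would invoke the single-source achievability of \cite{FM14}: source $1$ builds the conditional characteristic graph $G^i_{X_1}$ for computing $f$ on the support of $V=i$, colors a sufficiently large power graph satisfying the coloring connectivity condition, and Slepian--Wolf compresses the coloring with $V$ as decoder side information. A valid coloring assigns a common color to two $x_1$-values only when they never separate $f$ for the relevant $x_2$, i.e.\ precisely when they lie in the same component, so the color together with $i$ identifies the outcome. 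By \cite{Kor73,AO96,FM14} the achievable per-symbol rate in nest $i$ is $H_{G^i_{X_1}}(X_1\vert V=i)$, and averaging over the nests with the weights $P_f(i)$ yields
\begin{align}
R_1\geq \sum_{i\in\mathcal{I}} P_f(i)\,H_{G^i_{X_1}}(X_1\vert V=i)=H_{G_{X_1}}(X_1\vert V).\nonumber
\end{align}

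Finally, I would close the decoding and zero-error argument. The receiver runs minimum-entropy decoding \cite{csiszar2011information} to recover the coloring $c_{G^{i\,n}_{X_1}}$, then reads off $f(X_1,\,X_2)$ from a lookup table indexed by $\bigl(V,\,c_{G^i_{X_1}}\bigr)$, exactly as in Example~\ref{uniform_example}; the coloring connectivity condition makes the error probability vanish asymptotically. Since source $2$ transmits only the nest index while all outcome-distinguishing is carried by source $1$'s coloring, the scheme shows that beyond the helper only one source need send information, which is the assertion of the proposition.
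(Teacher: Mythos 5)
The paper states this proposition without a formal proof: all it offers is the discussion following the proposition (given the nest index, the connected components of $\mathcal{C}_i$ are outcome-pure, so $V$ plus one source's component index determines $f$) and the worked example rates in (\ref{Source_1_marginal_rates_functional_compression})--(\ref{Source_2_marginal_rates_functional_compression}). Your proposal reconstructs exactly that intended scheme — the helper conveys $V$ at rate $H(V)$ charged to $R_2$, source $1$ sends a coloring that identifies its component within the nest, the per-nest rates average to $H_{G_{X_1}}(X_1\vert V)$, and a lookup table on the pair (nest, color) yields asymptotically zero error — so in approach you are aligned with what the paper sketches.

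There is, however, one step that needs repair, and it is the step you flagged as the main obstacle. You justify ``same color $\Leftrightarrow$ same component'' by saying a valid coloring merges two $x_1$-values only when they ``never separate $f$ for the relevant $x_2$.'' Under the characteristic-graph definition of Sect.~\ref{GraphCompression} used in \cite{Kor73,AO96,FM14}, vertices $u$ and $v$ are adjacent only if there exists a \emph{common} $x_2$ of positive joint probability with $f(u,x_2)\neq f(v,x_2)$. But two $x_1$-vertices lying in \emph{different} connected components of $\mathcal{C}_i$ share no common $x_2$ neighbor at all (otherwise they would be in the same component), so under that definition they are non-adjacent; the conditional characteristic graph would then be edgeless, a single color would be ``valid,'' and the decoder — who has only $V$, not $X_2$, as side information — could not distinguish the outcomes. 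The graph $G^i_{X_1}$ relevant to this proposition must therefore be the stronger one in which $u\sim v$ whenever $u$ and $v$ can produce different function outcomes within nest $i$, i.e., whenever they lie in distinct (outcome-distinct) components; with that definition the minimal valid coloring is exactly the component index, matching the entropies the paper computes in (\ref{Source_1_marginal_rates_functional_compression})--(\ref{Source_2_marginal_rates_functional_compression}). Since outcome-purity of the components is definitional for the nests, your conclusion and rate expression stand once $G^i_{X_1}$ is defined this way; the error is only in importing the per-$x_2$ (decoder-has-$X_2$) notion of adjacency from \cite{FM14} into a setting where the decoder has no $X_2$ side information.
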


From Prop. \ref{functional_common_information_two_sources}, note it suffices that given nest index $V=i$ only one source needs to encode and transmit to identify the right connected component $\mathcal{C}_{ki}$, $k\in \mathcal{K}_i$. Each $\mathcal{C}_{ki}$ denotes a particular function outcome specified by a color in Fig. \ref{fig:Functional_CI}.

Functional compression is more effective versus encoding data itself, i.e., $H_{G^i_{X_1}}(X_1\vert V=i)\leq H(X_1\vert V=i)$ for $i\in\mathcal{I}$. From (\ref{GK_functional_CI_argument}) it is also true for $l=1,\,2$ that $H_{G_{X_l}}(X_l\vert V)\geq H_{G_{X_l}}(X_l\vert K_{f(X_1,\, X_2)})=H_{G_{X_l}}(X_l\vert V,\mathcal{C}_1,\hdots,\mathcal{C}_I)$. 
The rate savings in functional compression are more noticeable if trimming of the sources is viable via fewer partitions, i.e., $|\mathcal{I}|$ and $H(V)$ are small. This is because mixing of $\{\mathcal{C}_i\}_{i\in\mathcal{I}}$ increases $H(K_{f(X_1,\, X_2)})$, i.e., the rate of common encoding. 
The benefit of operating at rates in Prop. \ref{functional_common_information_two_sources} is eminent when both $H(V)$ and $H_{G_{X_1}}(X_1\vert V)$ are small. 
If $H(V)$ is small, $P_{X_1,\, X_2}$ can pave the way for computing the function, because independent trimming of the source set is feasible. For example, see the orange blocks with dots and the green blocks with vertical lines in Fig. \ref{fig:Functional_CI}. 
However, $H_{G_{X_1}}(X_1\vert V)$ can still be as high as $H_{G_{X_1}}(X_1)$. On the other hand, if $H(V)$ is large, there may be no benefit of encoding at the rates given in Prop. \ref{functional_common_information_two_sources}. For example, when $f(X_1,\, X_2)$ is the identity function such that each source pair needs to be distinguished, then $H(V)\geq H_{G_{X_1,\, X_2}}(X_1,\, X_2)$ and $H_{G_{X_1}}(X_1\vert V)=0$, where $H_{G_{X_1,\, X_2}}(X_1,\, X_2)$ is the joint encoding rate for distributed functional compression without a helper characterized in \cite{FM14}.

While it is nontrivial to quantify the fundamental limits in Prop. \ref{functional_common_information_two_sources}, we next run some numerical experiments to demonstrate the feasible rates for functional compression.

\subsection{Numerical Demonstration of Rate Savings}
\label{numerical}
To illustrate the gains of the proposed encoding scheme, we numerically evaluate the example table $P_{X_1,\, X_2}$ given in Fig. \ref{fig:Functional_CI}. Using Props. \ref{Coding_GK_f(X_1_X_2)} and \ref{functional_common_information_two_sources} we compare the rates needed for zero-error encoding and decoding of $f(X_1,\, X_2)$.

\begin{figure*}[t!]
    \centering
    \includegraphics[width=0.49\textwidth]{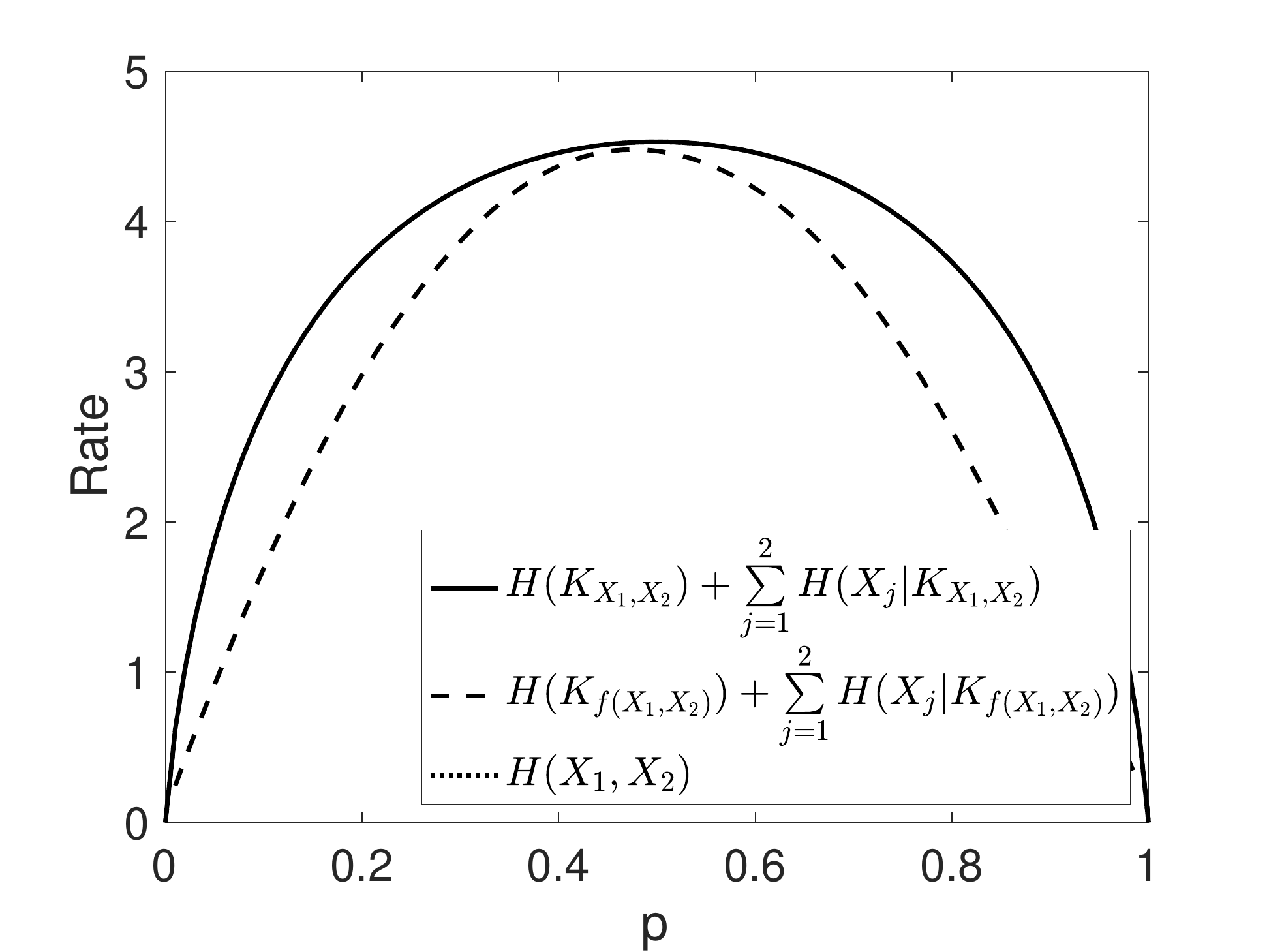}
    \includegraphics[width=0.49\textwidth]{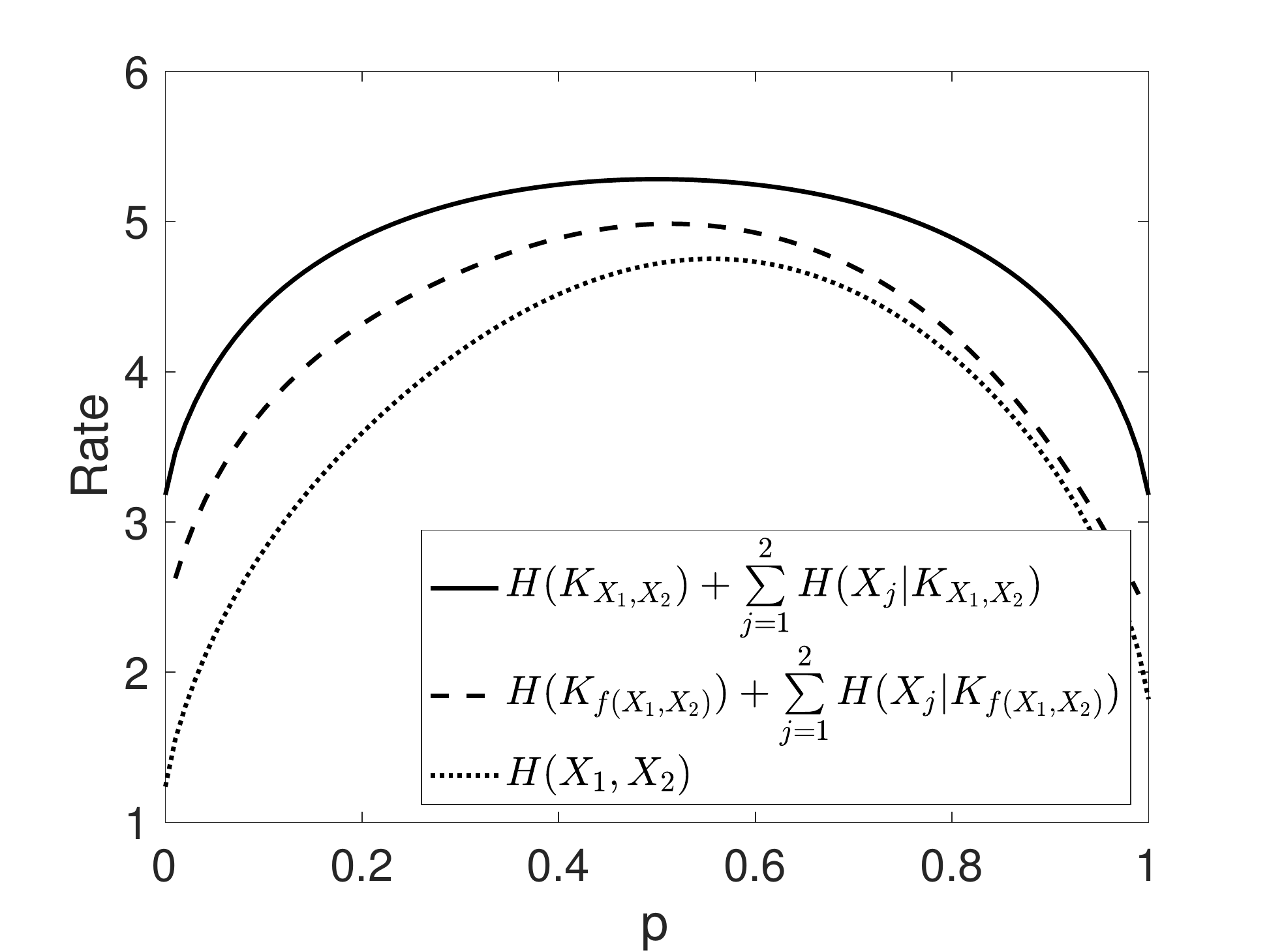}
    \includegraphics[width=0.49\textwidth]{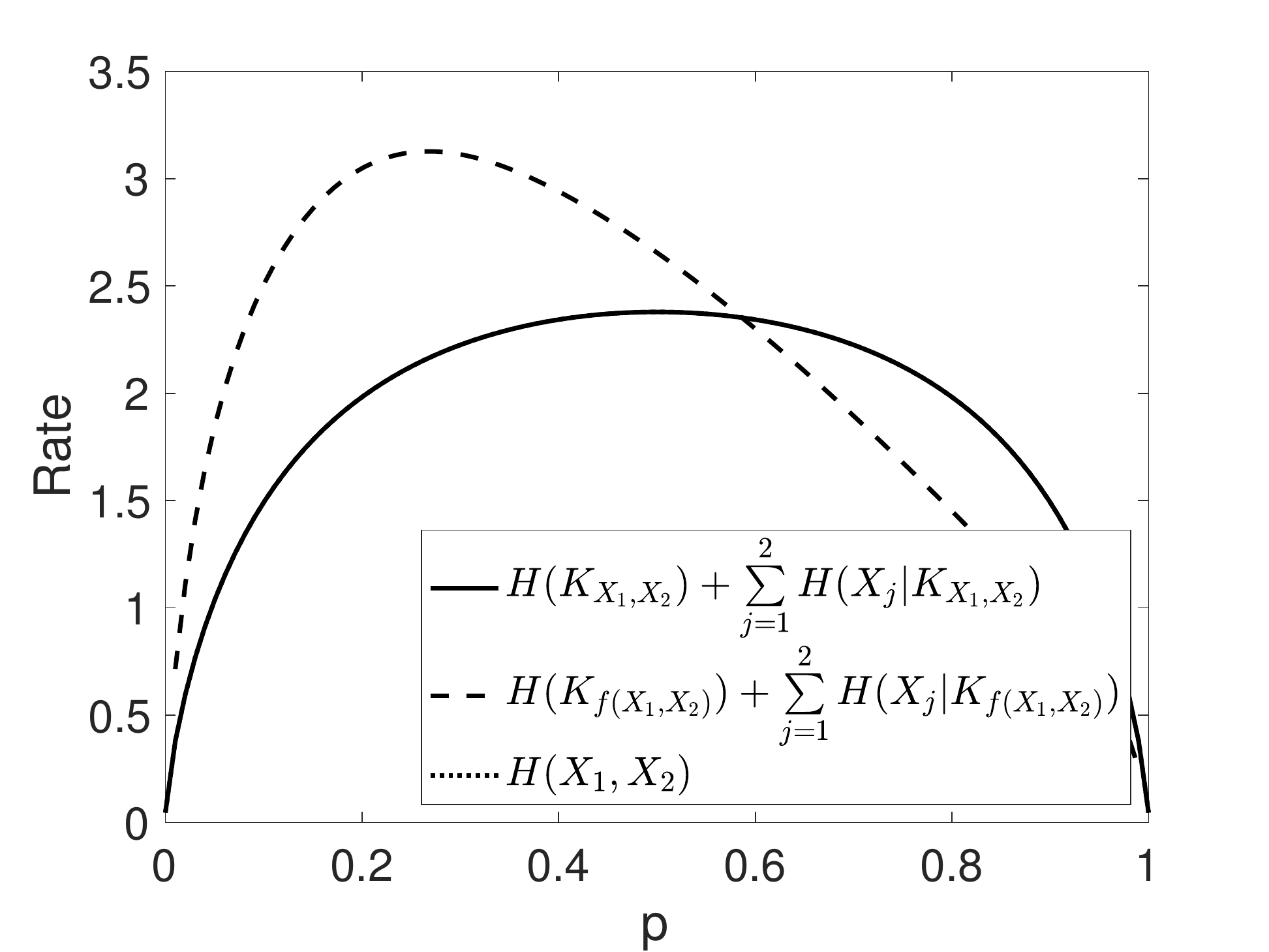}
    \caption{Rates for zero-error encoding/decoding of $X_1$ and $X_2$ in the presence of $K_{X_1,\, X_2}$ and $K_{f(X_1,\, X_2)}$ for scenarios (a) Independent sources (top left), (b) Correlated sources (top right), and (c) Independent sources with high failure rate (bottom), respectively.}
    \label{fig:Rate_X1X2}
\end{figure*}

\begin{ex}
We consider the functional common information graph as shown in Fig. \ref{fig:Functional_CI}. Given that only one source needs to transmit, if source 1 transmits, the rate required for zero-error encoding of function given $V=i$ is
\begin{align}
\label{Source_1_marginal_rates_functional_compression}
H_{G^i_{X_1}}(X_1\vert V=i)= 
\begin{cases}
\vspace{0.3cm}
h\left(\frac{\sum\limits_{n_{x_1}\in \mathcal{C}_{k1}} P_{X_1}(x_1)}{\sum\limits_{n_{x_1}\in \mathcal{C}_1} P_{X_1}(x_1)}\right),\quad\mbox{if}\quad i=1,\\
\vspace{0.3cm}
h\left(\sum\limits_{n_{x_1}\in \mathcal{C}_{k2}} P_{X_1}(x_1)\right),\quad\mbox{if}\quad i=2,\\
0,\quad\mbox{if}\quad i=3,\,4,
\end{cases}
\end{align}
where $P_{X_1}(x_1)$ denotes the probability that $X_1$ takes the value $x_1$. If on the other hand source $X_2$ transmits, then the rate required for zero-error encoding of function given $V$ is 
\begin{align}
\label{Source_2_marginal_rates_functional_compression}
H_{G^i_{X_2}}(X_2\vert V=i)= 
\begin{cases}
\vspace{0.3cm}
h\left(\sum\limits_{n_{x_2}\in \mathcal{C}_{k1}} P_{X_2}(x_2)\right),\quad \mbox{if}\quad i=1,\\
\vspace{0.3cm}
h\left(\frac{\sum\limits_{n_{x_2}\in \mathcal{C}_{k2}} P_{X_2}(x_2)}{\sum\limits_{n_{x_2}\in \mathcal{C}_{2}} P_{X_2}(x_2)}\right),\quad \mbox{if}\quad i=2,\\
0,\quad\mbox{if}\quad i=3,\,4,    
\end{cases}
\end{align}
where $P_{X_2}(x_2)$ is the probability that $X_2$ takes the value $x_2$. 

To numerically characterize the rate region for functional compression, for the example in Fig. \ref{fig:Functional_CI} we demonstrate three joint source distribution scenarios using different compression schemes, including the helper-based and the joint graph encoding schemes 
in Fig. \ref{fig:FunctionalCommonInformationRate} (as outlined in the legend shown in top-left):

{\bf (a) Independent sources.} $X_1$ and $X_2$ are independent binomially distributed variables $P_{X_1}\sim B(n_1,p)$ and $P_{X_2}\sim B(n_2,1-p)$ where $n_1=6$, $n_2=5$. In Fig. \ref{fig:FunctionalCommonInformationRate} (a) we illustrate the different rates required as a function of $p\in[0,\,1]$. Note that here $H(K_{X_1,\, X_2} )=0$, i.e., the common information graph has a single connected component, and the helper-based compression (of Prop. \ref{Coding_GK_f(X_1_X_2)}) and joint graph compression with rates $H_{G_{X_1},G_{X_2}}(X_1,\, X_2)=H_{G_{X_1}}(X_1)+H_{G_{X_2}}(X_2)$ where equality follows due to independence of $X_1$ and $X_2$ perform the same. Note also that $H(K_{f(X_1,\, X_2)} )$ is close to the fundamental lower bound, i.e., $H(V)\approx H(f(X_1,\, X_2))$, because $V$ can exploit the structure of $f(X_1,\, X_2)$, and partition $P_{X_1,\, X_2}$ into chunks via orthogonal trimming whenever possible, e.g., for extracting $\mathcal{C}_1$ and $\mathcal{C}_2$, and compresses the nontrimmable parts, e.g., $\mathcal{C}_3$ and $\mathcal{C}_4$, independently. This scheme provides significant savings in terms of functional compression over compression of data because given $V$ the marginal rate required from either source, i.e.,  (\ref{Source_1_marginal_rates_functional_compression}) or (\ref{Source_2_marginal_rates_functional_compression}) is negligible. 

{\bf (b) Correlated sources.} $(Y_1,Y_2)$ is a jointly distributed pair of Binomial random variables with distributions $P_{Y_1}\sim B(N,p)$ and $P_{Y_2}\sim B(N,1-p)$ where $N\sim \mbox{Pois}(\lambda)$ with $\lambda=5$. We determine $(X_1,\, X_2)$ by clumping together the entries of $P_{Y_1,Y_2}$ such that $P_{X_1}(6,x_2)=\sum_{y_1:y_1\geq 6} P_{Y_1,Y_2}(y_1,x_2)$ and $P_{X_2}(x_1,5)=\sum_{y_2:y_2\geq 5} P_{Y_1,Y_2}(x_1,y_2)$. In Fig. \ref{fig:FunctionalCommonInformationRate} (b) we plot the different rates required versus $p$. Similar to the scenario (a) we note that $H(K_{f(X_1,\, X_2)} )\approx H(f(X_1,\, X_2))$ as the helper can exploit the structure of $P_{X_1,\, X_2}$. 
However, $K_{X_1,\, X_2}=0$ and the scheme in Prop. \ref{Coding_GK_f(X_1_X_2)} does not provide savings over $H_{G_{X_1},G_{X_2}}(X_1,\, X_2)$ due to the correlation between $X_1$ and $X_2$. This is true especially at low $p$ such that $\mathcal{C}_2$ and $\mathcal{C}_3$ have higher probabilities and are hard to distinguish 
and $H(V)$ is high 
versus at high $p$ such that $\mathcal{C}_2$ has a higher probability and hence is easier to distinguish.  
\end{ex}

{\bf (c) Independent sources with high failure rate.} We next consider $X_1$ and $X_2$ that are independent binomially distributed random variables $P_{X_1}\sim B(n_1,p)$ and $P_{X_2}\sim B(n_2,0.001)$ where $n_1=6$, $n_2=5$. In Fig. \ref{fig:FunctionalCommonInformationRate} (c) we illustrate the rates required. Note that $K_{X_1,\, X_2}=0$ and hence Prop. \ref{Coding_GK_f(X_1_X_2)} yields the same result as $H_{G_{X_1},G_{X_2}}(X_1,\, X_2)=H_{G_{X_1}}(X_1)+H_{G_{X_2}}(X_2)$ for the independent sources. Because the failure probability for $X_2$ is very high, the first column of $X_2$ hence the graphs $\mathcal{C}_1$, $\mathcal{C}_2$, and $\mathcal{C}_4$ are more likely and $\mathcal{C}_3$ has negligible probability. 
For small $p$ our scheme requires very high rates because $X_1$ takes smaller values and $\mathcal{C}_i$'s have nonzero probabilities. As $p$ increases, $X_1$ can take larger values and the function outcome becomes deterministic which is captured by $\mathcal{C}_{22}$. 
As $p$ approaches $0.5$ the helper rate drops significantly because all outcomes of $X_1$ become equally likely and $\mathcal{C}_2$ has high probability, hence $H(V)$ gets small. 
As $p$ increases further towards $1$, $X_1$ is skewed towards high values where $\mathcal{C}_2$ has a high probability and $\mathcal{C}_1$ has very low probability.  Since the functional common information scheme needs to distinguish $\{\mathcal{C}_i\}$ given $P_{X_1,\, X_2}$, the entropy $H(V)$ may be high because it is generated based on trimming $P_{X_1,\, X_2}$ and given $P_{X_1,\, X_2}$, the index $V$ significantly depends on the marginal distribution of $X_1$ and has almost no sensitivity to $X_2$ because $P_{X_2}(X_2=0)=0.999$. Hence, it is possible that $H(V) \geq H(f(X_1,\, X_2))$. 
However, when $p$ is large enough such that $H(V)$ is small as $V=2$ with high probability, the marginal rate from either source is negligible because $P_{X_1,\, X_2}$ is skewed such that $H(\mathcal{C}\vert V)=0$, i.e., $H(K_{f(X_1,\, X_2)} )\approx H(V)$. 

We next quantify how we can exploit $K_{f(X_1,\, X_2)}$ for joint decoding of source data. We compare the rates needed for zero-error encoding and decoding of $(X_1,\, X_2)$ for scenarios (a)-(c) in Fig. \ref{fig:Rate_X1X2} using Prop. \ref{Coding_GK_f(X_1_X_2)} and Prop. \ref{zero_error_X1X2_functional_CI}. Because $X_1$ and $X_2$ are independent in (a) and (c), $K_{X_1,\, X_2}=0$ and the sum rate of the helper-based encoding is the same as the joint compression limit $H(X_1,\, X_2)$. In (b), while $K_{X_1,\, X_2}=0$, the sum rate of helper-based encoding is higher than $H(X_1,\, X_2)$ because $X_1,\,X_2$ are correlated. In (a) and (b), $K_{f(X_1,\, X_2)}$ captures the functional coupling and performs close to the fundamental bound $H(f(X_1,\, X_2))$. Hence, having the knowledge of $K_{f(X_1,\, X_2)}$ not only reduces the complexity of joint decoding of $(X_1,\, X_2)$, but also provides rate savings. 

In Sect. \ref{CompressionRates} we focus on the class of permutation invariant functions. In particular, we extract the common information for computing permutation invariant functions and efficient encoding of permutation invariance in distributed compression.

\section{Representation of Permutation Invariance} 
\label{CompressionRates}

Permutation invariant functions find applications in a broad range of domains. In supervised learning, the output label for a set is invariant or equivariant to the permutation of set elements \cite{zaheer2017deep}. Other applications include similarity search and metric learning \cite{zaheer2017deep}, personal recommendation systems \cite{zhou2007bipartite}, and GNNs \cite{zhou2018graph}, \cite{carroll2020finite}, as summarized in Sect. \ref{related}. 
To illustrate permutation invariance of the graph data $X$ for an invariant function $f(X)$, we provide an invariant graph diagram in Fig. \ref{fig:equivariant}, where rearrangement of the rows and columns of $X$ according to $g$ does not change the graph, i.e., $f(g\,\cdot X)=f(X)$ \cite{maron2018invariant}, \cite{maron2019universality}. 
The class of permutation invariant functions includes the summation, average, majority, and parity functions, among others \cite{Benso2003}. In the remainder of the paper, we restrict our attention to this class of functions. 
\begin{figure}[t!]
    \centering
    \includegraphics[width=0.45\columnwidth]{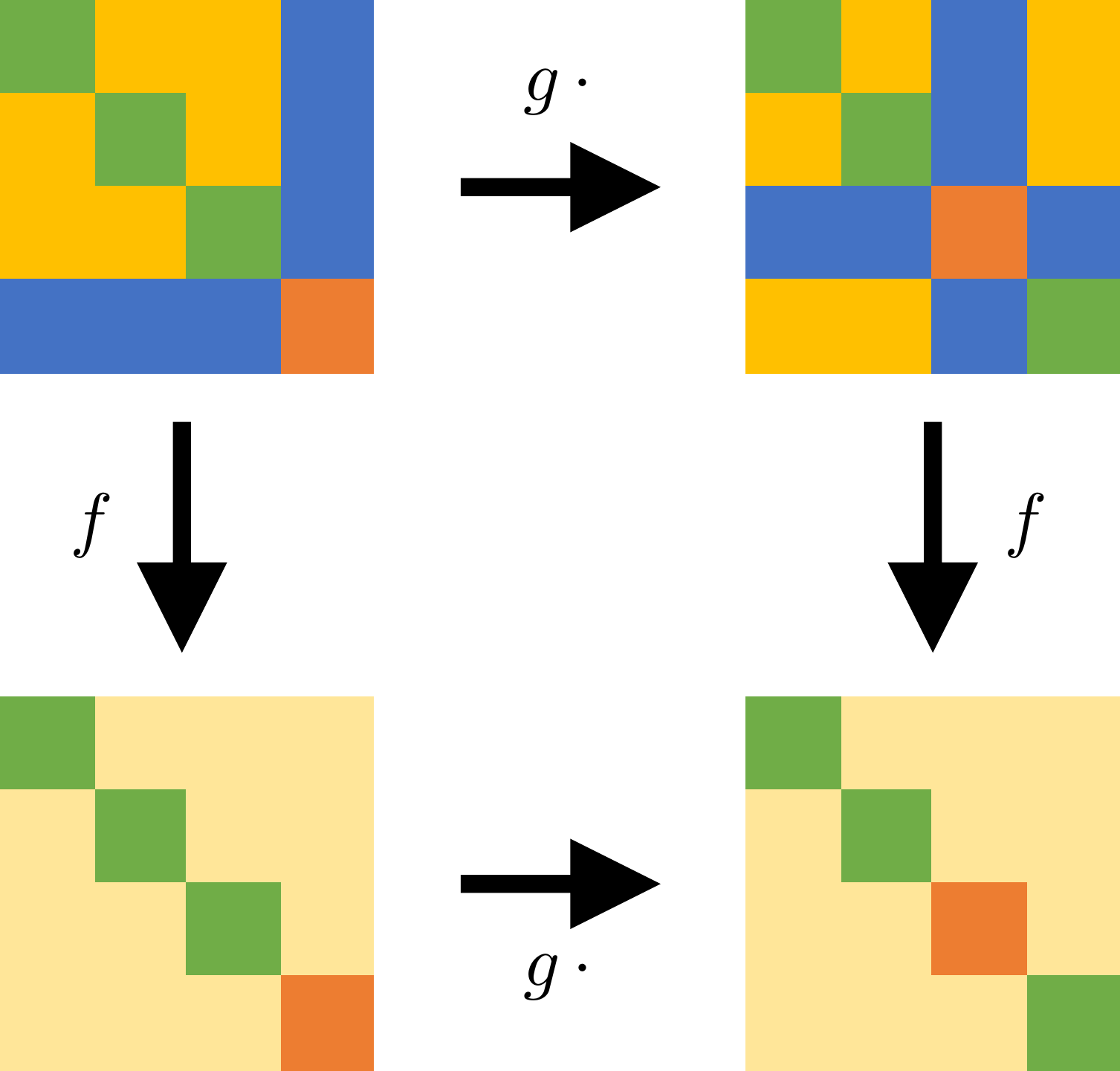}
    \caption{An equivariant commutative diagram where $g\,\cdot$ is the map that transforms graph data $X$ and returns $g\,\cdot X$.}
    \label{fig:equivariant}
\end{figure}

%

In \cite{zaheer2017deep}, authors characterized the structure of permutation invariant functions. More precisely, \cite{zaheer2017deep} gives a representation via inner and outer transformations.

\begin{theo}\label{PermInv}{\bf A representation result for permutation invariance \cite{zaheer2017deep}.} A function $f(S)$ operating on a set $S= \{x_1, \dots , x_M\}$ having elements from a countable universe, is a valid set function, i.e., invariant to the permutation of elements in $X$, if and only if it decomposes as
\begin{align}
\Phi\Big(\sum\limits_{x\in S} \psi(x)\Big)    
\end{align}
for suitable transformations $\psi: \mathbb{R}\to \mathbb{R}^{m}$ and $\Phi:\mathbb{R}^{m}\to \mathbb{R}$ where $m\leq 2M+1$, where $\psi(x)$ can be a vector-valued function. Hence, the input of the outer function $\Phi$ can be a vector.
\end{theo}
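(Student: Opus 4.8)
The plan is to prove the two implications separately. The ``if'' direction is immediate, so essentially all the work lies in the ``only if'' direction, which I would handle by constructing an injective sum-pooling map whose image labels the input set uniquely.

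For sufficiency, assume $f(S)=\Phi\big(\sum_{x\in S}\psi(x)\big)$ and let $g$ be any permutation of the elements of $S$. Since finite sums in $\mathbb{R}^{m}$ do not depend on the order of the summands, $\sum_{x\in g\cdot S}\psi(x)=\sum_{x\in S}\psi(x)$, and applying $\Phi$ gives $f(g\cdot S)=f(S)$. Hence any function of this form is permutation invariant, and the substance of the theorem is the converse.

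For necessity, the core object is the pooling map $E(S):=\sum_{x\in S}\psi(x)$, and the goal is to choose $\psi$ so that $E$ is injective on sets of cardinality at most $M$. I would realize this with the power-sum (moment) embedding $\psi(x)=(x,x^{2},\dots,x^{m})$, for which $E(S)=(p_{1},\dots,p_{m})$ with $p_{k}=\sum_{x\in S}x^{k}$. By Newton's identities the first $M$ power sums determine the elementary symmetric polynomials $e_{1},\dots,e_{M}$, hence the coefficients of the monic polynomial $\prod_{x\in S}(t-x)$, hence its root set, i.e.\ $S$ itself; thus $E$ is injective once $m$ is of order $M$. Because the universe is only assumed countable, I would first enumerate it by a bijection $c:\mathcal{X}\to\mathbb{N}$ and apply the moment map to $c(x)$, so that the elements are genuine reals before pooling. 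Having made $E$ injective, I can simply define $\Phi$ on the range of $E$ by $\Phi(E(S))=f(S)$ and extend it arbitrarily elsewhere; the permutation invariance of $f$ is exactly what makes this assignment well defined, since any reordering of $S$ yields the same value of $E$.

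The hard part will be meeting the dimension budget $m\le 2M+1$ while covering all cardinalities up to $M$ at once, rather than a single fixed size: the plain power-sum argument pins down the cardinality, so I would augment $\psi$ with an extra coordinate tracking $\sum_{x\in S}1=|S|$ (and, if needed, one more to disambiguate sizes), which is what pushes the required dimension from $M$ up to at most $2M+1$. A final point to verify is that no regularity of $\Phi$ is needed: $\Phi$ must only be consistent on the image of $E$, so the decomposition exists for every invariant $f$ regardless of continuity, which is the reason the representation is exact rather than approximate.
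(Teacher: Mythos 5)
Your proof is correct, but it takes a different route from the one behind the cited result (the paper itself gives no proof of Theorem~\ref{PermInv}; it imports it from Zaheer \emph{et al.}). For the countable-universe case, the original argument enumerates the universe by a bijection $c:\mathcal{X}\to\mathbb{N}$ and uses the scalar encoding $\psi(x)=4^{-c(x)}$, so that $\sum_{x\in S}\psi(x)$ acts as a base-$4$ expansion with $0/1$ digits: since elements of a set are distinct, no carries occur, the pooled sum is injective on sets, and $\Phi=f\circ E^{-1}$ finishes the proof with latent dimension $m=1$. Your moment embedding $\psi(x)=\bigl(1,c(x),c(x)^2,\dots,c(x)^M\bigr)$ plus Newton's identities is instead the technique Zaheer \emph{et al.} use for the \emph{continuous, fixed-cardinality} variant of the theorem, here adapted to the countable case via the enumeration $c$; it is valid, since the leading coordinate fixes $|S|=k$ and the first $k$ power sums then determine the elementary symmetric polynomials, hence the monic polynomial $\prod_{x\in S}(t-c(x))$, hence $S$. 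The trade-off is instructive: the digit trick buys minimal dimension but produces a wildly discontinuous $\psi$ and leans entirely on countability, whereas your construction costs $m=M+1$ coordinates (comfortably within the stated budget $m\le 2M+1$) but is robust, carrying over essentially verbatim to real-valued elements of fixed set size. One small correction: the extra cardinality coordinate raises the dimension only to $M+1$, not to $2M+1$; the $2M+1$ in the statement does not arise from disambiguating set sizes but from the continuous-case discussion (Kolmogorov--Arnold-type superposition) in the cited work, so your closing justification of that constant misattributes its origin, though nothing in your argument depends on it.
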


In this paper, we consider two input variables $x_1$ and $x_2$ that are sampled from the same discrete alphabet. Hence $S=\{x_1,x_2\}$. We let $X_1=\psi(x_1)$ and $X_2=\psi(x_2)$ be suitable inner transformations implemented at each source independently. We call the transformed variables $X_1$ and $X_2$ the source variables. The generalization to multiple sources is standard.

From Theorem \ref{PermInv}, by letting $X_1=\psi(x_1)$ and $X_2=\psi(x_2)$ a permutation invariant function of two variables can be represented as 
\begin{align}
\label{SymmetricFunction}
f(X_1,\, X_2)=\Phi(X_1+X_2),
\end{align}
where the range $\mathcal{X}$ of $\psi$ is a discrete space, e.g., in classification. A continuous space can be quantized as well, e.g., in case of regression, for a generalized rate-distortion scenario.


From data processing, the rate of computing (\ref{SymmetricFunction}) is upper bounded by the sum rate of $X_1+X_2$. In other words, if $X_1+X_2$ can be computed, then any $f(X_1,\, X_2)$ that satisfies (\ref{SymmetricFunction}) can be computed. Therefore, the rate region to compute the sum is encompassed by the rate region for computing $\Phi(X_1+X_2)$.

To motivate our approach we start with a canonical example.  
Let each source be uniform over $\mathcal{X}=\{0,\,1,\,\dots,\,n-1\}$. Exploiting the notion of graph entropy to compute $X_1+X_2$, since each source outcome needs to be distinguished, the sum rate needed to compute $X_1+X_2$ is given as 
\begin{align}
\label{R_S}
R_S=2\log(n),
\end{align}
which is an upper bound to the total rate needed to compute (\ref{SymmetricFunction}) because (\ref{R_S}) models the rate without exploiting the common information between the source characteristic graphs. In Sect. \ref{Bipartitions_with_helpers} we generalize this setting to nonuniform source distributions.

Permutation invariant function computing can be posed as a bipartite graph compression problem. To that end, we next construct a bipartite graph $G_f=(U,V,E)$ from the characteristic graphs $G_{X_1}$ and $G_{X_2}$ the individual sources build independently to compute the symmetric function $f$, whose partition has the parts $U$ and $V$ which correspond to the collection of equivalence classes in $G_{X_1}$ and $G_{X_2}$, respectively, and $E$ denotes the edges of the graph which capture the correlation between these 2 characteristic graphs. More specifically, the bipartite graph $G_f$ has the following properties:
\begin{enumerate}[i.]
    \item The set of nodes $U$ and $V$ that partition $G_f$ are disjoint.
    \item The set of nodes $U$ and $V$ correspond to the set of equivalence classes in $G_{X_1}$ and $G_{X_2}$, respectively. The equivalence class of $G_{X_1}$ corresponding to a function outcome $a\in  \operatorname {Range} (f)$, where $\operatorname {Range} (f)$ denotes the range\footnote{The range of a function is the set of the images of all elements in the domain \cite{trench2013introduction}.} of $f$. 
    is $u=\{x_1\in \mathcal{X}\vert f(x_1,\cdot)=a\}\in U$. Similarly, the equivalence class of $G_{X_2}$ given a function outcome $a\in  \operatorname {Range} (f)$ is determined as $v=\{x_2\in \mathcal{X}\vert f(\cdot,\, x_2)=a\}\in V$. 
    \item $G_f$ is a balanced bipartite graph with $|U|=|V|$, i.e., the two subsets of nodes have equal cardinality because $f$ is permutation invariant.
    \item There is an edge between node $u\in U$ and node $v\in V$, i.e., $(u,\,v)\in E$, if and only if $P_{X_1,\, X_2}(X_1\in u ,\, X_2\in v) >0$. Note however that $(u,\,v)\in E$ does not imply $(v,\, u)\in E$ for $v\in U$ and $u\in V$.
    \item If $u \in U$ and $v\in V$ are connected, i.e., $(u,\, v)\in E$, and similarly  $(v,\, u)\in E$, then both edges yield the same function outcome. However, this is not necessary as the source alphabets and distributions of the characteristic graphs $G_{X_i}$ might differ, and the edges do not imply each other.
    \item If the maximal independent set of $G_{X_1}$ and $G_{X_2}$ is a singleton, then $u$ and $v$ representing these sets correspond to $n_{x_1}$ and $n_{x_2}$ as defined in Sect. \ref{GraphCompression}, respectively.
\end{enumerate}

In distributed compression, source $X_1$ only has the knowledge of $U$, and similarly source $X_2$ only has the knowledge of $V$. However, neither has the knowledge of the set of edges $E$ in $G_f$. Note that $E$ is determined by the function $f(X_1,\, X_2)$ as well as $P_{X_1,\, X_2}$. 

If $G_f$ is complete, it has $|U|^2$ edges and up to $n+\frac{n(n-1)}{2}$ distinct function outcomes to be distinguished. On the other hand  if $G_f$ is not connected, it may have more than one bipartition (connected component) \cite{chartrand2019chromatic}. In that case, encoding the function is easy once the right bipartition is extracted. 
To that end, we employ a helper that exploits the structure of $E$ to decompose $G_f$ into its bipartitions. Encoding the bipartitions can provide rate savings in compression. 
The helper notion is linked to the G{\'a}cs-K{\"o}rner common information $K_{X_1,\, X_2}$ between $X_1$ and $X_2$ that captures the combinatorial structure of $P_{X_1,\, X_2}$, 
and $K_{X_1,\, X_2}$ only needs to be transmitted once because it can be separately extracted from either source \cite{gacs1973common}. 

We next quantify how much common randomness the helper can extract for a permutation invariant function, and demonstrate that it is not less than $K_{X_1,\, X_2}$. In $G_f$, there is an edge between $u$ and $v$ provided that condition (iv) is satisfied. 
Furthermore, some edges in $G_f$ might have identical function outcomes because permutation invariance induces a symmetry and/or the function might be non-surjective. 
Helper can capture this to refine the combinatorial structure in $K_{X_1,\, X_2}$, and provide more effective encoding and transmission of data.


\section{Efficient Partitioning of Bipartite Graph $G_f$ for Effective Distributed Compression of Permutation Invariant Functions}
\label{Bipartitions_with_helpers}

In this section, to understand the fundamental limits, we investigate the achievable rates of distributed computing of permutation invariant functions by incorporating a helper or a common information variable. In particular, we consider three different approaches: (i) exploiting the bipartitions via variable $K_B$, (ii) inferring the low probability edges via variable $K_{\delta}$, and (iii) exploiting the structure brought by the function via variable $K_S$. We next detail each approach.

\subsection{Exploiting Bipartitions in $G_f$}
\label{sect:helper_bipartitions}

To motivate our approach we start with a simple example for joint compression of $G_f$. In our compression scheme each source variable is uniformly distributed over $\mathcal{X}=\{0,\,1,\,\dots,\,n-1\}$. Hence, $G_f$ has $2n$ nodes in total, i.e., $n=|U|=|V|$. We denote the set of bipartitions by $\mathcal{K}$, the cardinality (count) of the bipartitions by $|\mathcal{K}|$, and bipartition $k\in \mathcal{K}$ by $\mathcal{G}_k=(U_k,\,V_k,\,E_k)$. Let $p_k$ be the probability that $\mathcal{G}_k$ in $G_f$ is selected, and $n_k$ be the size of nodes in $U_k$, i.e., $\sum_k n_k=n$. The rate required to specify the index of the bipartition, i.e., $K_B$, satisfies
\begin{align}
\label{common}
H(K_B)=-\sum\limits_{k\in\mathcal{K}} p_k\log p_k,
\end{align}
which is the G{\'a}cs-K{\"o}rner common information between $G_{X_1}$ and $G_{X_2}$ which embed the source equivalence classes. Note that $K_B$ is not necessarily identical to $K_{X_1,\, X_2}$ unless each maximal independent set in any of $G_{X_1}$ and $G_{X_2}$ is a singleton. Because of aggregation, it is also true that $H(K_B)\leq H(K_{X_1,\, X_2})$.

We consider different scenarios to demonstrate the gains of joint compression where we exploit the common randomness between the partitions of $G_f$ through a helper as a proxy for establishing bipartitions to provide rate savings. 
The helper encodes the index of the bipartition using $H(K_B)$ bits. Once the index is known, sources need $\log(n_k)$ bits each to encode $U_k$ and $V_k$ assuming that the outcome is uniform over bipartition $\mathcal{G}_k$. Hence, the total rate required to compute $X_1+X_2$ via exploiting the common randomness is given as
\begin{align}
\label{R_J}
R_J = 2\sum\limits_{k\in\mathcal{K}} p_k \log n_k+H(K_B).   
\end{align}

\begin{prop}
For permutation invariant functions with uniform source alphabet, the rate saving $R_{\Delta}=R_S-R_J$ of the common information model satisfies  
\begin{align}
\label{rate_comparison}
R_{\Delta}=R_S-R_J \leq  H(K_B).   
\end{align}

\end{prop}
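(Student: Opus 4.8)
The plan is to reduce the claim to a statement purely about the numbers $\{p_k\}$ and $\{n_k\}$ and then recognize it as the equality case of a concavity bound. First I would substitute the three ingredients $R_S = 2\log n$, $R_J = 2\sum_{k\in\mathcal{K}} p_k\log n_k + H(K_B)$, and $H(K_B) = -\sum_{k\in\mathcal{K}} p_k\log p_k$ into $R_{\Delta} = R_S - R_J$. After cancelling the $H(K_B)$ terms, the desired inequality $R_{\Delta} \le H(K_B)$ is equivalent to
\begin{align}
\log n \le \sum_{k\in\mathcal{K}} p_k\log\frac{n_k}{p_k}. \nonumber
\end{align}
So the whole proposition hinges on comparing $\log n$ with the mismatched quantity $\sum_k p_k\log(n_k/p_k)$.

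Next I would pin down the relation between $p_k$ and $n_k$ forced by the uniform source assumption, which is the heart of the argument. Since the bipartitions $\mathcal{G}_k=(U_k,V_k,E_k)$ are the connected components of $G_f$, they partition the vertex set $U$ (and likewise $V$), so a joint outcome lies in component $k$ exactly when $X_1\in U_k$, equivalently $X_2\in V_k$; hence $p_k = P(X_1\in U_k)$. With each source uniform over the $n$ node values, $P(X_1\in U_k)=|U_k|/n = n_k/n$, and the symmetric computation with $X_2$ gives $p_k=|V_k|/n$, so the uniform marginals force $|U_k|=|V_k|=n_k$, which is precisely the balanced structure that justifies the $\log n_k$-per-source encoding in $R_J$. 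Thus $p_k=n_k/n$ for every $k$.

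With this identity in hand the conclusion is immediate: $n_k/p_k = n$ for all $k$, so $\sum_k p_k\log(n_k/p_k)=\log n$, the displayed inequality holds with equality, and therefore $R_{\Delta}=H(K_B)$, which in particular satisfies $R_{\Delta}\le H(K_B)$. I would also present the bound through the concavity of $\log$: Jensen's inequality gives $\sum_k p_k\log(n_k/p_k)\le \log\bigl(\sum_k p_k\cdot n_k/p_k\bigr)=\log\bigl(\sum_k n_k\bigr)=\log n$, with equality exactly when $n_k/p_k$ is constant in $k$, i.e. when $p_k\propto n_k$. Uniformity is precisely this equality case, so the two sides coincide, the bound is tight, and the helper extracts all of the available saving.

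The main obstacle is the middle step, namely justifying $p_k=n_k/n$ rather than the algebra around it. Two structural facts must be nailed down: that component selection is a deterministic function of $X_1$ alone, so that the joint event ``outcome lies in $\mathcal{G}_k$'' collapses to ``$X_1\in U_k$'', which relies on the $\mathcal{G}_k$ partitioning $U$ and on cross-component pairs carrying zero measure; and that uniform marginals equalize the node probabilities and force $|U_k|=|V_k|$, so that $n_k$ legitimately counts both the probability mass and the encoding length on each side. Once these facts are established the remaining computation is routine, and the proof simultaneously exhibits that the stated bound is attained with equality in the uniform regime.
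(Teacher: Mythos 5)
Your proof is correct, and it is not the paper's argument --- in fact it is more careful than the paper's argument and quietly repairs it. Both you and the paper reduce the claim to the same displayed inequality $\log n \leq \sum_{k\in\mathcal{K}} p_k \log n_k + H(K_B)$, equivalently $\log n \leq \sum_{k\in\mathcal{K}} p_k\log(n_k/p_k)$. The paper closes this step by citing the log sum inequality and asserting it implies the display; but applied literally with $a_k=p_k$, $b_k=n_k$, the log sum inequality gives $\sum_{k} p_k\log(p_k/n_k) \geq \log(1/n)$, i.e.\ $\sum_{k} p_k\log(n_k/p_k) \leq \log n$ --- the \emph{reverse} direction. The paper's conclusion is therefore recovered only through the equality case $p_k = c\,n_k$ (which the paper states as the equality condition of the lemma but never derives from its hypotheses). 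Your proof supplies exactly that missing step: you show that uniform marginals, together with the fact that the connected components partition $U$ and $V$ and cross-component pairs carry zero mass, force $p_k = |U_k|/n = |V_k|/n = n_k/n$, so the display holds with equality and $R_{\Delta} = H(K_B)$, which trivially satisfies the stated bound. Your Jensen remark also makes clear what the paper's one-liner obscures: for general $(p_k,n_k)$ with $\sum_k n_k = n$, concavity gives $\sum_k p_k \log(n_k/p_k) \leq \log n$, i.e.\ $R_{\Delta} \geq H(K_B)$, and uniformity is precisely the equality case --- a reading consistent with the paper's own subsequent Zipf analysis, where the gain satisfies $R_{\Delta} \geq \log|\mathcal{K}| \geq H(K_B)$. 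In short, the paper buys brevity by appealing to a standard lemma but, as written, invokes it in the wrong direction; your route costs one structural observation ($p_k = n_k/n$, and incidentally $|U_k|=|V_k|$, which also justifies the $\log n_k$ term in $R_J$) and in return yields a rigorous proof plus the stronger statement that the saving is exactly $H(K_B)$ under the proposition's hypotheses, not merely at most $H(K_B)$.
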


\begin{proof}
The log sum inequality states that $$\sum\limits_j a_j\log \frac{a_j}{b_j} \geq a \log \frac{a}{b},$$ where $a_j\geq 0$ and $b_j\geq 0$ such that $\sum_j a_j=a$ and $\sum_j b_j=b$ \cite{cover2012elements}. The equality is if and only if $\frac  {a_{j}}{b_{j}}$ are equal for all $j$, i.e., $a_{j}=cb_{j}$ for all $j$.
Exploiting this inequality we infer that $$\log n \leq \sum\limits_{k\in\mathcal{K}} p_k \log(n_k)+H(K_B).$$ Multiplying each side with $2$, rearranging the terms, and using (\ref{rate_comparison}) we get the result.
\end{proof}

From (\ref{rate_comparison}) the savings can be more eminent when $H(K_B)$ is large. 
If the bipartition size is uniform, i.e., $n_k=\frac{n}{|\mathcal{K}|}$, and probability $p_k\sim \textrm{Zipf}(\gamma)$ where $\gamma$ denotes its skew, $H(K_B)\leq \log(|\mathcal{K}|)$ where equality is if and only if $\gamma=0$. From (\ref{R_J}), $R_J\leq 2\log(n)-\log(|\mathcal{K}|)$. Hence, the gain $R_{\Delta}\geq \log(|\mathcal{K}|)$ improves with $|\mathcal{K}|$ and skewness. If the bipartition size $n_k$ is arbitrary, 
\begin{align}
R_J&\overset{(a)}{\leq} 2\sum\limits_{k\in\mathcal{K}} p_k \log(n_k)+\log(|\mathcal{K}|)\nonumber\\
&\overset{(b)}{\leq} 2\log\Big(\sum\limits_{k\in\mathcal{K}} p_k n_k\Big)+\log(|\mathcal{K}|), \nonumber
\end{align}
where $(a)$ is equality if and only if $\gamma=0$ and $(b)$ follows from concavity of log. 
It is possible that $R_{\Delta}\geq\log(|\mathcal{K}|)$ when partitions with small $n_k$ have large $p_k$. Hence, having different size partitions $n_k$, larger $|\mathcal{K}|$ and $\gamma$ improves the gain $R_{\Delta}$.

In Fig. \ref{fig:gain_permutation_invariant_CI} we illustrate the gain of joint compression of $f(X_1,X_2)=X_1+X_2$  for several scenarios, where the partition size could be uniform, i.e., $n_k=\frac{n}{|\mathcal{K}|}$, or arbitrary such that $\sum\limits_{k\in\mathcal{K}} n_k=n$. We note that $\gamma$ captures the skewness of the distribution on the partition indices, where partition $k\in\mathcal{K}$ has probability $p_k\sim \textrm{Zipf}(\gamma)$. 
\begin{figure}[t!]
    \centering
    \includegraphics[width=0.5\columnwidth]{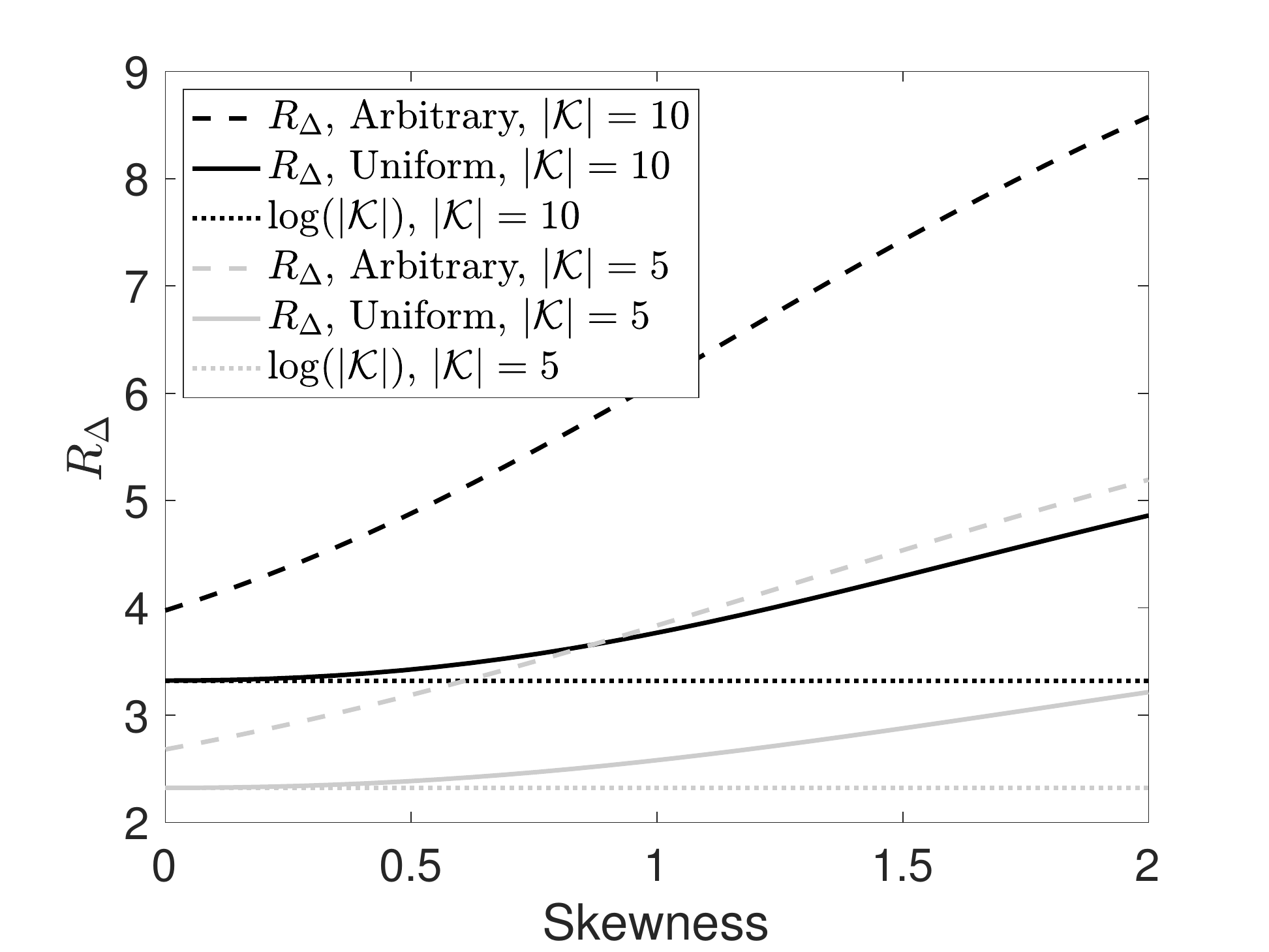}
    \caption{The rate savings in (\ref{rate_comparison}) of joint compression versus skewness $\gamma$ for different scenarios.}
    \label{fig:gain_permutation_invariant_CI}
\end{figure}

\begin{prop}\label{RateRegionPermutationInvariant}
The rate region for effective distributed compression of permutation invariant functions $f(X_1,\, X_2)$ with a helper is characterized by
\begin{align}
R_J\geq H(K_B)+H_{G_{X_1}}(X_1\vert K_B)+H_{G_{X_2}}(X_2\vert X_1),\nonumber    
\end{align}
where $H(K_B)$ is the asymptotic rate of the helper, $H_{G_{X_1}}(X_1\vert K_B)$ is the rate at which source $X_1$ sends to identify its equivalence class. The term $H_{G_{X_2}}(X_2\vert X_1)$ denotes the rate of source $X_2$ to identify the function outcome given the equivalence class of $X_1$. Note that $H_{G_{X_2}}(X_2\vert X_1)$ might be zero given $K_B$ if the bipartitions of $G_f$ are sparsely connected within themselves. We can swap the roles of $X_1$ and $X_2$.
\end{prop}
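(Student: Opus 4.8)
The plan is to prove this as an achievable sum rate by layering the helper construction of Prop.~\ref{Coding_GK_f(X_1_X_2)} on top of the coloring-based functional compression of \cite{FM14}, realizing the three-term bound as a corner point of the conditional functional Slepian-Wolf region in (\ref{rateregiongraph}). The key structural fact is that the bipartition index $K_B$ in (\ref{common}) is a G{\'a}cs-K{\"o}rner common information variable: each bipartition $\mathcal{G}_k=(U_k,V_k,E_k)$ is a connected component of $G_f$, so $H(K_B\vert X_1)=H(K_B\vert X_2)=0$ and $K_B$ can be computed at either source. Following \cite{gacs1973common}, the helper therefore extracts $K_B$ once and broadcasts it at asymptotic rate $H(K_B)$, after which the remaining problem localizes to a single bipartition. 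Note also that $K_B$ is a deterministic function of $X_1$, so conditioning on $X_1$ already subsumes conditioning on $K_B$, which keeps the three terms mutually consistent.

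First I would handle source $X_1$. Conditioned on $K_B=k$, source $X_1$ restricts to the characteristic graph $G_{X_1}$ induced on $U_k$, colors a sufficiently large power graph $G^n_{X_1}$ satisfying the coloring connectivity condition of \cite{FM14}, and Slepian-Wolf compresses the coloring. By K{\"o}rner's characterization the decoder recovers the equivalence class of $X_1$ at the achievable rate $H_{G_{X_1}}(X_1\vert K_B)$. I would then take source $X_2$ to be the second corner point of the region (\ref{rateregiongraph}): since the decoder has already recovered the class of $X_1$, it serves as decoder side information, and $X_2$ transmits a coloring of $G_{X_2}$ at rate $H_{G_{X_2}}(X_2\vert X_1)$. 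Because $f$ is permutation invariant, $f(X_1,X_2)=\Phi(X_1+X_2)$, so once the class of $X_1$ is fixed the residual ambiguity in the function value is exactly what $H_{G_{X_2}}(X_2\vert X_1)$ resolves with zero error. Adding the three orthogonal transmissions gives the stated sum rate, and relabeling yields the symmetric version. The degenerate note is immediate: if the bipartitions are sparsely connected so that the class of $X_1$ alone determines $f$, then $G_{X_2}$ induced on each $V_k$ collapses to a single equivalence class and $H_{G_{X_2}}(X_2\vert X_1)=0$.

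The main obstacle I anticipate is justifying the second corner point in a genuinely distributed, zero-error sense: source $X_2$ cannot observe $X_1$ and communicates only through the decoder, yet must achieve the conditional rate $H_{G_{X_2}}(X_2\vert X_1)$. This requires verifying that the coloring connectivity condition of \cite{FM14} continues to hold for $G_{X_2}$ within each bipartition $\mathcal{G}_k$, so that the Slepian-Wolf-with-side-information step followed by minimum-entropy decoding produces a valid coloring and a correct look-up of $f(X_1,X_2)$; the asymptotic zero-error guarantee then inherits from \cite{FM14} applied component-wise over the bipartitions selected by $K_B$. A matching converse follows from the functional-compression lower bounds in (\ref{rateregiongraph}): any zero-error scheme must convey the common bipartition index, costing $H(K_B)$, and must meet the conditional graph entropies $H_{G_{X_1}}(X_1\vert K_B)$ and $H_{G_{X_2}}(X_2\vert X_1)$ for the two sources, so no smaller sum rate is possible.
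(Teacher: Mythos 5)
The paper itself states Prop.~\ref{RateRegionPermutationInvariant} without an explicit proof: its justification is implicit in the layered scheme described in Sect.~\ref{sect:helper_bipartitions} and instantiated numerically in Examples~\ref{SymmetricSum} and~\ref{AsymmetricSum}. Your achievability architecture matches that intended scheme: the helper broadcasts the bipartition index $K_B$ of (\ref{common}) at asymptotic rate $H(K_B)$, which is legitimate since each bipartition is a connected component of $G_f$ and hence $H(K_B\vert X_1)=H(K_B\vert X_2)=0$; then $X_1$ colors its characteristic graph restricted to the selected bipartition, giving the term $H_{G_{X_1}}(X_1\vert K_B)$; then $X_2$ sends a refinement to pin down the function outcome. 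Up to this point your reconstruction is faithful.

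There are, however, two concrete gaps. First, your claimed ``matching converse'' is wrong. The outer bound (\ref{rateregiongraph}) lower-bounds the sum rate by the joint graph entropy $H_{G_{X_1},G_{X_2}}(X_1,\, X_2)$, which is in general strictly smaller than $H(K_B)+H_{G_{X_1}}(X_1\vert K_B)+H_{G_{X_2}}(X_2\vert X_1)$; nothing in (\ref{rateregiongraph}) forces a zero-error scheme to pay an additive $H(K_B)$ for the bipartition index. The paper's own Example~\ref{SymmetricSum} makes this explicit: the helper scheme's sum rate is $1.9796$ while the joint functional compression limit is $1.7296$, a rate the paper notes the helper architecture cannot attain. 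So the three-term expression is an achievable sum rate of this particular architecture, not a fundamental limit, and no converse of the kind you assert can hold. Second, the third term has a side-information mismatch that your proposed fix does not address. Achieving $H_{G_{X_2}}(X_2\vert X_1)$ by the side-information coding of \cite{OR01}, \cite{FM14} requires the decoder to hold $X_1$ itself, but in your (and the paper's) scheme the decoder has only recovered the equivalence class $u(X_1)$, since that is all that the rate $H(K_B)+H_{G_{X_1}}(X_1\vert K_B)$ buys. Because the binning of $X_2$'s colors exploits the full correlation with the side information, the honest rate for this step is the conditional graph entropy given $u(X_1)$, which is in general at least $H_{G_{X_2}}(X_2\vert X_1)$ and can exceed it. Verifying the coloring connectivity condition component-wise does not close this gap; you would either need to show the two conditional quantities coincide for the bipartition structure at hand, or read the third term as the paper's prose does---the rate to identify the function outcome given the equivalence class of $X_1$---which is a weaker claim than the formula conditioned on $X_1$.
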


We next consider a bipartite graph construction for computing a permutation invariant function of correlated variables, where source alphabets are identical and denoted by $\mathcal{X}$.
\begin{ex}  \label{SymmetricSum}
Sum function is permutation invariant. Let $f(X_1,\, X_2)=X_1+X_2$ where both sources have alphabets $\mathcal{X}=\{0,\,1,\,2\}$. The joint probability matrix $P_{X_1,\, X_2}$ with entries ordered in an increasing fashion and the table of function outcomes $F$ are given as
\begin{align}
P_{X_1,\, X_2} = \frac{1}{24}\begin{bmatrix} 12 & 0 & 0 \\ 0 & 2 & 3 \\ 0 & 3 & 4\end{bmatrix},\quad F= \begin{bmatrix} 0 & X & X \\ X & 2 & 3 \\ X & 3 & 4 \end{bmatrix}.\nonumber
\end{align}
Note that for given $P_{X_1,\, X_2}$ which is symmetric, we have $H(X_1)=H(X_2)=h(\frac{1}{2},\,\frac{5}{24},\,\frac{7}{24})=1+\frac{1}{2}\cdot h\Big(\frac{5}{12}\Big)=1.4899$ and 
$H(X_2\vert X_1)=\frac{1}{2}\cdot 0+\frac{5}{24}\cdot h(\frac{2}{5})+\frac{7}{24}\cdot h(\frac{3}{7})$, and $$\mathcal{R}_{(X_1,\, X_2)}=H(X_1,\, X_2)=1.9796.$$ 
In this example, $f(X_1,\, X_2)\in\{0,\,2,\,3,\,4\}$ where the probabilities at the respective coordinates are $P_{f(X_1,\, X_2)}=\Big(\frac{1}{2},\,\frac{1}{12},\,\frac{1}{4},\,\frac{1}{6}\Big)$. 
Hence, the entropy of the function from the given joint distribution $P_{X_1,\, X_2}$ and the outcomes table $F$ is $$\mathcal{R}_{f(X_1,\, X_2)}=H(f(X_1,\, X_2))=1.7296,$$ which is less than the sum rate $\mathcal{R}_{(X_1,\, X_2)}$ needed to recover $(X_1,\, X_2)$. 
The bipartite graph $G_f$ for this example has 2 bipartitions and is shown in Fig. \ref{fig:Sum_CI_toy2}. 

The helper-based scheme requires $H(K_B)=H(K_{X_1,\, X_2})=1$ since both bipartitions have equal probabilities. With entropy coding, the total rate for recovering $(X_1,\, X_2)$ is 
\begin{align}
\mathcal{R}_{(X_1,\, X_2)}^H&=H(K_B)+H(X_1\vert K_B)+H(X_2\vert X_1)\nonumber\\
&=1+\Big(\frac{1}{2}\cdot 0+\frac{1}{2}\cdot h\left(\frac{5}{12}\right)\Big)+\Big(\frac{1}{2}\cdot 0+\frac{5}{24}\cdot h\left(\frac{2}{5}\right)+\frac{7}{24}\cdot h\left(\frac{3}{7}\right)\Big)
=1.9796.\nonumber    
\end{align}

\begin{figure}[t!]
    \centering
    \includegraphics[width=0.7\columnwidth]{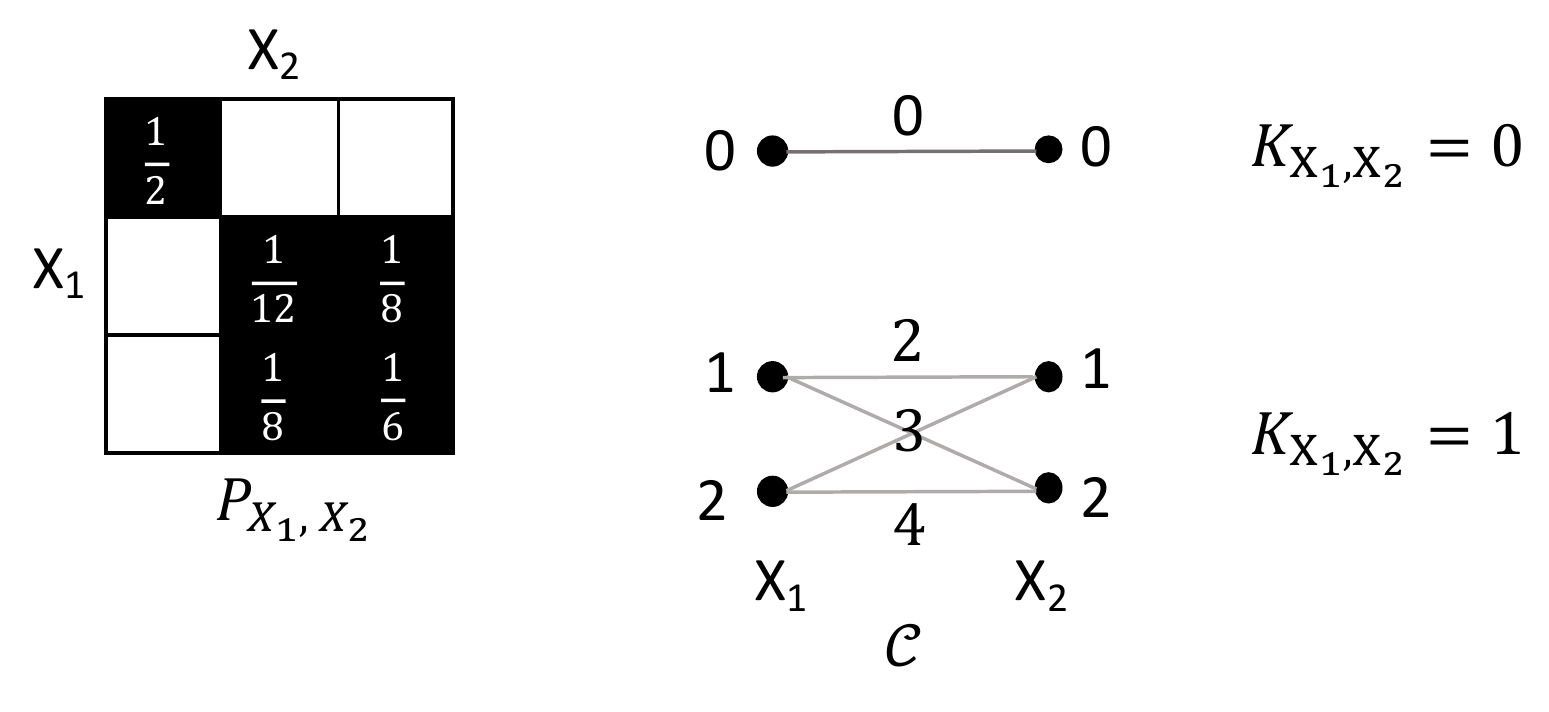}
    \caption{Computing sum function with common information. (Left) The joint distribution table. (Right) Bipartite graph $G_f$ for Example \ref{SymmetricSum}). Note that the function outcomes are indicated on the edges.}
    \label{fig:Sum_CI_toy2}
\end{figure}

The graph coloring approach requires an average number of colors to represent the sum function from each source $l=1,\,2$ which is 
$H_{G_{X_l}}(X_l)=h\Big(\frac{1}{2},\,\frac{5}{24},\,\frac{7}{24}\Big)=1.4899$. Similarly, due to the symmetry $H_{G_{X_l}}(X_l\vert K_B)=0.4899$. The total rate of the graph coloring approach is $$\mathcal{R}_{f(X_1,\, X_2)}^{\chi}=H_{G_{X_1}}(X_1)+H_{G_{X_2}}(X_2\vert X_1)=h\Big(\frac{1}{2},\,\frac{5}{24},\,\frac{7}{24}\Big)+\Big(\frac{1}{2}\cdot 0 + \frac{1}{2}\cdot h\Big(\frac{5}{12}\Big)\Big)=1.9799.$$

Using the graph coloring approach, if the source characteristic graphs can be jointly compressed, the sum rate needed for computing $f(X_1,\, X_2)$ satisfies
\begin{align}
H(K_B)+H_{G_{X_1,\, X_2}}(X_1,\, X_2\vert K_B)&=1+ \Big(\frac{1}{2}\cdot 0+\frac{1}{2}\cdot h\left(\frac{1}{6},\,\frac{1}{2},\,\frac{1}{3}\right)\Big)\nonumber\\
&=1.7296=H(f(X_1,\, X_2)).\nonumber    
\end{align}
However, the above rate is not possible to attain even with the helper because given $K_B=1$ the graph, which corresponds to the second bipartition in Fig. \ref{fig:Sum_CI_toy2}, is not block-diagonal.

In the distributed implementation, the sum rate of the helper-based model for recovering $f(X_1,\, X_2)$ is given by the following expression: 
\begin{align}
\mathcal{R}_{f(X_1,\, X_2)}^H&=H(K_B)+H_{G_{X_1}}(X_1\vert K_B)+H_{G_{X_2}}(X_2\vert X_1)\nonumber\\
&=1+\Big(\frac{1}{2}\cdot 0+\frac{1}{2}\cdot h\Big(\frac{5}{12}\Big)\Big)+\Big(\frac{1}{2}\cdot 0+\frac{5}{24}\cdot h\Big(\frac{2}{5}\Big)+\frac{7}{24}\cdot h\Big(\frac{3}{7}\Big)\Big)=1.9796.\nonumber  
\end{align}

In this example $\mathcal{R}_{f(X_1,\, X_2)}^{\chi}>\mathcal{R}_{f(X_1,\, X_2)}^H=\mathcal{R}_{(X_1,\, X_2)}^H=\mathcal{R}_{(X_1,\, X_2)}>\mathcal{R}_{f(X_1,\, X_2)}$, i.e., neither the helper nor graph coloring-based compression has an advantage over joint compression of sources since each distinct pair $(X_1,\, X_2)$ needs to be distinguished.
\end{ex}

We next consider an example bipartite graph construction for computing sum function of two variables where source alphabets are different but not disjoint. 
\begin{ex}\label{AsymmetricSum} Consider $f(X_1,\, X_2)=X_1+X_2$ where the source alphabets are given as $\mathcal{X}_1=\{1,\,2,\,3\}$ and $\mathcal{X}_2=\{0,\,1,\,3\}$. The joint probability matrix $P_{X_1,\, X_2}$ with ordered entries and the table of function outcomes $F$ are given as
\begin{align}
P_{X_1,\, X_2} = \frac{1}{4}\begin{bmatrix} 2 & 0 & 0 \\ 0 & 0 & 1 \\ 0 & 1 & 0\end{bmatrix},\quad F= \begin{bmatrix} 1 & X & X \\ X & X & 5 \\ X & 4 & X \end{bmatrix},\nonumber
\end{align}
where the function outcome is denoted by $\times$ if there is no edge between the nodes.

\begin{figure}[t!]
    \centering
    \includegraphics[width=0.7\columnwidth]{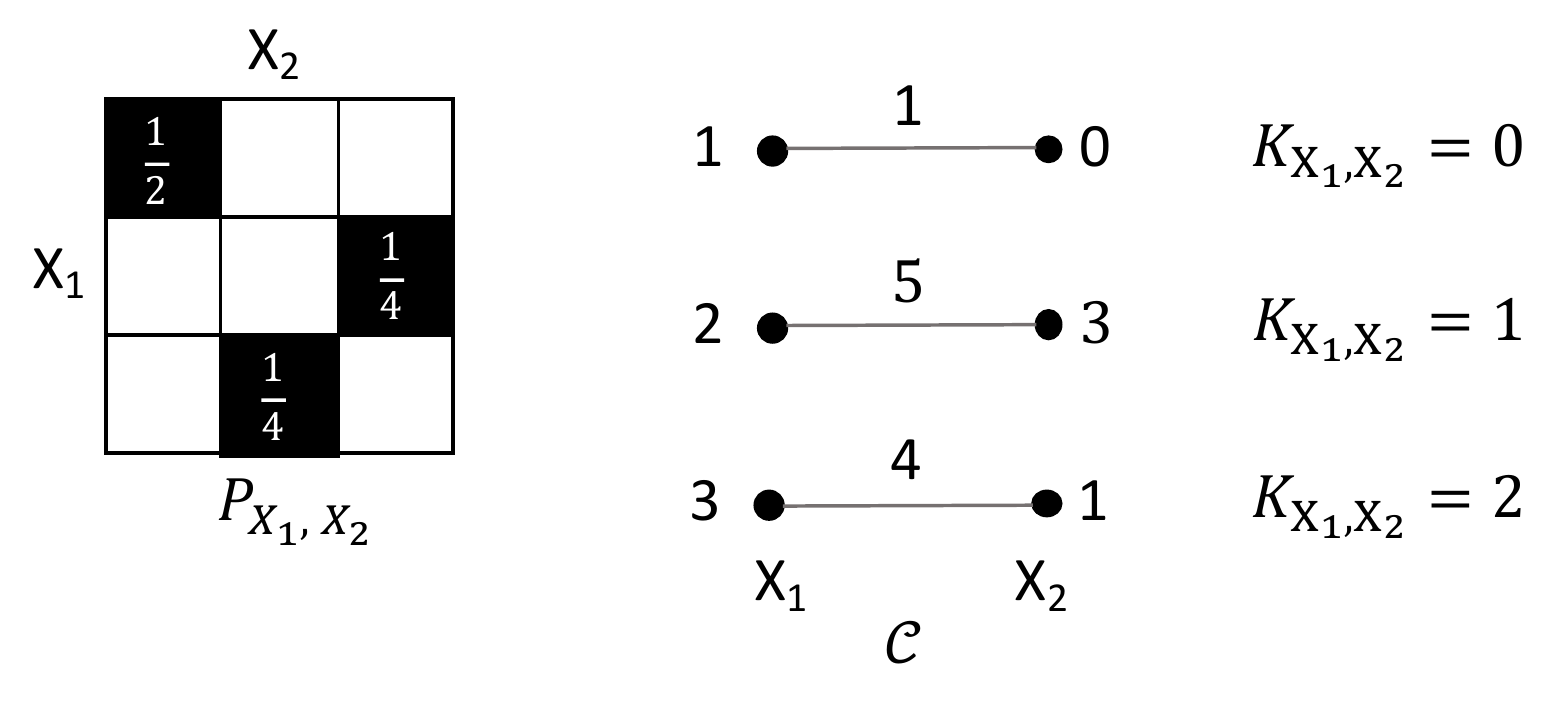}
    \caption{Computing sum function with common information. (Left) The joint distribution table. (Right) Bipartite graph $G_f$ for Example \ref{AsymmetricSum}). The function outcomes are indicated on the edges.}
    \label{fig:Sum_CI_toy1}
\end{figure}

Note that $H(X_1)=H(X_2)=h\Big(\frac{1}{2},\,\frac{1}{4},\,\frac{1}{4}\Big)=\frac{3}{2}$ and $H(X_1\vert X_2)=0$. Hence, with entropy coding, 
the total rate of the helper-based scheme for recovering $(X_1,\, X_2)$ is 
$$\mathcal{R}_{(X_1,\, X_2)}=H(X_1,\, X_2)=\frac{3}{2}.$$

In this example, $f(X_1,\, X_2)\in\{1,\,4,\,5\}$ where the probabilities at the respective coordinates are $P_{f(X_1,\, X_2)}=\Big(\frac{1}{2},\,\frac{1}{4},\,\frac{1}{4}\Big)$. 
Hence, the entropy of the function from the given joint distribution $P_{X_1,\, X_2}$ and the outcomes table $F$ is $$\mathcal{R}_{f(X_1,\, X_2)}=H(f(X_1,\, X_2))=\frac{3}{2},$$ which is the same as the sum rate $\mathcal{R}_{(X_1,\, X_2)}$ needed to recover $(X_1,\, X_2)$. 
The bipartite graph $G_f$ for this example has 3 bipartitions and is shown in Fig. \ref{fig:Sum_CI_toy1}. 

The helper-based scheme requires $H(K_B)=H(K_{X_1,\, X_2})=\frac{3}{2}$ since the bipartitions have a distribution $\Big(\frac{1}{2},\,\frac{1}{4},\,\frac{1}{4}\Big)$. With entropy coding, the total rate for recovering $(X_1,\, X_2)$ is 
\begin{align}
\mathcal{R}_{(X_1,\, X_2)}^H=H(K_B)+H(X_1\vert K_B)=\frac{3}{2}+0=\frac{3}{2}.\nonumber
\end{align}
Hence, $\mathcal{R}_{(X_1,\, X_2)}^H$ equals the sum rate needed to compute $f(X_1,\, X_2)=X_1+X_2$. Therefore, the fundamental lower bound is achieved.

The average number of colors needed to represent the sum function from each source $l=1,\,2$ is $H_{G_{X_l}}(X_l)=\frac{3}{2}$. Due to symmetry,  $H_{G_{X_l}}(X_l\vert K_B)=\frac{1}{2}\cdot 0+\frac{1}{2}\cdot 1=\frac{1}{2}$. Hence, the total rate of the graph coloring approach is
\begin{align}
\mathcal{R}_{f(X_1,\, X_2)}^{\chi}=H_{G_{X_1}}(X_1)+H_{G_{X_2}}(X_2\vert X_1)=\frac{3}{2}+0=\frac{3}{2}. \nonumber   
\end{align}

The sum rate of the helper-based model for recovering $f(X_1,\,X_2)$ is
\begin{align}
\mathcal{R}_{f(X_1,\, X_2)}^H=H(K_B)+H_{G_{X_1}}(X_1\vert K_B)+H_{G_{X_2}}(X_2\vert X_1)=\frac{3}{2}+0+0=\frac{3}{2}.\nonumber
\end{align}

In this example $\mathcal{R}_{f(X_1,\, X_2)}^H=\mathcal{R}_{f(X_1,\, X_2)}^{\chi}=\mathcal{R}_{(X_1,\, X_2)}^H=\mathcal{R}_{(X_1,\, X_2)}=\mathcal{R}_{f(X_1,\, X_2)}$. Note that while sum is a permutation invariant function, the source alphabets are not identical and each possible source pair generates a distinct outcome. The helper cannot provide any savings because $H(X_1\vert X_2)=0$ and that from the definition of GKW it holds $H(K_B\vert X_1)=H(K_B\vert X_2)=0$. As a result, there is a bijection from 
$K_B$ to 
$X_1$ given $X_2$.
\end{ex}  

In the helper-based scheme, provided that the sources are coupled, a bipartition of $G_f$ either has a unique $\Phi(X_1+X_2)$ value, where transmission from only one source suffices to compute $\Phi(X_1+X_2)$, or multiple outcomes, where the helper can still exploit the coupling between the sources and provide rate savings.

Examples \ref{SymmetricSum} and \ref{AsymmetricSum} demonstrate the power of a helper that exploits the bipartitions in $G_f$ to compute a symmetric function $f(X_1,\,X_2)$.

\subsection{Exploiting Low Probability Edges in $G_f$}
\label{sect:helper_low_prob}

A helper-based scheme could still be practical even when $G_f$ is connected, i.e., there is a single bipartition, unlike the approach in Sect. \ref{sect:helper_bipartitions} that relies on bipartitioning. If the helper can leverage the low probability edges in $G_f$, and provide the necessary rate to distinguish those edges, then the task of each source is to send a refinement to identify the outcome of the function. We next consider an example where $G_f$ is with one bipartition and a helper can leverage the low probability edges each with probability $\delta$. To capture the helper or the common information variable that can distinguish the low probability edges, we use the notation $K_{\delta}$. Note that the variable $K_{\delta}$ is different from $K_B$ and $K_{X_1,\, X_2}$, which denote the indices of the bipartite graphs $G_f$  corresponding to the functions $f(X_1,\, X_2)$ and $(X_1,\, X_2)$, respectively.

\begin{ex} \label{SymmetricLowError}
Consider the joint probability matrix $P_{X_1,\, X_2}$ and the table of permutation invariant function outcomes $F$ given below:
\begin{align}
P = \frac{1}{3} \begin{bmatrix} 0 & 1 & 0 \\ 1-\delta & 0 & \delta\\ 0 & \delta & 1-\delta\end{bmatrix},\quad F= \begin{bmatrix} X & 1 & X \\ 1 & X & 3\\ X & 3 & 2\end{bmatrix},\nonumber
\end{align}
where $\delta\in (0,\, 1)$ denotes the cross-over probability between the two bipartitions. The bipartite graph $G_f$ is shown in Fig. \ref{BipartiteGraphs_LowDelta} where the vertices are listed in the presented order. 

\begin{figure}[t!]
    \centering
    \includegraphics[width=0.25\columnwidth]{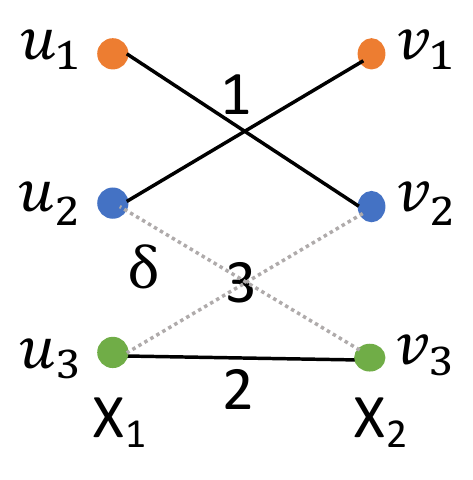}
    \caption{Bipartite graph $G_f$ for Example \ref{SymmetricLowError}, where each equivalence class (node) in either characteristic graph is denoted by a different color, each dotted or dashed edge has a probability $\delta$ and the function outcomes are indicated on the edges.}\label{BipartiteGraphs_LowDelta}
\end{figure}

The entropy of the function from the given joint distribution $P_{X_1,\,X_2}$ and the outcomes table $F$ is $H(f(X_1,\,X_2)) = h\left(\frac{2-\delta}{3},\, \frac{1-\delta}{3},\, \frac{2\delta}{3}\right)$. Note that when $\delta=0$, the source distributions are uniform. Hence, a rate upper bound not exploiting the low probability edges in $G_f$ is $$H_{G_{X_1}}(X_1)+H_{G_{X_2}}(X_2) = 2\log(3).$$ The sum rate required from the graph coloring approach to compute the function in the case of no helper is
\begin{align}
\mathcal{R}_{f(X_1,\, X_2)}^{\chi}=H_{G_{X_1}}(X_1)+H_{G_{X_2}}(X_2\vert X_1) = \log(3)+\Big(\frac{1}{3}\cdot 0+\frac{1}{3}\cdot h(\delta)+\frac{1}{3}\cdot h(\delta)\Big).\nonumber
\end{align}
We next consider a helper-based scheme. Assuming that these edges between the bipartitions are not present, $G_f$ has 2 bipartitions. Hence, $\delta$ models the cross-over probability between the two bipartitions. The helper's role is to identify whether or not the edges  between the bipartitions of $G_f$ with probability $\delta$ exist. This requires the helper to transmit $H(K_{\delta})=2h\left(\frac{\delta}{3}\right)$ bits per use asymptotically where the scaling $2$ is due to symmetry of the edges. 
Hence, the sum rate required to compute the function with a helper that exploits the common information between the bipartitions in $G_f$ is given as
\begin{align}
\mathcal{R}_{f(X_1,\, X_2)}^H=H(K_{\delta})+H_{G_{X_1}}(X_1\vert K_{\delta}) = 2h\left(\frac{\delta}{3}\right)+h\left(\frac{2}{3}\right).\nonumber
\end{align}
\end{ex}
We compare these rates of entropy coding and helper-based schemes in Fig. \ref{fig:BipartiteGraph_LowDelta}. 
The benefit of helper in functional compression is eminent when $\delta\in (0,\, 0.16)$, i.e., $\delta$ is low. 

\begin{figure*}[t!]
    \centering
    \includegraphics[width=0.49\textwidth]{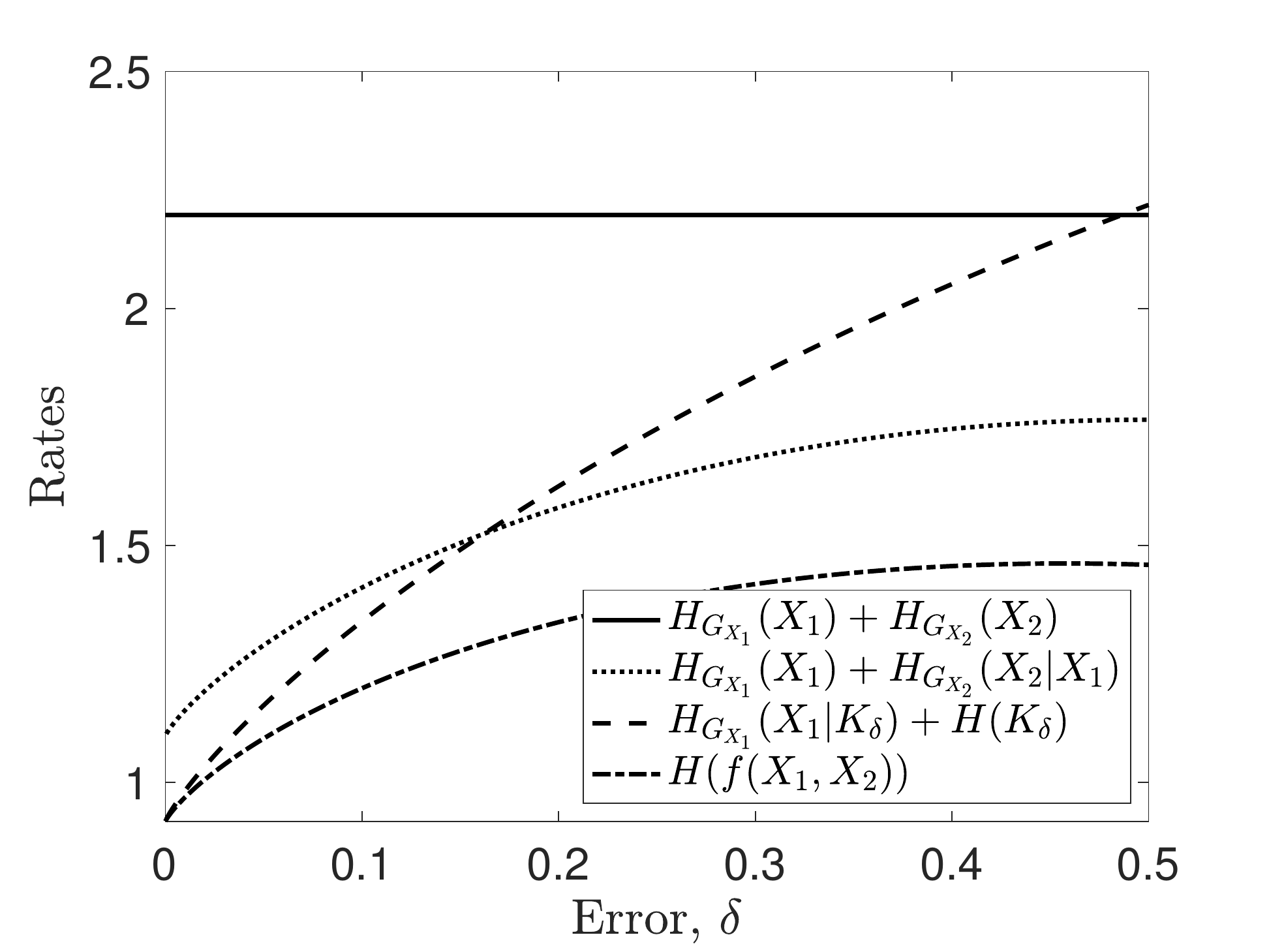}
    \includegraphics[width=0.49\textwidth]{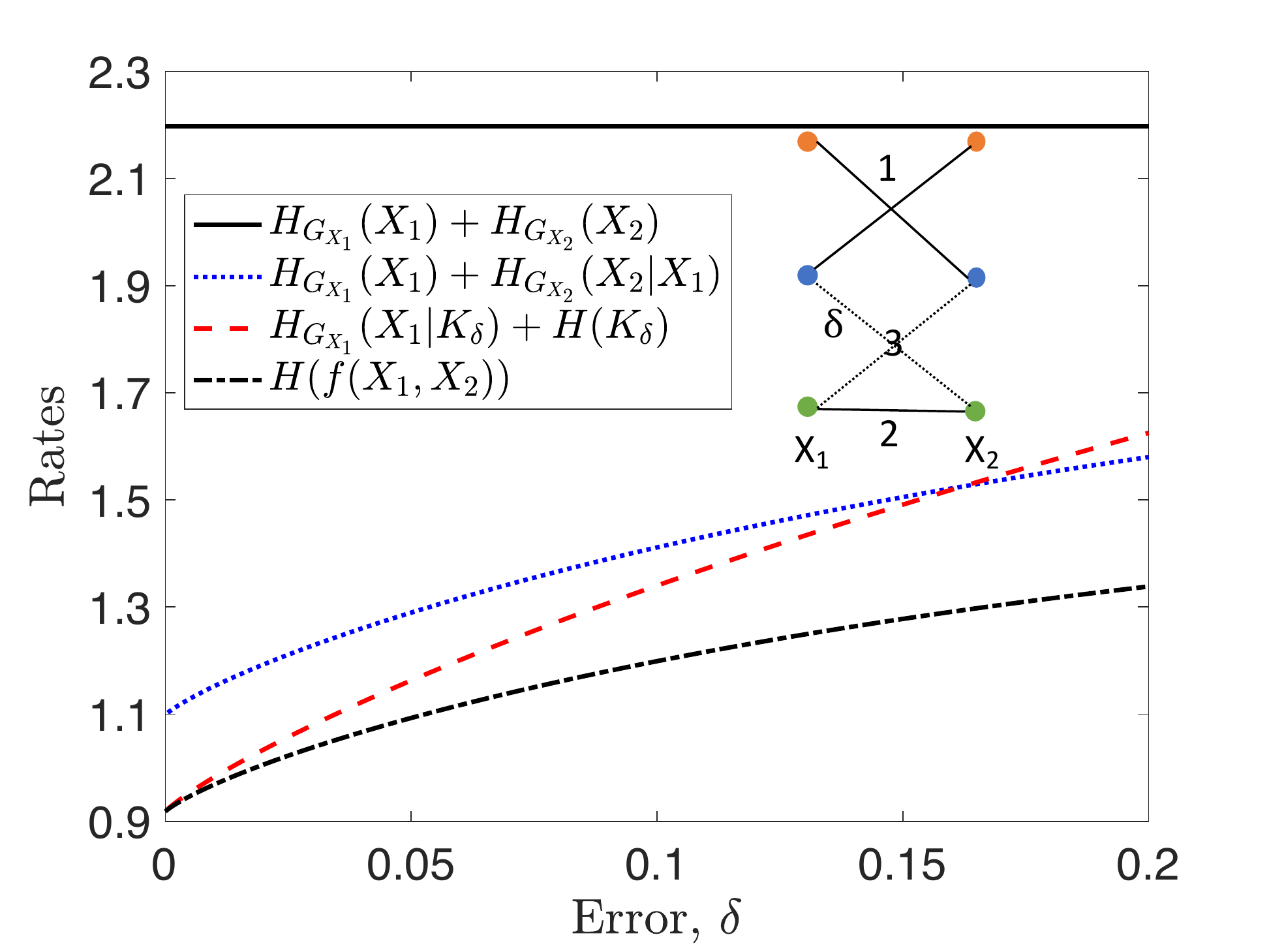}
    \caption{Bipartite symmetric graph $G_f$ in Example \ref{SymmetricLowError} with a cross-over probability $2\delta$ between two bipartitions. The right figure is the zoomed-in graph where $\delta\in(0,\,0.2)$.}
    \label{fig:BipartiteGraph_LowDelta}
\end{figure*}

\begin{ex}\label{SymmetricHighError}Consider the probability matrix $P_{X_1,\, X_2}$ and the table of permutation invariant function outcomes $F$ given as follows:
\begin{align}
P = \frac{1}{3} \begin{bmatrix} 0 & 1-\delta & \delta \\ 1-\delta & 0 & \delta \\ \delta & \delta & 1-2\delta\end{bmatrix},\quad F= \begin{bmatrix} X & 1 & 4 \\ 1 & X & 3\\ 4 & 3 & 2\end{bmatrix},  \nonumber 
\end{align}
where $\delta\in (0,\,1)$. The bipartite graph $G_f$ is shown in Fig. \ref{BipartiteGraphs_HighDelta} where the vertices are listed in the presented order.

\begin{figure}[t!]
    \centering
    \includegraphics[width=0.25\columnwidth]{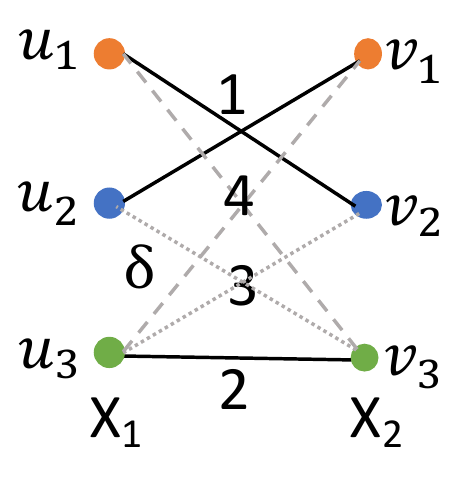}
    \caption{Bipartite graph $G_f$ for Example \ref{SymmetricHighError} where each equivalence class (node) in either characteristic graph is denoted by a different color, each dotted or dashed edge has a probability $\delta$ and the function outcomes are indicated on the edges.}\label{BipartiteGraphs_HighDelta}
\end{figure}

The entropy of the function for given $P_{X_1,\, X_2}$ and $F$ is $H(f(X_1,\, X_2)) = h\left(\frac{2-2\delta}{3},\, \frac{1-2\delta}{3},\, \frac{2\delta}{3},\, \frac{2\delta}{3}\right)$. A trivial rate upper bound is $$H_{G_{X_1}}(X_1)+H_{G_{X_2}}(X_2) = 2\log 3$$ as in Example \ref{SymmetricLowError}. The sum rate required from the graph coloring approach to compute the function in the case of no helper is 
\begin{align}
\mathcal{R}_{f(X_1,\, X_2)}^{\chi}=H_{G_{X_1}}(X_1)+H_{G_{X_2}}(X_2\vert X_1) = \log(3)+\Big(\frac{1}{3}\cdot h(\delta)+\frac{1}{3}\cdot h(\delta)+\frac{1}{3}\cdot h(\delta,\, \delta,\, 1-2\delta)\Big).\nonumber    
\end{align}
The helper-based scheme which distinguishes the bipartitions requires an asymptotic rate of $H(K_{\delta})=2h\left(\frac{\delta}{3},\, \frac{\delta}{3},\, 1-\frac{2\delta}{3}\right)$ bits per use. Hence, the sum rate required to compute the function with a helper that exploits the common information is given as
\begin{align}
\mathcal{R}_{f(X_1,\, X_2)}^H=H(K_{\delta})+H_{G_{X_1}}(X_1\vert K_{\delta}) = 2h\left(\frac{\delta}{3},\, \frac{\delta}{3},\, 1-\frac{2\delta}{3}\right)+h\left(\frac{2}{3}\right).\nonumber
\end{align}
\end{ex}
We compare the rates of entropy coding, coloring, and helper-based schemes in Fig. \ref{fig:BipartiteGraph_HighDelta}. 
The benefit of helper is eminent when $\delta$ is low. The helper needs to identify whether the edge between the bipartitions exists as function of $\delta$, which requires $H(K_{\delta})=2h\left(\frac{\delta}{3},\, \frac{\delta}{3},\, 1-\frac{2\delta}{3}\right)$ bits per use asymptotically. Note that compared to Example \ref{SymmetricLowError} in Fig. \ref{fig:BipartiteGraph_LowDelta}, 
the rate required from the helper to identify the right bipartition is higher because in this example the graph has a higher likelihood to be complete and it is preferred that the sources jointly compress their characteristic graphs. As $\delta$ increases, the graph becomes 
connected with high probability, and the gap between the rates of the helper-based scheme and entropy coding increases.

\begin{figure*}[t!]
    \centering
    \includegraphics[width=0.49\textwidth]{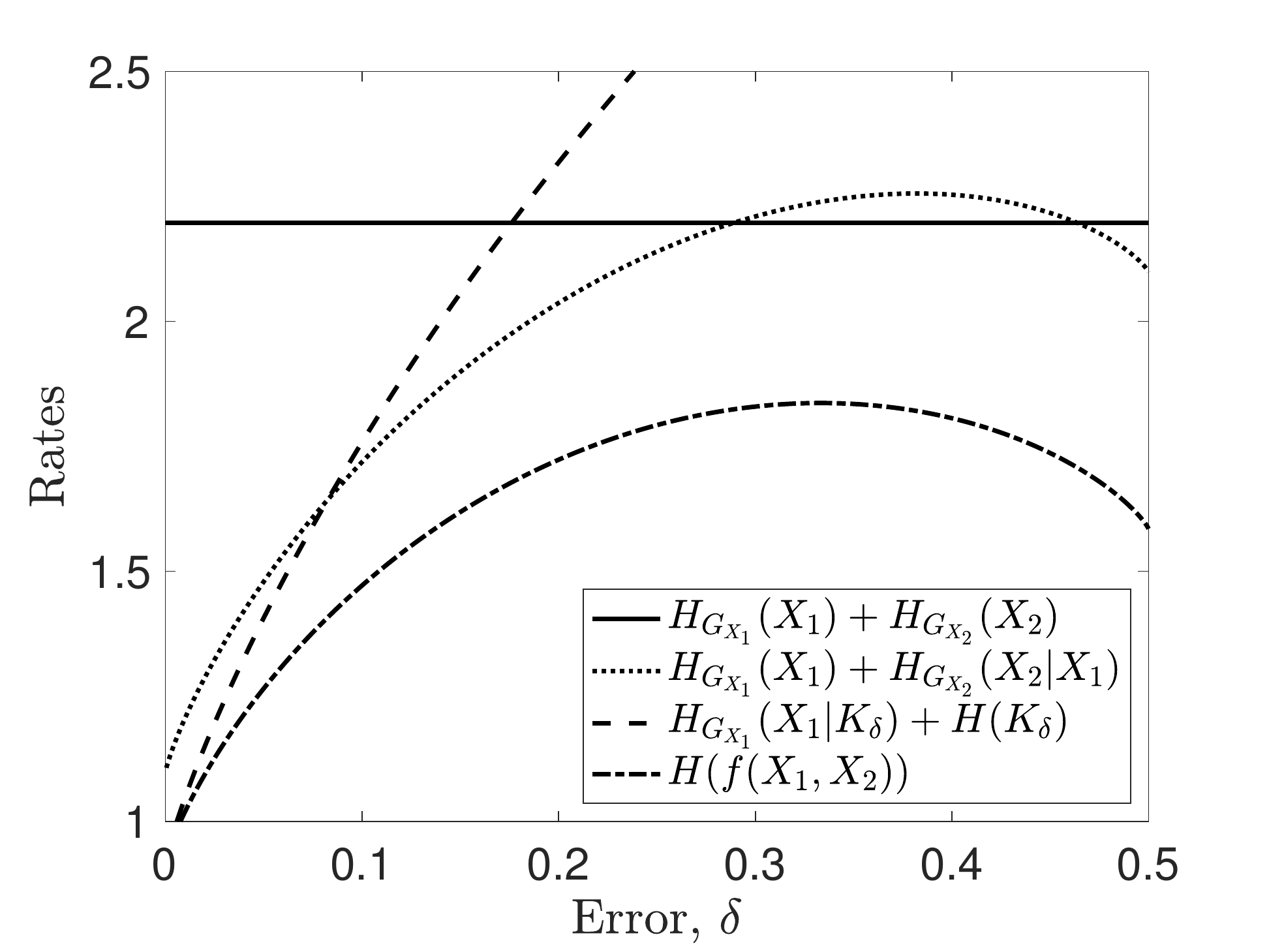}
    \includegraphics[width=0.49\textwidth]{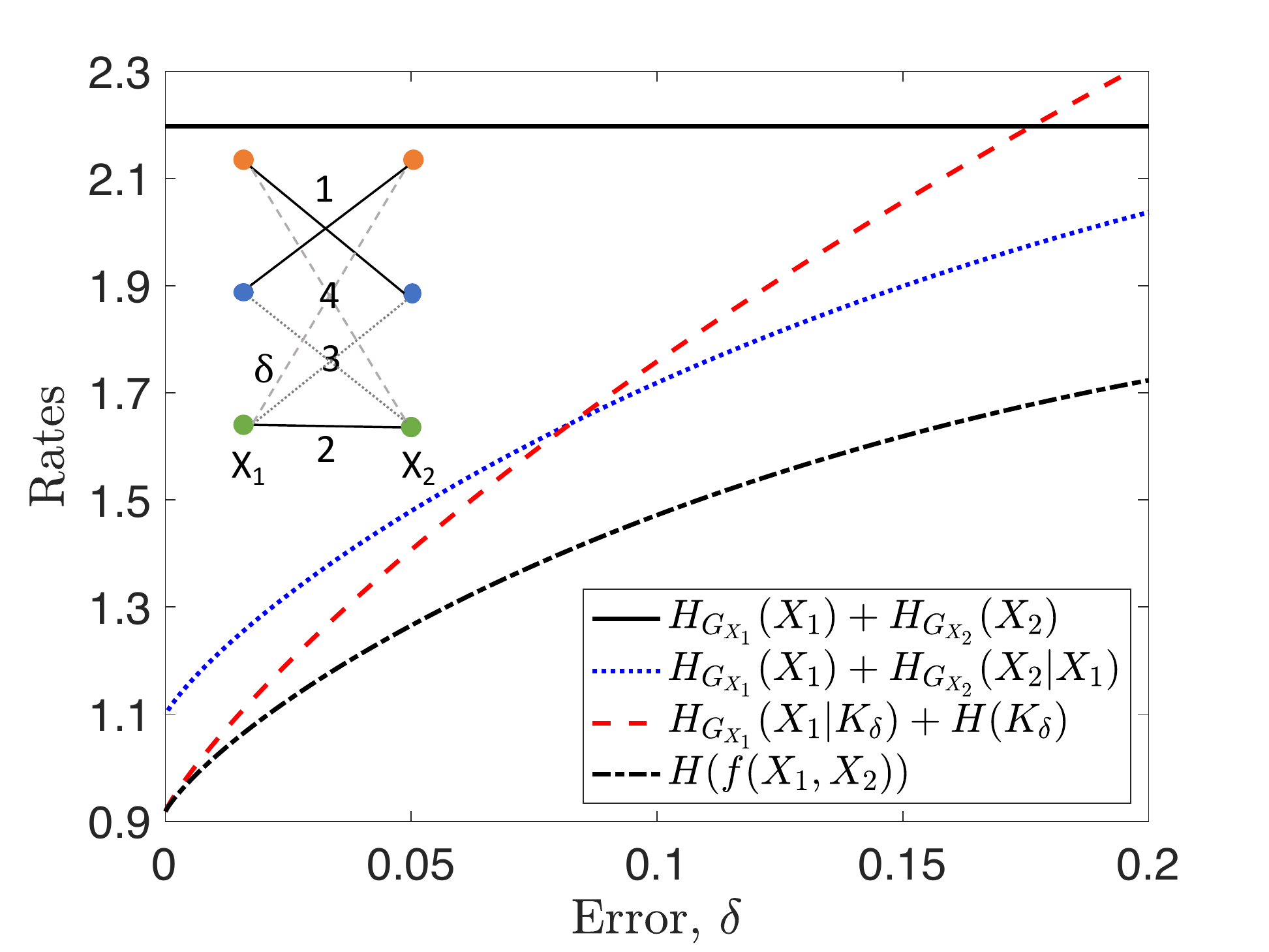}
    \caption{Bipartite symmetric graph $G_f$ in Example \ref{SymmetricHighError} with an aggregate cross-over probability $4\delta$ between two bipartitions. Note that each node in $G_f$ has a higher degree than the nodes in Example \ref{SymmetricLowError}. The right figure is the zoomed-in graph where $\delta\in(0,\,0.2)$.}
    \label{fig:BipartiteGraph_HighDelta}
\end{figure*}

\subsection{Exploiting the Function for Effective Identification of Correlated Structures}
\label{sect:helper_identification}

In this part, we consider another interpretation in which a helper can leverage the structure imposed by the function $f(X_1, X_2)$. To capture the helper or the common information variable that can distinguish the substructures in $G_f$, we use the notation $K_S$. It is easy to note that $K_S$ is different from the set of metrics $K_{\delta}$, $K_B$ and $K_{X_1,\, X_2}$ defined earlier in the paper.

We next focus on distributed computing of $f(X_1,\,X_2)=|X_1|+|X_2|$ with a joint alphabet $\mathcal{X}=\{-2,\,-1,\,0,\,1,\,2\}$. Given $|\mathcal{X}|=3$, the bipartite graph $G_f$ has $3$ nodes in each partition and at most $9$ edges. There are $6$ distinct function outcomes possible due to the symmetry induced by permutation invariance. However, $F$ is not symmetric as the marginal distributions are distinct and the joint source distribution is not necessarily symmetric, i.e.,  $P_{X_1,\, X_2}(X_1=x_1,\,X_2=x_2)\neq P_{X_1,\, X_2}(X_1=x_2,\,X_2=x_1)$. 
To capture the correlation of the source characteristic graphs $G_{X_1}$ and $G_{X_2}$, we next detail two constructions where it suffices if only one source encodes in the presence of a helper that distinguishes a preferred subset of edges, which ameliorates the functional compression of the source variables. 

\begin{ex}\label{Petersen}{\bf Petersen graph.}
Consider the bipartite graph $G_f$ shown in Fig. \ref{fig:Petersen} (left), which is represented by a pentagram within a pentagon, and has five crossings that represent the entries such that $P_{X_1,\, X_2}(X_1\in u ,\, X_2\in v) >0$. The vertices of the pentagon and the pentagram correspond to the characteristic graphs of $G_{X_1}$ and $G_{X_2}$, respectively. The Petersen graph has five edges between $G_{X_1}$ and $G_{X_2}$ of $G_f$. Next to this configuration, we provide an equivalent representation where the function outcomes are indicated on the edges which connect the three equivalence classes $u_i\in U$ and $v_i\in V$ of $G_{X_1}$ and $G_{X_2}$, respectively. The joint distributions of the colors $P_{X_1,\, X_2}$, with entries ordered with respect to vertex colors (orange, blue, and green), and the function outcomes $F$ are given as 
\begin{align}
P_{X_1,\, X_2} = \begin{bmatrix} 0 & p\frac{1-2p}{1-p} & \frac{p^2}{1-p} \\ \frac{p}{2} & 0 & \frac{p}{2} \\ 1-2p & 0 & 0 \end{bmatrix},\quad F=\begin{bmatrix} X & 1 & 3 \\ 1 & X & 2\\ 3 & X & X\end{bmatrix}. \nonumber
\end{align}

\begin{figure*}[t!]
    \centering
    \includegraphics[width=0.95\textwidth]{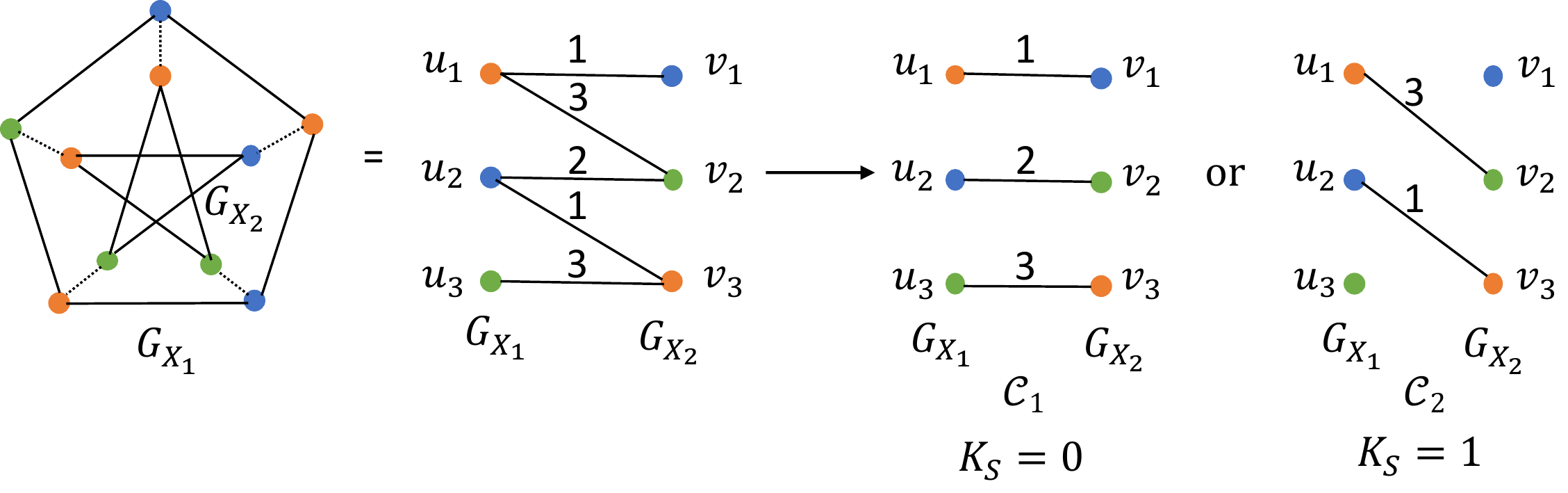}
    \caption{(Left) Petersen graph and (Right) its decomposition.}
    \label{fig:Petersen}
\end{figure*}

From the joint distribution table, $P_{X_1}=(p,\, p,\, 1-2p)$ and $P_{X_2}=(1-\frac{3p}{2},\, p\frac{1-2p}{1-p},\, \frac{p^2}{1-p}+\frac{p}{2})$, respectively for $X_1$ and $X_2$.

In this example, $f(X_1,\, X_2)\in\{1,\,2,\,3\}$ where the probabilities at the respective coordinates are $P_{f(X_1,\, X_2)}=\Big(\frac{p}{2}+p\frac{1-2p}{1-p},\,\frac{p}{2},\,1-2p+\frac{p^2}{1-p}\Big)$. Hence, the entropy of the function from the given joint distribution $P_{X_1,\, X_2}$ and the outcomes table $F$ is $$\mathcal{R}_{f(X_1,\, X_2)}=H(f(X_1,\, X_2))=h\Big(\frac{p}{2}+p\frac{1-2p}{1-p},\,\frac{p}{2},\,1-2p+\frac{p^2}{1-p}\Big).$$

The average number of colors needed to represent the sum function from each source is $H_{G_{X_1}}(X_1)=H(X_1)=h(p,p,1-2p)$ and $H_{G_{X_2}}(X_2)=H(X_2)=h\Big(\frac{p}{2}+1-2p,p\frac{1-2p}{1-p},\,\frac{p^2}{1-p}+\frac{p}{2}\Big)$. This is because each maximal independent set is a singleton and $X_2$ and the maximal independent set $W_1$ containing $X_1$ determine $f(X_1,\, X_2)$, and similarly $X_1$ and the maximal independent set $W_2$ containing $X_2$ determine $f(X_1,\, X_2)$. Hence, the following rate is achievable
\begin{align}
\label{cond_entropy_petersen_given_X2}
H_{G_{X_1}}(X_1\vert X_2)+H_{G_{X_2}}(X_2)&=
\left(1-\frac{3p}{2}\right)\cdot h\Big(\frac{p}{2-3p}\Big)+\Big(p\frac{1-2p}{1-p}\Big)\cdot 0\nonumber\\
&+\Big(\frac{p^2}{1-p}+\frac{p}{2}\Big) \cdot h\Big(\frac{1-p}{1+p}\Big)+H(X_2).
\end{align}
Similarly, the following rate is also achievable
\begin{align}
\label{cond_entropy_petersen_given_X1}
H_{G_{X_2}}(X_2\vert X_1)+H_{G_{X_1}}(X_1)=p\cdot h\Big(\frac{p}{1-p}\Big)+ p\cdot 1+ (1-2p)\cdot 0+H(X_1). 
\end{align} 
Note that the rates in (\ref{cond_entropy_petersen_given_X2}) and (\ref{cond_entropy_petersen_given_X1}) are identical because $H_{G_{X_2}}(X_2\vert X_1)=H(X_2\vert X_1)$ and $H_{G_{X_1}}(X_1\vert X_2)=H(X_1\vert X_2)$, i.e., the chain rule holds for this example. Hence, the total rate of the graph coloring approach without any helper is
\begin{align}
\mathcal{R}_{f(X_1,\, X_2)}^{\chi}=H_{G_{X_1}}(X_1\vert X_2)+H_{G_{X_2}}(X_2).\nonumber
\end{align}

In case of a helper-based scheme exploiting the structure imposed by the function, we assume that the helper distinguishes the edges $(u_1,\,v_2),(u_2,\,v_3)\in E$ that produce function outcomes $f(u_1,\,v_2)=3$ and $f(u_2,\,v_3)=1$ with probabilities $\frac{p^2}{1-p}$ and $\frac{p}{2}$, respectively as given in $P_{X_1,\, X_2}$, from the remaining edges. In other words, the helper decomposes $G_f$ into $\mathcal{C}_1$ and $\mathcal{C}_2$ as shown in Fig. \ref{fig:Petersen}. These graphs have probabilities $P(\mathcal{C}_1)=p\frac{1-2p}{1-p}+\frac{p}{2}+1-2p$, and $P(\mathcal{C}_2)=\frac{p^2}{1-p}+\frac{p}{2}$, which are captured by the indices $K_S=0$ and $K_S=1$, respectively. Hence, the rate required from the helper leveraging the structure imposed by $f(X_1,\, X_2)$ is 
\begin{align}
\label{helper_rate_petersen}
H(K_S)=h\Big(\frac{p^2}{1-p}+\frac{p}{2}\Big).    
\end{align}

In this setup, given the helper with rate $H(K_S)$ it is sufficient when only one source encodes to distinguish the function outcome, which can be seen in Fig. \ref{fig:Petersen}. If source $X_1$ encodes, the rates required to compress $\mathcal{C}_1$ and $\mathcal{C}_2$ are given as 
\begin{align}
\label{graph_C_entropies_petersen_source_1}
H(\mathcal{C}_1)=H_{G_{X_1}}(X_1),\quad \mbox{and}\quad H(\mathcal{C}_2)=h\left(\frac{P_{X_1}(1)}{P_{X_1}(1)+P_{X_1}(2)}\right)=h\Big(\frac{1}{2}\Big)=1.   
\end{align}
If instead of source $X_1$, source $X_2$ encodes, the rates required to compress $\mathcal{C}_1$ and $\mathcal{C}_2$ are 
\begin{align}
\label{graph_C_entropies_petersen_source_2}
H(\mathcal{C}_1)=H_{G_{X_2}}(X_2),\quad \mbox{and}\quad H(\mathcal{C}_2)=h\left(\frac{P_{X_2}(1)}{P_{X_2}(1)+P_{X_2}(3)}\right)=h\Big(\frac{1-\frac{3p}{2}}{1-p+\frac{p^2}{1-p}}\Big).   
\end{align}

\begin{figure*}[t!]
\centering  
\includegraphics[width=0.5\textwidth]{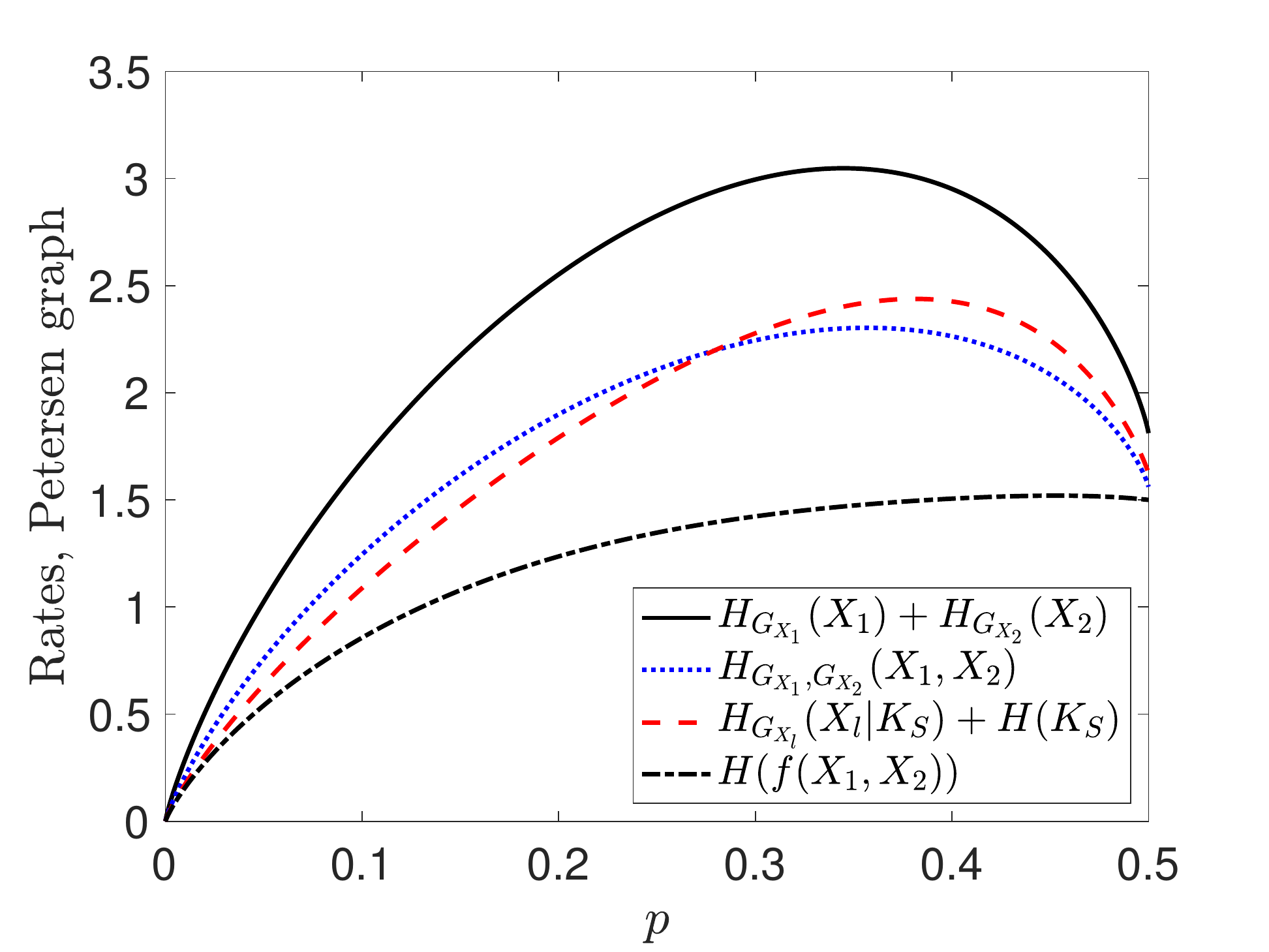}
\caption{Rates for Petersen graph versus $p$.}
\label{fig:petersen_sim}
\end{figure*}

In this setting, in the distributed implementation, using (\ref{helper_rate_petersen}), (\ref{graph_C_entropies_petersen_source_1}), and (\ref{graph_C_entropies_petersen_source_2}) the sum rate of the helper-based model for recovering $f(X_1,\, X_2)$ is
\begin{align}
\label{Rate_Petersen}
\mathcal{R}_{f(X_1,\, X_2)}^H=\min[H_{G_{X_1}}(X_1),\, H_{G_{X_2}}(X_2)]+H(K_S),
\end{align}
that is, one of the sources sends $\mathcal{C}_1$, and the helper sends $\mathcal{C}_2$ as correction. Here, $K_S$ denotes the index that indicates whether a correction is needed and $H(K_S)$ is the rate needed for correction.

Note that the helper-based approach of Sect. \ref{sect:helper_bipartitions} that encodes the bipartitions in $G_f$ using variable $K_B$ is different from the approach of encoding $\mathcal{C}_1$ and $\mathcal{C}_2$ as presented here. In the former, once a bipartition is given a source needs to encode for computing the function conditional on the index $K_B$. In the latter, several (low probability) edges in $G_f$ are removed before compression to ensure functional compression using either one of the sources, and those removed edges are encoded separately by the helper.

Joint functional compression provides significant savings as sources can exploit the dependence. However, there is no clear advantage of using a helper over joint compression because specifying $K$ does not help distinguish the edges with identical outcomes. 
If the helper instead distinguishes $(u_2,\,v_2)\in E$ from the remaining edges, it requires a rate $H(K_S)=h\left(\frac{p}{2}\right)$ which is less than (\ref{helper_rate_petersen}) for small $p$. Despite the lowered helper rate, such a scheme incurs a higher sum rate since both sources should transmit. 
\end{ex}

\begin{ex}\label{CorrelatedStar}{\bf Correlated star graph.}
Consider the bipartite graph $G_f$ shown in Fig. \ref{fig:correlated_star}, where the function outcomes are indicated on the edges. Let $P_{X_1,\, X_2}$ and $F$ be
\begin{align}
P_{X_1,\, X_2} = \begin{bmatrix}  p^2 & p^2 & p(1-2p) \\ 0 & \frac{p^2}{1-p} & p\frac{1-2p}{1-p} \\ 0 & 0 & 1-2p\end{bmatrix},\quad F= \begin{bmatrix} 1 & 1 & 3 \\ X & 3 & 2\\ X & X & 2\end{bmatrix},  \nonumber 
\end{align}
where the entries are ordered with respect to vertex colors.  

\begin{figure*}[t!]
    \centering
    \includegraphics[width=0.65\textwidth]{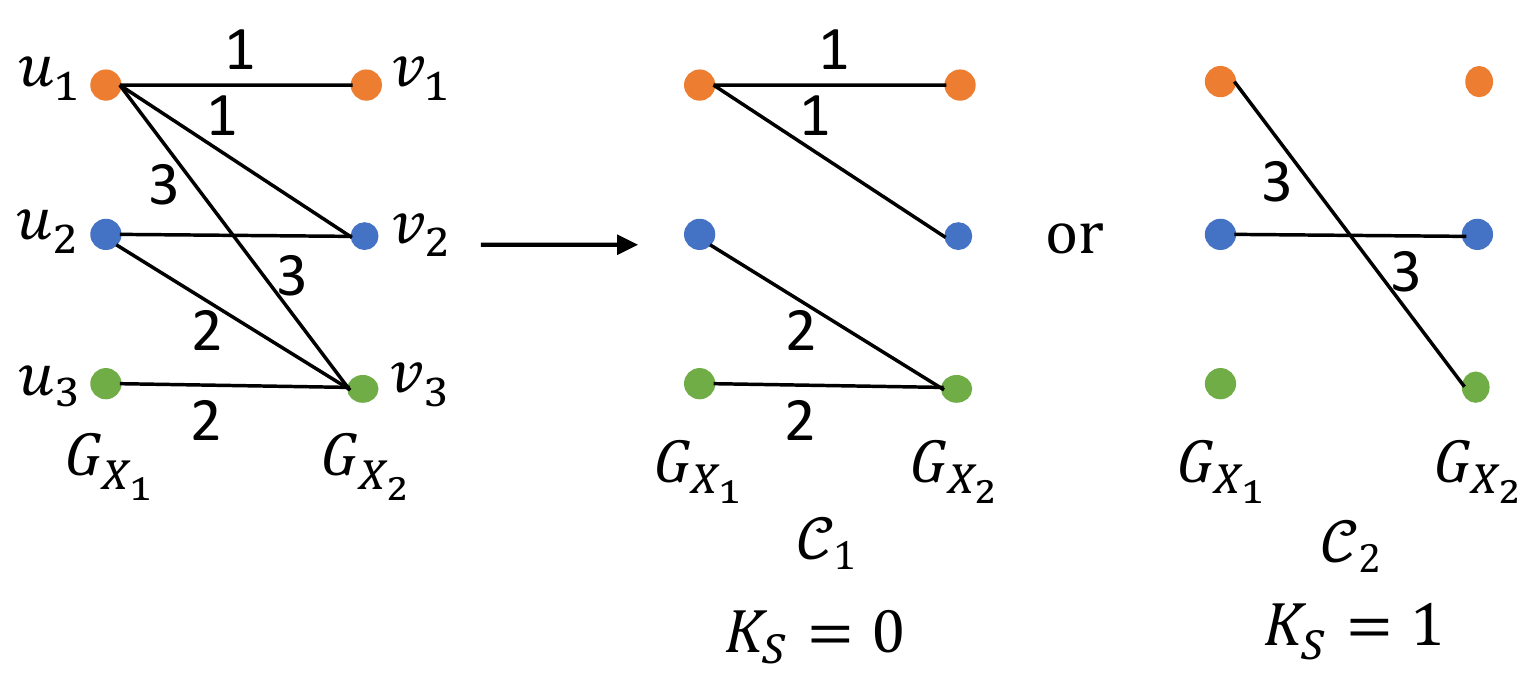}
    \caption{Correlated star graph and its decomposition.}
    \label{fig:correlated_star}
\end{figure*}

From the joint distribution table, the distributions of the colors are $P_{X_1}=(p,p,1-2p)$ and $P_{X_2}=h\Big(p^2,p^2+\frac{p^2}{1-p},p(1-2p)+\frac{1-2p}{1-p}\Big)$ for $X_1$ and $X_2$, respectively. 

In this example, $f(X_1,\, X_2)\in\{1,\,2,\,3\}$ where the probabilities at the respective coordinates are $P_{f(X_1,\, X_2)}=\Big(2p^2,\,\frac{1-2p}{1-p},\,\frac{p^2}{1-p}+p(1-2p)\Big)$. Hence, the entropy of the function from the given joint distribution $P_{X_1,\, X_2}$ and the outcomes table $F$ is
$$\mathcal{R}_{f(X_1,\, X_2)}=H(f(X_1,\, X_2))=h\Big(2p^2,\,\frac{1-2p}{1-p},\,\frac{p^2}{1-p}+p(1-2p)\Big).$$

The average number of colors needed to represent the sum function from each source is $H_{G_{X_1}}(X_1)=h(p,p,1-2p)$ and $H_{G_{X_2}}(X_2)=h\Big(p^2,p^2+\frac{p^2}{1-p},p(1-2p)+\frac{1-2p}{1-p}\Big)$. 

Using $P_{X_1,\, X_2}$ and the table $F$, we can infer the following quantities
\begin{align}
\begin{matrix}  
P(f(X_1,\, X_2)=1\vert X_2\in u_1)=1, & \\ 
P(f(X_1,\, X_2)=1\vert X_2\in u_2)=\frac{1-p}{2-p}, & P(f(X_1,\, X_2)=3\vert X_2\in u_2)=\frac{1}{2-p}, \\ 
P(f(X_1,\, X_2)=2\vert X_2\in u_3)=\frac{1}{1+p-p^2}, & P(f(X_1,\, X_2)=3\vert X_2\in u_3)=\frac{p-p^2}{1+p-p^2},  
\end{matrix}\nonumber
\end{align}
which yields the following achievable rate
\begin{align}
\label{cond_entropy_correlated_star_given_X2}
H_{G_{X_1}}(X_1\vert X_2)+H_{G_{X_2}}(X_2)
&=p^2\cdot 0+\left(p^2+\frac{p^2}{1-p}\right)\cdot h\Big(\frac{1}{2-p}\Big)\nonumber\\
&+\left(p(1-2p)+\frac{1-2p}{1-p}\right)\cdot h\Big(\frac{1}{1+p-p^2}\Big) + H_{G_{X_2}}(X_2).
\end{align}
Similarly, using $P_{X_1,\, X_2}$ and the table $F$, we can infer the following quantities
\begin{align}
\begin{matrix} 
P(f(X_1,\, X_2)=1\vert X_1\in u_1)=2p, & P(f(X_1,\, X_2)=3\vert X_1\in u_1)=1-2p,\\ 
P(f(X_1,\, X_2)=2\vert X_1\in u_2)=\frac{p}{1-p}, & P(f(X_1,\, X_2)=3\vert X_1\in u_2)=\frac{1-2p}{1-p},\\ 
P(f(X_1,\, X_2)=2\vert X_1\in u_3)=1, &  
\end{matrix}\nonumber
\end{align}
which yields the following achievable rate
\begin{align}
\label{cond_entropy_correlated_star_given_X1}
H_{G_{X_2}}(X_2\vert X_1)+H_{G_{X_1}}(X_1)
=p\cdot h(2p) +p\cdot h\Big(\frac{p}{1-p}\Big) +(1-2p)\cdot 0 +H_{G_{X_1}}(X_1).
\end{align}
Note that the sum rates in (\ref{cond_entropy_correlated_star_given_X2}) and (\ref{cond_entropy_correlated_star_given_X1}) are not identical for all $p$ because the chain rule does not hold for graph entropy \cite{OR01}, \cite{AO96}, \cite{Kor73}. Hence, the total rate of joint graph coloring approach without any helper can be given as $$\mathcal{R}_{f(X_1,\, X_2)}^{\chi}=\min[H_{G_{X_1}}(X_1\vert X_2)+H_{G_{X_2}}(X_2),\, H_{G_{X_2}}(X_2\vert X_1)+H_{G_{X_1}}(X_1)].$$

\begin{figure*}[t!]
\centering
\includegraphics[width=0.5\textwidth]{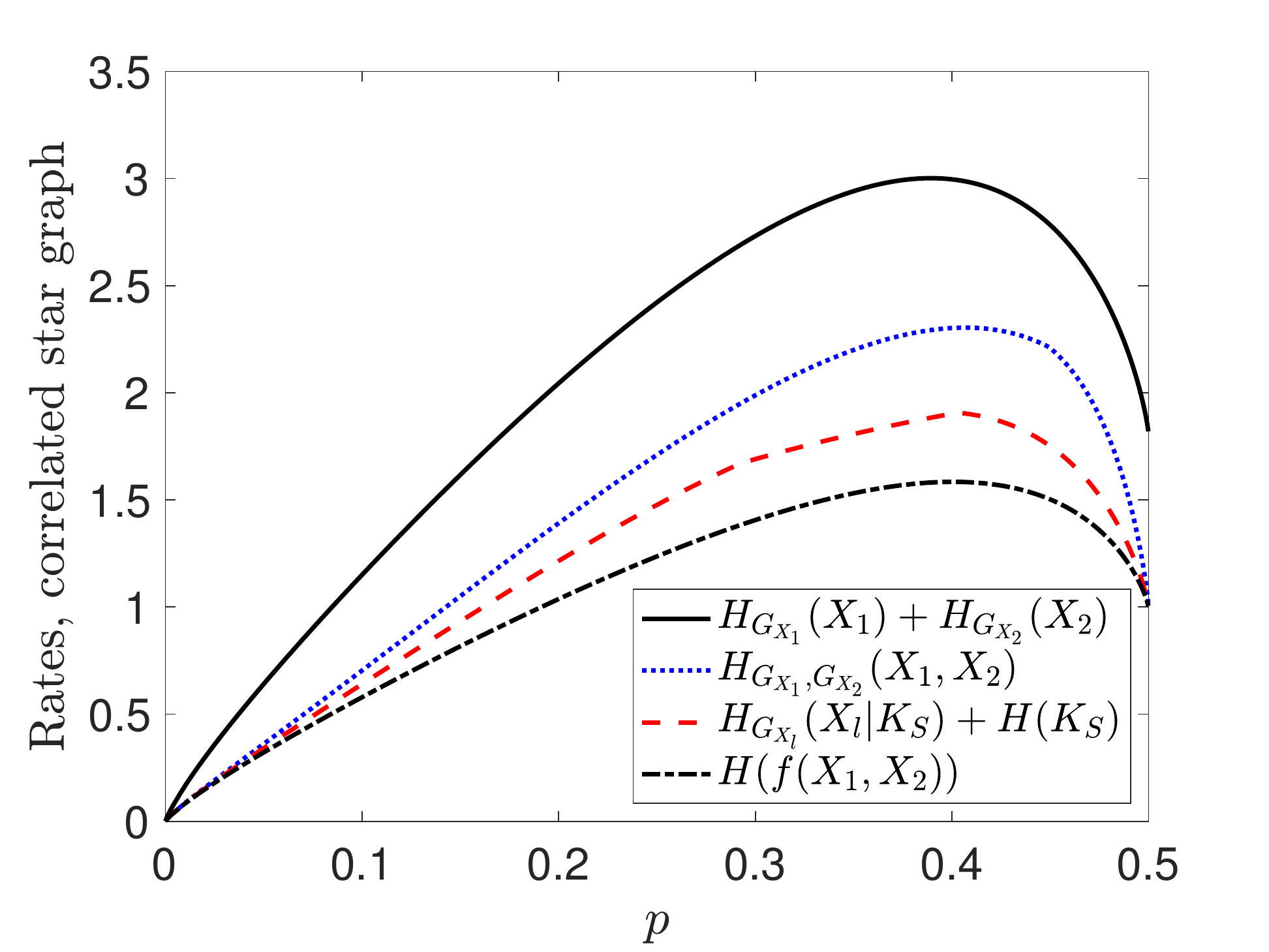}
\caption{Rates for Correlated star graph versus $p$.}
\label{fig:correlated_sim}
\end{figure*}

In case of a helper-based scheme, the helper accounts for the function distribution and decomposes $G_f$ into $\mathcal{C}_1$ and $\mathcal{C}_2$ as shown in Fig. \ref{fig:correlated_star}, where $P(\mathcal{C}_1)=1-\frac{p^2}{1-p}-p(1-2p)$. Hence, the rate required from the helper is
\begin{align}
\label{helper_rate_correlatedstar}
H(K_S)=h\Big(2p^2+\frac{1-2p}{1-p}\Big).
\end{align}
In this setting, only one source needs to transmit as in the Petersen graph. Given $K_S$ we assume that source $X_1$ encodes $\mathcal{C}_1$ and $\mathcal{C}_2$ to distinguish the function outcome. Hence, the rates required to compress $\mathcal{C}_1$ and $\mathcal{C}_2$ are given as
\begin{align}
\label{graph_C_entropies_correlated_star_source_1}
H(\mathcal{C}_1)=h(P_{X_1}(1))=h(p),\quad \mbox{and}\quad H(\mathcal{C}_2)=h\Big(\frac{P_{X_1}(1)}{P_{X_1}(1)+P_{X_1}(2)}\Big)=h\left(\frac{1}{2}\right)=1.
\end{align} 
On the other hand, if instead of source $X_1$, source $X_2$ does the compression, the rates are
\begin{align}
\label{graph_C_entropies_correlated_star_source_2}
H(\mathcal{C}_1)&=h(P_{X_2}(3))=h\Big(p(1-2p)+\frac{1-2p}{1-p}\Big),\quad \mbox{and}\nonumber\\
H(\mathcal{C}_2)&=h\Big(\frac{P_{X_2}(2)}{P_{X_2}(2)+P_{X_2}(3)}\Big)=h\left(\frac{p^2+\frac{p^2}{1-p}}{1-p^2}\right).   
\end{align}

In this setting, using (\ref{helper_rate_correlatedstar}), (\ref{graph_C_entropies_correlated_star_source_1}), and (\ref{graph_C_entropies_correlated_star_source_2}) the total rate of the helper-based model for recovering $f(X_1,\, X_2)$ satisfies
\begin{align}
\label{Rate_CorrelatedStar}
\mathcal{R}_{f(X_1,\, X_2)}^H=\min\Big[h(p),h\Big(p(1-2p)+\frac{1-2p}{1-p}\Big)\Big]+H(K_S),
\end{align}
that is, one of the sources sends $\mathcal{C}_1$, and the helper sends $\mathcal{C}_2$ as correction at a rate $H(K_S)$.

We note that the sum rate in (\ref{Rate_CorrelatedStar}) is relatively small compared to the sum rate in (\ref{Rate_Petersen}) for the Petersen graph because the function distribution ameliorates the computation and the entropies of graphs $\mathcal{C}_1$ and $\mathcal{C}_2$ given in (\ref{graph_C_entropies_correlated_star_source_1}) are small versus the rates in (\ref{graph_C_entropies_petersen_source_1}). Therefore, the gain of the helper scheme is non negligible for the Correlated star graph. 
\end{ex}

Examples \ref{Petersen} and \ref{CorrelatedStar} resemble each other in terms of the ways the graphs $G_f$ are built, as shown by the decompositions in Fig. \ref{fig:Petersen} and Fig. \ref{fig:correlated_star}, respectively. By removing edge $(u_1,v_3)$ from the correlated star graph, we obtain the Petersen graph. However, the $F$ matrices of these graphs are different. 
We illustrate the rates for different encoding schemes as well as the function's entropy versus parameter $p$ in Fig. \ref{fig:petersen_sim} and Fig. \ref{fig:correlated_sim} for the Petersen and the Correlated star graphs, respectively. For the latter the rate required from the helper is close to the fundamental lower bound both at low and high $p$ values.

\section{Conclusions and Directions} 
\label{conclusion}

We proposed a functional common information measure that combines the features of separate extraction (the GKW common information) and joint extraction (through the functional coupling of sources), and relaxes the combinatorial structure of GKW.   
Similar to the multi-source multicast network coding problem, our approach to computing with zero error decoding in general topologies is feasible if and only if the min-cut max-flow theorem is satisfied \cite{ahlswede2000network}. In other words, the encoding rate of sources should not exceed the channel capacity region. While source-channel separation-based functional compression is suboptimal in terms of the encoding rates, modularity ensures low complexity 
and robustness to network failures. The idea can also be extended to functions of multiple sources.

We studied distributed compression for computing permutation invariant functions using a bipartite graph representation. We investigated the achievable rates by incorporating a helper or a common information variable.
We first demonstrated via Examples \ref{SymmetricSum} and \ref{AsymmetricSum} that by dividing the graph into bipartitions exploiting the block-diagonal structure in $P_{X_1,\, X_2}$ via the variable $K_B$, it is possible to effectively encode the function. Note that in some cases, the block-diagonal structure in $P_{X_1,\, X_2}$  is inherent, e.g., in Example \ref{AsymmetricSum}, and thus, the helper-based approach might not provide savings over graph-based compression. 
While we focused in general on identical source alphabets and symmetric $P_{X_1,\, X_2}$, e.g., in Example \ref{SymmetricSum}, higher gains can be attained for disjoint source alphabets because for permutation invariant functions, edges $(u,\, v)$ and $(v,\,u)$ may coexist if and only if $P_{X_1,\, X_2}(X_1\in u ,\, X_2\in v) >0$ and having disjoint alphabets breaks the edge symmetry. This asymmetry ensures a simplified encoding of bipartitions. The role of helper is more pertinent if sources have distinct distributions $P_{X_1}\neq P_{X_2}$ on their characteristic graphs and for functions that are not bijections versus symmetric $P_{X_1,\, X_2}$ with bijective mappings $F$ as in Example \ref{SymmetricSum}. 
We showed via Examples \ref{SymmetricLowError} and \ref{SymmetricHighError} that by employing a helper scheme that exploits the low probability edges (of the bipartite graph) via the variable $K_{\delta}$, high gains in compression are feasible if the cross-over rates across the bipartitions are low. 
For dependent source variables where there is a fewer number of edges, or for surjective functions as in Examples \ref{Petersen} and \ref{CorrelatedStar}, via the variable $K_S$ the helper scheme explores the structure brought by the function, and it ensures higher rate savings in computing, via sending refinements. 

In functional compression, modularized schemes that separate source and channel coding render suboptimal performances in more general network topologies, e.g., non-tree networks or tree topologies, where the source graphs are dependent. 
Our bipartite graph compression technique might be preferred since building source graphs do not necessarily rely on separation. 
However, the implementation complexity grows with the network size and the combinatorial structure, which may sacrifice the performance. 
If the cross-over rates across bipartitions are small, the advantage of our helper-based approach is that it eliminates the need for joint decoding with high complexity, without sacrificing the performance. 
We aim to generalize the canonical model with two characteristic graphs to multiple sources by constructing multipartite graphs, and 
to more general topologies. For instance, if a permutation invariant function $\Phi$ is in the form $\Phi(\sum\nolimits_{x\in S} \psi(x))= \Phi_3(\Phi_1(\sum\nolimits_{x\in S_1} \psi(x))+\Phi_2(\sum\nolimits_{x\in S_2} \psi(x)))$, 
it may as well be possible to compute it via intermediate computation. 

Possible extensions include learning the functional common information graphs $\{\mathcal{C}_i\}$ and the set $\mathcal{I}=\{i\}$ in polynomial time, which depends on the function. The helper can to some extent learn the functional common randomness $K_{f(X_1,\, X_2)}$ via feedback. Note that feedback can help provide $K_{f(X_1,\, X_2)}$, and this can also help reduce the extra rates needed to recover the sources $(X_1,\, X_2)$ themselves. Another key challenge is the implementation of the helper model via exploiting the quality of the helper link. In a wireless edge computing scenario, the rate region of a helper is jointly decided by the channel capacity and how much common randomness it can extract, either directly from the sources, e.g., $K_{X_1,\, X_2}$, or via the function. 

K{\"o}rner's graph entropy-based outer bound for computing 
gives the fundamental limits of compression when the cost of computation was insignificant. However, operating on the outer bound might not jointly optimize the tradeoff between communications and computing. In particular, since the achievable coloring-based schemes are based on NP-hard concepts, constructing optimal compression codes imposes a significant computational burden on the encoders and decoders.

While we still lack a unified theory that provides an understanding of the fundamental limits of functional compression, i.e., rate lower bounds or achievable codebook constructions, directions include investigating the role of helper for computing a broader class of functions. A promising aspect is learning the edges in the bipartite graph through feedback from the decoder for efficient compression of graph data for computing.


\section*{Acknowledgment} 
The author gratefully acknowledges the valuable discussions with Prof. Gustavo De Veciana and Dr. Salman Salamatian. The prior work of Salamatian {\em et al.} has inspired the work in the current paper.

\bibliographystyle{IEEEtran}
\bibliography{references}

\begin{thebibliography}{10}
\providecommand{\url}[1]{#1}
\csname url@samestyle\endcsname
\providecommand{\newblock}{\relax}
\providecommand{\bibinfo}[2]{#2}
\providecommand{\BIBentrySTDinterwordspacing}{\spaceskip=0pt\relax}
\providecommand{\BIBentryALTinterwordstretchfactor}{4}
\providecommand{\BIBentryALTinterwordspacing}{\spaceskip=\fontdimen2\font plus
\BIBentryALTinterwordstretchfactor\fontdimen3\font minus
  \fontdimen4\font\relax}
\providecommand{\BIBforeignlanguage}[2]{{%
\expandafter\ifx\csname l@#1\endcsname\relax
\typeout{** WARNING: IEEEtran.bst: No hyphenation pattern has been}%
\typeout{** loaded for the language `#1'. Using the pattern for}%
\typeout{** the default language instead.}%
\else
\language=\csname l@#1\endcsname
\fi
#2}}
\providecommand{\BIBdecl}{\relax}
\BIBdecl

\bibitem{mosk2006computing}
D.~Mosk-Aoyama and D.~Shah, ``Computing separable functions via gossip,'' in
  \emph{Proc., ACM Symp. Principles of Distributed Computing (PODC)}, Jul.
  2006, pp. 113--122.

\bibitem{8849315}
G.~{Bassi} and M.~{Skoglund}, ``On the mutual information of two {Boolean}
  functions, with application to privacy,'' in \emph{Proc., IEEE Int. Symp.
  Inf. Theory}, Jul. 2019, pp. 1197--1201.

\bibitem{8292960}
N.~{Prakash}, V.~{Abdrashitov}, and M.~{Médard}, ``The storage versus
  repair-bandwidth trade-off for clustered storage systems,'' \emph{IEEE Trans.
  Inf. Theory}, vol.~64, no.~8, pp. 5783--5805, Aug. 2018.

\bibitem{el2015synchronizing}
S.~El~Rouayheb, S.~Goparaju, H.~M. Kiah, and O.~Milenkovic, ``Synchronizing
  edits in distributed storage networks,'' in \emph{Proc., IEEE Int. Symp. Inf.
  Theory}, Jun. 2015, pp. 1472--1476.

\bibitem{shah2011distributed}
N.~B. Shah, K.~V. Rashmi, P.~V. Kumar, and K.~Ramchandran, ``Distributed
  storage codes with repair-by-transfer and nonachievability of interior points
  on the storage-bandwidth tradeoff,'' \emph{IEEE Trans. Inf. Theory}, vol.~58,
  no.~3, pp. 1837--1852, Oct. 2011.

\bibitem{zhang2019multi}
K.~Zhang, Z.~Yang, and T.~Basar, ``Multi-agent reinforcement learning: A
  selective overview of theories and algorithms,'' \emph{arXiv preprint
  arXiv:1911.10635}, Nov. 2019.

\bibitem{zaheer2017deep}
M.~Zaheer, S.~Kottur, S.~Ravanbakhsh, B.~Poczos, R.~R. Salakhutdinov, and A.~J.
  Smola, ``Deep sets,'' in \emph{Proc., Advances in Neural Information
  Processing Systems}, Jan. 2017, pp. 3391--3401.

\bibitem{poczos2013distribution}
B.~P{\'o}czos, A.~Singh, A.~Rinaldo, and L.~Wasserman, ``Distribution-free
  distribution regression,'' in \emph{Proc., Artificial Intell. and
  Statistics}.\hskip 1em plus 0.5em minus 0.4em\relax PMLR, Apr. 2013, pp.
  507--515.

\bibitem{Gesselsymmetry2016}
\BIBentryALTinterwordspacing
I.~M. Gessel, \emph{An Introduction to Symmetric Functions}, Jun. 2016.
  [Online]. Available:
  \url{http://people.brandeis.edu/~gessel/homepage/slides/Symmetric-functions.pdf}
\BIBentrySTDinterwordspacing

\bibitem{ntampaka2016dynamical}
M.~Ntampaka, H.~Trac, D.~J. Sutherland, S.~Fromenteau, B.~P{\'o}czos, and
  J.~Schneider, ``Dynamical mass measurements of contaminated galaxy clusters
  using machine learning,'' \emph{The Astrophysical J.}, vol. 831, no.~2, p.
  135, Nov. 2016.

\bibitem{benso2008graph}
A.~Benso, S.~Di~Carlo, G.~Politano, and L.~Sterpone, ``A graph-based
  representation of gene expression profiles in dna microarrays,'' in
  \emph{Proc., IEEE Symp. Comp. Intell. in Bioinf. and Comp. Biol.}, Sep. 2008,
  pp. 75--82.

\bibitem{tate2000symmetry}
S.~J. Tate, ``Symmetry and shape analysis for assembly-oriented {CAD}.'' Ph.D.
  dissertation, Citeseer, 2000.

\bibitem{SlepWolf1973}
D.~Slepian and J.~K. Wolf, ``Noiseless coding of correlated information
  sources,'' \emph{IEEE Trans. Inf. Theory}, vol.~19, no.~4, pp. 471--480, Jul.
  1973.

\bibitem{wyner1976rate}
A.~Wyner and J.~Ziv, ``The rate-distortion function for source coding with side
  information at the decoder,'' \emph{IEEE Trans. Inf. Theory}, vol.~22, no.~1,
  pp. 1--10, Jan. 1976.

\bibitem{doshi2007source}
V.~Doshi, D.~Shah, and M.~M{\'e}dard, ``Source coding with distortion through
  graph coloring,'' in \emph{Proc., IEEE Int. Symp. Inf. Theory}, Jun. 2007,
  pp. 1501--1505.

\bibitem{salamatian2016efficient}
S.~Salamatian, A.~Cohen, and M.~M{\'e}dard, ``Efficient coding for multi-source
  networks using {G{\'a}cs-K{\"o}rner} common information,'' in \emph{Proc.,
  Int. Symp. Inf. Theory and Its Appl. (ISITA)}, 30 Oct.-2 Nov. 2016, pp.
  166--170.

\bibitem{gacs1973common}
P.~G{\'a}cs and J.~K{\"o}rner, ``Common information is far less than mutual
  information,'' \emph{Problems of Control and Information Theory}, vol.~2,
  no.~2, pp. 149--162, Jan. 1973.

\bibitem{witsenhausen1975sequences}
H.~S. Witsenhausen, ``On sequences of pairs of dependent random variables,''
  \emph{SIAM Journal on Applied Mathematics}, vol.~28, no.~1, pp. 100--113,
  Jan. 1975.

\bibitem{wyner1975common}
A.~Wyner, ``The common information of two dependent random variables,''
  \emph{IEEE Trans. Inf. Theory}, vol.~21, no.~2, pp. 163--179, Mar. 1975.

\bibitem{KLG14}
G.~R. Kumar, C.~T. Li, and A.~El~Gamal, ``Exact common information,'' in
  \emph{Proc., IEEE Int. Symp. Inf. Theory}, Jun. 2014, pp. 161--165.

\bibitem{yu2016generalized}
L.~Yu, H.~Li, and C.~W. Chen, ``Generalized common informations: Measuring
  commonness by the conditional maximal correlation,'' \emph{arXiv preprint
  arXiv:1610.09289}, Oct. 2016.

\bibitem{cicalese2019minimum}
F.~Cicalese, L.~Gargano, and U.~Vaccaro, ``Minimum-entropy couplings and their
  applications,'' \emph{IEEE Trans. Inf. Theory}, vol.~65, no.~6, pp.
  3436--3451, Jan. 2019.

\bibitem{li2020efficient}
C.~T. Li, ``Efficient approximate minimum entropy coupling of multiple
  probability distributions,'' \emph{arXiv preprint arXiv:2006.07955}, Jun.
  2020.

\bibitem{carroll2020finite}
O.~Carroll and J.~Beel, ``Finite group equivariant neural networks for games,''
  \emph{arXiv preprint arXiv:2009.05027}, Sep. 2020.

\bibitem{zhou2018graph}
J.~Zhou, G.~Cui, Z.~Zhang, C.~Yang, Z.~Liu, L.~Wang, C.~Li, and M.~Sun, ``Graph
  neural networks: A review of methods and applications,'' \emph{arXiv preprint
  arXiv:1812.08434}, Dec. 2018.

\bibitem{li2020pair}
Y.~Li, B.~Luo, and N.~Gui, ``Pair-view unsupervised graph representation
  learning,'' \emph{arXiv preprint arXiv:2012.06113}, Dec. 2020.

\bibitem{FM14}
S.~Feizi and M.~M\'edard, ``On network functional compression,'' \emph{IEEE
  Trans. Inf. Theory}, vol.~60, no.~9, pp. 5387--5401, Jun. 2014.

\bibitem{OR01}
A.~Orlitsky and J.~R. Roche, ``Coding for computing,'' \emph{IEEE Trans. Inf.
  Theory}, vol.~47, no.~3, p. 903–917, Mar. 2001.

\bibitem{Kor73}
J.~K\"orner, ``Coding of an information source having ambiguous alphabet and
  the entropy of graphs,'' in \emph{Proc., 6th Prague Conf. Inf. Theory}, Sep.
  1973, pp. 411--425.

\bibitem{AO96}
N.~Alon and A.~Orlitsky, ``Source coding and graph entropies,'' \emph{IEEE
  Trans. Inf. Theory}, vol.~42, no.~5, pp. 1329--1339, Sep. 1996.

\bibitem{delgosha2020universal}
P.~Delgosha and V.~Anantharam, ``Universal lossless compression of graphical
  data,'' \emph{IEEE Trans. Inf. Theory}, May 2020.

\bibitem{zhou2007bipartite}
T.~Zhou, J.~Ren, M.~Medo, and Y.-C. Zhang, ``Bipartite network projection and
  personal recommendation,'' \emph{Physical review E}, vol.~76, no.~4, p.
  046115, Oct. 2007.

\bibitem{moon1965cliques}
J.~W. Moon and L.~Moser, ``On cliques in graphs,'' \emph{Israel journal of
  Mathematics}, vol.~3, no.~1, pp. 23--28, Mar. 1965.

\bibitem{el2011network}
A.~El~Gamal and Y.-H. Kim, \emph{Network Information Theory}.\hskip 1em plus
  0.5em minus 0.4em\relax Cambridge university press, 2011.

\bibitem{DSME10}
V.~Doshi, D.~Shah, M.~M\'edard, and M.~Effros, ``Functional compression through
  graph coloring,'' \emph{IEEE Trans. Inf. Theory}, vol.~56, no.~8, pp.
  3901--3917, Jul. 2010.

\bibitem{csiszar2011information}
I.~Csisz\'{a}r and J.~K{\"o}rner, \emph{Information theory: coding theorems for
  discrete memoryless systems}.\hskip 1em plus 0.5em minus 0.4em\relax
  Cambridge University Press, 2011.

\bibitem{salamatian2016maximum}
S.~Salamatian, A.~Cohen, and M.~M{\'e}dard, ``Maximum entropy functions:
  Approximate {G{\'a}cs-K{\"o}rner} for distributed compression,'' \emph{arXiv
  preprint arXiv:1604.03877}, Apr. 2016.

\bibitem{maron2018invariant}
H.~Maron, H.~Ben-Hamu, N.~Shamir, and Y.~Lipman, ``Invariant and equivariant
  graph networks,'' in \emph{International Conference on Learning
  Representations}, May 2019.

\bibitem{maron2019universality}
H.~Maron, E.~Fetaya, N.~Segol, and Y.~Lipman, ``On the universality of
  invariant networks,'' in \emph{International conference on machine learning},
  May 2019, pp. 4363--4371.

\bibitem{Benso2003}
A.~Benso. (2003) Symmetric functions. [Online]. Available:
  https://cseweb.ucsd.edu/classes/su03/cse142/slide/4/4.1\%20-\%20Symmetric\%20Functions.pdf.

\bibitem{trench2013introduction}
W.~F. Trench, \emph{Introduction to real analysis}.\hskip 1em plus 0.5em minus
  0.4em\relax Pearson Education, 2013.

\bibitem{chartrand2019chromatic}
G.~Chartrand and P.~Zhang, \emph{Chromatic Graph Theory}.\hskip 1em plus 0.5em
  minus 0.4em\relax CRC press, 2019.

\bibitem{cover2012elements}
T.~M. Cover and J.~A. Thomas, \emph{Elements of Information Theory}.\hskip 1em
  plus 0.5em minus 0.4em\relax John Wiley \& Sons, 2012.

\bibitem{ahlswede2000network}
R.~Ahlswede, N.~Cai, S.-Y. Li, and R.~W. Yeung, ``Network information flow,''
  \emph{IEEE Trans. Inf. Theory}, vol.~46, no.~4, pp. 1204--1216, Jul. 2000.

\end{thebibliography}
\IEEEtriggeratref{3}

\end{document}